\documentclass[a4paper,USenglish,cleveref]{lipics-v2019}

\hideLIPIcs

\usepackage{algorithm}
\usepackage[noend]{algpseudocode}
\usepackage{xspace}

\newtheorem{fact}[theorem]{Fact}
\crefname{enumi}{Property}{Properties}
\hypersetup{colorlinks,linkcolor=blue,urlcolor=blue,citecolor=blue}

\newcommand{\e}{\mathrm{e}}
\newcommand{\R}{\ensuremath{R}}
\newcommand{\LG}{\ensuremath{L}\xspace}
\newcommand{\M}{\ensuremath{M_\mathrm{S}}\xspace}
\newcommand{\A}{\ensuremath{A}\xspace}
\newcommand{\ST}{\ensuremath{S_*}\xspace}
\newcommand{\MT}{\ensuremath{M_*}\xspace}
\newcommand{\MTI}[1]{\ensuremath{M_{#1}\xspace}}
\newcommand{\LT}{\ensuremath{L_+}\xspace}
\newcommand{\emptyBucket}{\ensuremath{E}\xspace}
\newcommand{\D}{\ensuremath{D}\xspace}
\newcommand{\Dext}{\ensuremath{D_\mathrm{ext}}\xspace}
\newcommand{\smallBoundary}{\ensuremath{\phi}}
\newcommand{\gconst}{\ensuremath{\xi_\mathrm{c}}}
\newcommand{\f}{\textsf{f}}
\newcommand{\g}{\ensuremath{\xi}}
\newcommand{\fintegral}{\textsf{F}}
\newcommand{\water}{\textsf{Q}}
\newcommand{\cutintegral}{\textsf{P}}
\newcommand{\B}{\textsf{b}}
\newcommand{\gain}{\textsf{g}}
\newcommand{\load}{\textsf{load}}
\newcommand{\thresh}{\textsf{thr}(\M)}
\newcommand{\stack}{\textsf{pile}}
\newcommand{\level}{\textsf{level}_*}
\newcommand{\vecv}{\mathbf{v}}
\newcommand{\mn}{\textsf{min}}
\newcommand{\mt}{\textsf{mt}(\D)}
\newcommand{\T}{\textsf{T}}
\newcommand{\Tmax}{\T^*}
\newcommand{\dd}{\mathrm{d}}
\newcommand{\ALG}{\textsc{Rta}\xspace}
\newcommand{\OPT}{\textsc{Opt}\xspace}
\newcommand{\DET}{\textsc{Det}\xspace}

\nolinenumbers

\title{An Optimal Algorithm for Online Multiple Knapsack}
\titlerunning{An Optimal Algorithm for Online Multiple Knapsack}

\author{Marcin Bienkowski}{Institute of Computer Science, University of Wroc{\l}aw, Poland}{marcin.bienkowski@cs.uni.wroc.pl}{https://orcid.org/0000-0002-2453-7772}{}
\author{Maciej Pacut}{Faculty of Computer Science, University of Vienna, Austria}{maciej.pacut@univie.ac.at}{https://orcid.org/0000-0002-6379-1490}{}
\author{Krzysztof Piecuch}{Institute of Computer Science, University of Wroc{\l}aw, Poland}{kpiecuch@cs.uni.wroc.pl}{}{}

\authorrunning{M. Bienkowski, M. Pacut, and K. Piecuch}
\Copyright{Marcin Bienkowski, Maciej Pacut, and Krzysztof Piecuch}
\ccsdesc[500]{Theory of computation~Online algorithms}
\keywords{online knapsack, multiple knapsacks, bin packing, competitive analysis}
\funding{Research supported by Polish National Science Centre grants 2016/22/E/ST6/00499 and 2016/23/N/ST6/03412}
\relatedversion{A version without appendix was published in the proceedings of ICALP 2020.}


\begin{document}

\maketitle

\begin{abstract}
In the online multiple knapsack problem, an algorithm faces a stream of items,
and each item has to be either rejected or stored irrevocably in one of $n$ bins
(knapsacks) of equal size. The gain of an~algorithm is equal to the sum of sizes
of accepted items and the goal is to maximize the total gain.

So far, for this natural problem, the best solution was the $0.5$-competitive
algorithm \textsc{FirstFit} (the result holds for any $n \geq 2$). We present
the first algorithm that beats this ratio, achieving the competitive ratio of
$1/(1+\ln(2))-O(1/n) \approx 0.5906 - O(1/n)$. Our algorithm is deterministic
and optimal up to lower-order terms, as the upper bound of $1/(1+\ln(2))$ for
randomized solutions was given previously by Cygan et al.~[TOCS 2016]. 
\end{abstract}


\section{Introduction}
\label{sec:introduction}

Knapsack problems have been studied in theoretical computer science for
decades~\cite{Keller16,KePfPi04}. In particular, in the
\emph{multiple knapsack} 
problem~\cite{ABFLNE02,BoFaLN01,CheKha05,CyJeSg16,IwaTak02,Keller99,Pising99},
items of given sizes and profits have to be stored in $n$ bins (knapsacks), each
of capacity $1$. The goal is to find a subset of all items that maximizes the
total profit and can be feasibly packed into bins without exceeding their
capacities. We consider an \emph{online scenario}, where an online algorithm is
given a sequence of items of unknown length. When an~item is presented to an
algorithm, it has to either irrevocably reject the item or accept it to a~chosen
bin (which cannot be changed in the future). The actions of an online algorithm
have to be made without the knowledge of future items.

\subparagraph{Proportional case.}

In this paper, we focus on the most natural, \emph{proportional variant}
(sometimes called \emph{uniform}), where item profits are equal to item sizes
and the goal is to maximize the sum of profits of all accepted items.

The single-bin case ($n = 1$) has been fully resolved: no deterministic online
algorithm can be competitive~\cite{MarVer95}, and the best randomized algorithm
\textsc{ROne} by Böckenhauer et al.~\cite{BKoKR14} achieves the optimal
competitive ratio of $0.5$.\footnote{An online algorithm is called
$\alpha$-competitive if, for any input instance, its total profit is at least
fraction~$\alpha$ of the optimal (offline) solution. While many papers use the
reciprocal of $\alpha$ as the competitive ratio, the current definition is more
suited for accounting arguments in our proofs.} 

Less is known for multiple-bin case ($n \geq 2$). Cygan et al.~\cite{CyJeSg16}
showed that the \textsc{FirstFit} algorithm is $0.5$-competitive and proved that
no algorithm (even a~randomized one) can achieve a competitive ratio greater than
$\R$, where 
\[
	\R = 1/(1+\ln 2) \approx 0.5906.
\]

\subparagraph{Other variants.}

Some authors focused on the variant, where the goal is to maximize the
\emph{maximum} profit over all bins, instead of the sum of the profits. For this
objective, optimal competitive ratios are already known: $0.5$-competitive
deterministic algorithm was given by Böckenhauer~et~al.~\cite{BKoKR14}, and the
upper bound of $0.5$ holding even for randomized solutions were presented by
Cygan~et~al.~\cite{CyJeSg16}.

The multiple knapsack problem can be generalized in another direction: profits
and sizes may be unrelated. However, already \emph{the unit variant}, where the
profit of each item is equal to $1$, does not admit any competitive solutions
(even randomized ones)~\cite{BoFaLN01}. 

These results together mean that the proportional case studied in this paper is
the only variant, whose online complexity has not been fully resolved yet.


\subsection{Our results}

The main result of this paper is an $(\R-O(1/n))$-competitive deterministic
online algorithm for the proportional variant of the multiple knapsack problem.
We give insights for our construction in \cref{sec:idea} below and the
definition of our algorithm later in \cref{sec:algorithm}. Given the upper bound
of $\R$ for randomized solutions~\cite{CyJeSg16}, our result is optimal up to
lower-order terms also for the class of randomized solutions.

It is possible to show that for deterministic algorithms, 
the term $O(1/n)$ in the competitive 
ratio is inevitable: we show how the upper bound
construction given in~\cite{CyJeSg16} can be tweaked and extended to show that the
competitive ratio of any deterministic algorithm is at most $\R-O(1/n)$.


\subsection{Related work}

Some previous papers focused on a \emph{removable} scenario, where an accepted item can
be removed afterwards from its
bin~\cite{ABFLNE02,CyJeSg16,HaKaMa15,IwaTak02,IwaZha10}. Achievable competitive
ratios are better than their non-removable counterparts; in particular, the
proportional variant admits constant-competitive deterministic algorithms even
for a single bin~\cite{IwaTak02}.

The online knapsack problem has been also considered in relaxed variants: with
resource augmentation, where the bin capacities of an online algorithm are
larger than those of the optimal offline one~\cite{IwaZha10,NogSar05}, with a
resource buffer~\cite{HaKaMY19}, or in the variant where an algorithm may accept
fractions of items~\cite{NogSar05}.

The hardness of the variants with arbitrary profits and sizes as well as
applications to online auctions motivated another strand of research focused on
the so-called random-order model~\cite{AlKhLa19,BaImKK07,KeRaTV18,Vaze17}.
There, the set of items is chosen adversarially, but the items are presented to an
online algorithm in random order.


\subsection{Algorithmic challenges and ideas}
\label{sec:idea}

Our algorithm splits items into three categories: large (of size greater than
$1/2$), medium (of size from the interval $[\smallBoundary, 1/2]$) and small (of size
smaller than $\smallBoundary$). We defer the actual definition of $\smallBoundary$. 

First, we explain what an online algorithm should do when it faces a stream of
large items. Note that no two large items can fit together in a single bin. If
an algorithm greedily collects all large items, then the adversary may give $n$
items of size $1/2+\epsilon$ (accepted by an~online algorithm) followed by $n$
items of size $1$ (accepted by an optimal offline algorithm \OPT), and the 
resulting competitive ratio is then $0.5$. On the other
hand, if an algorithm stops after accepting some number of large items, 
\OPT may collect all of them. 

Our \textsc{Rising Threshold Algorithm} (\ALG) balances these two strategies. It
chooses a~non-decreasing threshold function $\f: [0,1] \to [1/2,1]$ and ensures
that the size of the $i$-th accepted large item is at least $\f(i/n)$. While an
actual definition of $\f$ is given later, to grasp a~general idea, it is worth
looking at its plot in \cref{fig:def-fpq} (left). A natural adversarial strategy
is to give large items meeting these thresholds, and once \ALG fills $k$ bins,
present $n$ items of sizes slightly smaller than the next threshold
$\f((k+1)/n)$. These items will be rejected by \ALG but can be accepted by \OPT.
Analyzing this strategy and ensuring that the ratio is at least~$\R$ for any
choice of $k$ yields boundary conditions. Analyzing these conditions for 
$n$~tending to infinity, we obtain a differential equation, whose solution is the
function $\f$ used in our algorithm.

The actual difficulty, however, is posed by medium items. \ALG never proactively
rejects them and it keeps a subset of \emph{marked} medium items in their own
bins (one item per one bin), while it stacks the remaining, non-marked ones
(places them together in the same bin, possibly combining items of similar
sizes). This strategy allows \ALG to combine a large item with marked medium
items later. However, the amount of marked items has to be carefully managed as
they do not contribute large gain alone. A typical approach would be to
partition medium items into discrete sub-classes, control the number of items in
each class, and analyze the gain on the basis of the \emph{minimal} size item in
a particular subclass. To achieve optimal competitive ratio, we however need a
more fine-grained approach: we use a~carefully crafted continuous function $\g$
to control the number of marked items larger than a~given value. Analyzing all
possible adversarial strategies gives boundary conditions for $\g$. In
particular, the value $\smallBoundary$ that separates medium items from small
ones was chosen as the minimum value that ensures the existence of function $\g$
satisfying all boundary conditions.

Finally, we note that simply stacking small items in their own bins would not
lead to the desired competitive ratio. Instead, \ALG tries to stack them in a
single bin, but whenever its load exceeds $\smallBoundary$, \ALG
tries to merge them into a single medium item and verify whether such an item
could be marked. This allows for combining them in critical cases with large
items.


\subsection{Preliminaries}

We have $n$ bins of capacity $1$, numbered from $1$ to $n$. An input is a stream
of items from $(0, 1]$, defined by their sizes. Upon seeing an item, an online
algorithm has to either reject it or place it in an arbitrary bin without
violating the bin's capacity. The \emph{load} of a~bin $b$, denoted $\load(b)$,
is the sum of item sizes stored in bin $b$. We define the load of a set of items
as the sum of their sizes and the total load as the load of all items collected
by an algorithm. Additionally, for any $x \leq 1/2$, we define 
$\stack(x) = \max\{2/3, 2 x\}$. Note that if we put medium items of sizes 
at least $x$ (till it is possible) into a bin $b$, then $\load(b) \geq \stack(x)$.

To simplify calculations, for any set $Z$ of items, we define the \emph{gain} of
$Z$, denoted $\gain(Z)$, as their load divided by~$n$; similarly, the total gain
is the total load divided by~$n$. Furthermore, we use $\mn(Z)$ to denote the
minimum size of an item in set $Z$. If $Z$ is accepted by our online algorithm,
$\B(Z)$ denotes the number of bins our algorithm uses to accommodate these items,
divided by $n$. For any value $x \in [0,1]$, $Z^{\geq x}$ is the set of
all items from $Z$ of size greater or equal $x$. Whenever we use terms 
$\gain(Z)$, $\mn(Z)$ or $\B(Z)$ for a set $Z$ that varies during runtime, we mean these values
for the set~$Z$ after an online algorithm terminates its execution.

For any input sequence $\sigma$ and an algorithm $A$, we use $A(\sigma)$ to denote
the total gain of $A$ on sequence $\sigma$. We denote the optimal offline
algorithm by $\OPT$.


\subsection{Neglecting lower-order terms}

As our goal is to show the competitive ratio $\R - O(1/n)$, we introduce a
notation that allows to neglect terms of order~$1/n$. We say that $x$ is
\emph{approximately equal} to $y$ (we write $x \eqsim y$) if $|x - y| = O(1/n)$.
Furthermore, we say that $x$ is \emph{approximately greater} than $y$ (we write
$x \gtrsim y$) if $x \geq y$ or $x \eqsim y$; we define relation $\lesssim$
analogously. Each of these relations is transitive when composed a
constant number of times.

In our analysis, we are dealing with Lipschitz continuous functions 
(their derivative is bounded by a universal constant). For such function $h$, 
(i) the relation $\eqsim$ is preserved after application of $h$, and (ii) 
an integral of $h$ can be approximated by a~sum, as stated in the following facts,
used extensively in the paper.

\begin{fact}
\label{lem:Lipschitz}
Fix any Lipschitz continuous function $h$ and values $x \eqsim y$ from its domain. 
Then, $h(x) \eqsim h(y)$. Furthermore, if $h$ is non-decreasing, then
$x \lesssim y$ implies $h(x) \lesssim h(y)$
and $x \gtrsim y$ implies $h(x) \gtrsim h(y)$.
\end{fact}

\begin{fact}
\label{lem:sum_int}
For any Lipschitz continuous function $h$ and integers $a$, $b$ satisfying $1
\leq a \leq b \leq n$, it holds that $(1/n) \cdot \sum_{i=a+1}^b h(i/n) \eqsim
\int_{a/n}^{b/n} h(x)\, \dd x$.
\end{fact}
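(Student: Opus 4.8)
The plan is to bound the error between the Riemann-type sum $(1/n)\sum_{i=a+1}^{b} h(i/n)$ and the integral $\int_{a/n}^{b/n} h(x)\,\dd x$ by summing the errors contributed by each subinterval $[(i-1)/n, i/n]$ for $i = a+1, \dots, b$. On such a subinterval, the term $(1/n)\cdot h(i/n)$ is exactly the integral of the constant function equal to $h(i/n)$, so the local error is $\int_{(i-1)/n}^{i/n} \bigl(h(i/n) - h(x)\bigr)\,\dd x$. First I would invoke the Lipschitz property: if $C$ is the Lipschitz constant of $h$, then $|h(i/n) - h(x)| \leq C \cdot |i/n - x| \leq C/n$ for every $x$ in that subinterval, hence the local error is at most $C/n^2$ in absolute value.

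Next I would sum over the at most $n$ subintervals: the total error is at most $(b-a)\cdot C/n^2 \leq n \cdot C/n^2 = C/n = O(1/n)$, since $b - a \leq n$. This is exactly the statement that the sum and the integral are approximately equal in the sense of~$\eqsim$. I expect no real obstacle here; the only mild subtlety is being careful that the sum ranges over $i = a+1, \dots, b$ (so it corresponds to the partition of $[a/n, b/n]$ into $b - a$ pieces of width $1/n$, with each piece represented by its right endpoint), and that the universal Lipschitz constant $C$ does not depend on $n$, so that $C/n$ genuinely is $O(1/n)$. One should also note that $h$ is defined and Lipschitz on a domain containing $[a/n, b/n] \subseteq [0,1]$, which is the setting in which the fact is applied throughout the paper.
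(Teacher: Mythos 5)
Your argument is correct and is exactly the routine verification the paper has in mind for this Fact (which it states without proof): bounding each subinterval's error by $C/n^2$ via the Lipschitz constant and summing over at most $n$ subintervals to get a total discrepancy of $O(1/n)$, which is precisely the definition of $\eqsim$. Nothing is missing.
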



\subsection{Roadmap of the proof}

We present our algorithm in \cref{sec:algorithm}. 
Its analysis consists of three main parts. 
\begin{itemize}
\item In \cref{sec:large1}, we investigate the gain of \ALG on large items and explain
the choice of the threshold function $\f$. 
\item In \cref{sec:medium}, we study properties of medium items, marking 
routine, function $\g$ and show how the marked items influence the gain 
on other non-large items. 
\item In \cref{sec:large2}, we study the impact of marked items on 
bins containing large items. 
\end{itemize}
Each of these parts is concluded with a statement that, under certain
conditions, \ALG is $(R-O(1/n))$-competitive (cf.~\cref{lem:empty},
\cref{lem:no-m-bin}, \cref{lem:small-ms} and \cref{lem:large-ms}). In
\cref{sec:final}, we argue that these lemmas cover all possible outcomes. For
succinctness, some technical claims have been moved to \cref{sec:proofs}.


\section{Rising Threshold Algorithm}
\label{sec:algorithm}

We arrange items into three categories: small, medium and large. We say that an
item is \emph{large} if its size is in the range $(1/2,1 ]$, \emph{medium} if it
is in the range $[\smallBoundary, 1/2]$, and otherwise it is \emph{small},
where we define
\begin{align}
  \label{eq:gconst_def}
  \gconst & = (1 + (2/3) \cdot \ln(4/3)) \cdot \R - 2/3 \approx 0.0372 \quad \text{and}  \\
  \label{eq:small_boundary}
  \smallBoundary & = (2/3) \cdot \gconst \,/\, (2/3 - \R + \gconst) \approx 0.2191.
\end{align}
We further arrange medium items into subcategories $\MTI{2}$, $\MTI{3}$ and
$\MTI{4}$: a medium item belongs to $\MTI{i}$ if its size is from range
$(1/(i+1), 1/i]$. As we partition only medium items this way, $\MTI{4}$~contains
items of sizes from $[\smallBoundary, 1/4]$. Note that at most $i$ items of
category~$\MTI{i}$ fit in a single bin. 

At some times (defined precisely later) a group of small items of a total load
from $[\smallBoundary, 2 \smallBoundary)$ stored in a single bin may
become \emph{merged}, and from that point is treated as a~single medium item. We
ensure that such merging action does not violate invariants of our algorithm.

Our algorithm \ALG applies labels to bins; the possible labels are \emptyBucket,
\A, \ST, \M, $\MTI{2}$, $\MTI{3}$, $\MTI{4}$ and~\LT. Each bin starts as an
\emptyBucket-bin, and \ALG can relabel it later. The label determines
the content of a given bin:
\begin{itemize}
  \item an $\emptyBucket$-bin is empty,
  \item an \A-bin (an \emph{auxiliary bin}) contains small items of a total load 
      smaller than $\smallBoundary$ and at most one \A-bin exists at any time,
  \item an \ST-bin contains one or multiple small items,
  \item an \M-bin contains a single marked medium item,
  \item an $\MTI{i}$-bin contains one or more medium items of
  category $\MTI{i}$,
  \item an \LT-bin contains a single large item and possibly some other non-large ones.
\end{itemize}
For any label $C$, we define a corresponding set, also denoted $C$, containing all items
stored in bins of label $C$. For instance, \LT is a set containing 
all items stored in \LT-bins.
Furthermore, we define \LG as the set of all large items (clearly $\LG \subseteq \LT$
and $\B(\LG) = \B(\LT)$)
and the set $\MT = \MTI{2} \uplus \MTI{3} \uplus \MTI{4}$.


\ALG processes a stream of items, and it operates until the stream ends or there
are no more empty bins (even if an incoming item could fit in some partially
filled bin). Upon the arrival of an item, \ALG classifies it by its size and
proceeds as described below. 

\subparagraph{Large items.}

Whenever a large item arrives, \ALG compares its size with the threshold
$\f(\B(\LG) + 1/n)$, and if the item is smaller, \ALG rejects it.
The function $\f: [0,1] \to [1/2,1]$ is defined as 
\begin{equation}
\label{eq:f-definition}
   \f(x) =
\begin{cases}
  1/2 &\mbox{if } x \leq \R, \\
  (2\e)^{x-1} & \textnormal{otherwise,}
\end{cases}
\end{equation}
and depicted in \cref{fig:def-fpq} (left).
If the item meets the threshold, \ALG attempts to put it in an \M-bin
with sufficient space left (relabeling it to \LT), and if no
such bin exists, \ALG puts the item in any empty bin.

\begin{figure}[t]
   \centering
   \includegraphics[width=0.75\textwidth]{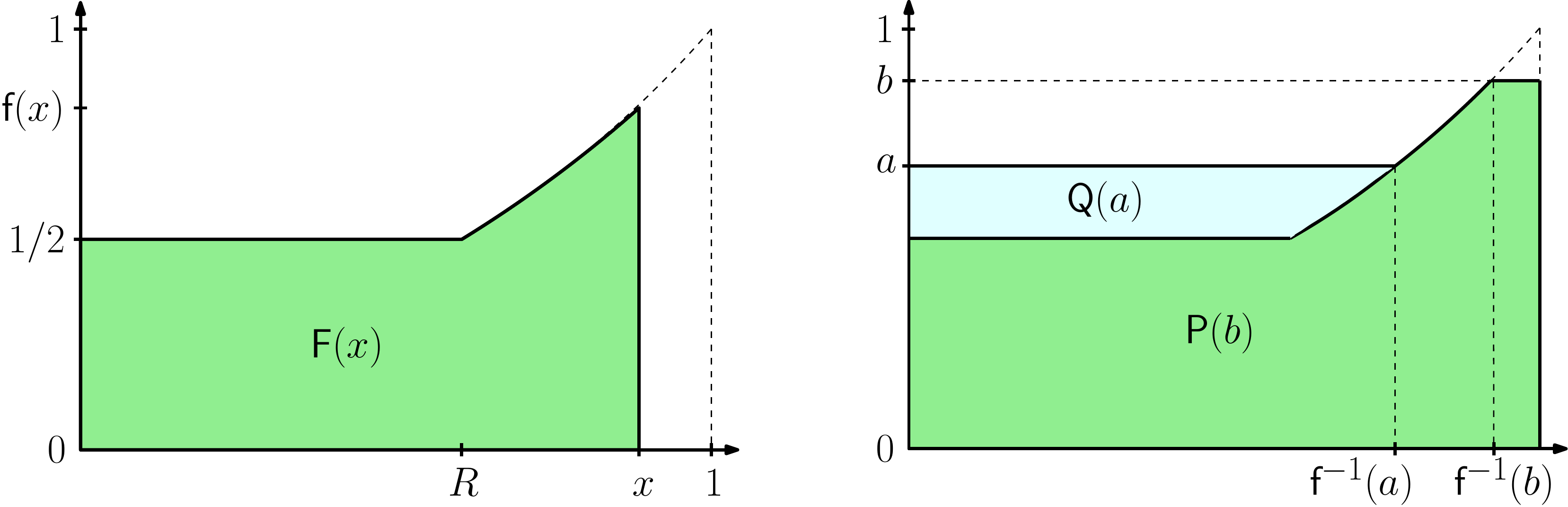}
   \caption{Left: function $\f$ and its integral $\fintegral$. 
   The value of $\fintegral(x)$ roughly corresponds to our lower bound on the 
   gain of \ALG when it collects $n \cdot x$ large items.
   Right: functions $\cutintegral$ and $\water$ used in estimating the gain 
   in \cref{sec:medium} and \cref{sec:large2}; note that their arguments are marked at Y axis.}
   \label{fig:def-fpq} 
\end{figure}

\subparagraph{Medium items.}

We fix a continuous and decreasing function $\g$ that maps medium item sizes 
to $[0,\gconst/\smallBoundary]$:
\begin{equation}
\label{eq:g-definition}
\g(x) =
\begin{cases}
  \gconst / x & \text{if $x \in [\smallBoundary,1/3]$}, \\
  9 \gconst \cdot (1-2 x) & \text{if $x \in (1/3,1/2]$}. 
\end{cases} 
\end{equation}
We say that the subset $Z$ of medium items is
$\g$-dominated if $|Z^{\geq x}| / n \leq \g(x)$ for any $x \in Z$.
Intuitively, it means that if we sort items of $Z$ from largest to
smallest, then all points~$(i/n, x_i)$ are under or at the plot of $\g^{-1}$,
see \cref{fig:tightness} (left). 

\ALG never proactively rejects medium items, i.e., it always accepts them if it 
has an~empty bin. Some medium items become \emph{marked} upon
arrival; we denote the set of medium marked items by $D$. Large or small items
are never marked. Marked medium items are never combined in a single bin with
other marked medium items. At all times, \ALG ensures that the set $D$ is
$\g$-dominated. As no two items from $D$ are stored in a single bin, this
corresponds to the condition $\B(D^{\geq x}) \leq \g(x)$ for any $x \in D$.
Each marked item is stored either in an $\M$-bin (alone) or in an $\LT$-bin (together 
with a large item and possibly some other non-marked items).
That is, $\M \subseteq \D \subseteq \M \uplus \LT$.

Whenever a medium item arrives, \ALG attempts to put it in an \LT-bin. If it
does not fit there, \ALG verifies whether marking it (including it in the set
$D$) preserves $\g$-domination of~$D$. If so, \ALG marks it and stores it in a
separate \M-bin. Otherwise, \ALG \emph{fails to mark} the item and the item is
stored in an $\MTI{i}$-bin (where $i$~depends on the item size): it is added to
an existing bin whenever possible and a new $\MTI{i}$-bin is opened only when
necessary.

We emphasize that if \ALG puts a large item in an \M-bin later (and relabel it
to \LT), the sole medium item from this bin remains marked (i.e., in the set $D$).
However, if a medium item fits in an \LT-bin at the time of its arrival, it
avoids being marked, even though its inclusion might not violate $\g$-dominance
of the set $D$. Note also that $\MT$ contains medium items \ALG failed to mark.

\subparagraph{Small items.}

\ALG never proactively rejects any small item. Whenever a small item arrives,
\ALG attempts to put this item in an \LT-bin, in an \ST-bin, and in the \A-bin,
in this exact order. If the item does not fit in any of them (this is possible
only if the \A-bin does not exist), \ALG places it in an empty bin and relabels
this bin to \A.

If \ALG places the small item in an already existing \A-bin and in effect its load
reaches or exceeds~$\smallBoundary$, \ALG attempts to \emph{merge} all its items into a single
medium marked item. If the resulting medium item can be marked
and included in~$D$ without violating its $\g$-dominance, 
\ALG relabels the \A-bin to \M and treats its contents as a single marked medium
item from now on. Otherwise, it simply changes the label of the \A-bin to \ST.


\section{Gain on large items}
\label{sec:large1}

In this section, we analyze the gain of \ALG on large items. To this end, we
first calculate the integral of function $\f$, denoted $\fintegral$
(see~\cref{fig:def-fpq}, left) and list its properties that can be verified by
routine calculations. 
\begin{equation}
\fintegral(x) = \int_0^x \f(y)\, dy =
  \begin{cases}
    x/2 & \mbox{if } x \leq \R,\\
    \R\cdot(2e)^{x-1} & \mbox{otherwise.}
  \end{cases}
  \label{eq:Ff}
\end{equation}

\begin{lemma}
\label{lem:F-properties}
The following properties hold for function $\fintegral$. 
\begin{enumerate}
  \item $\f^{-1}(c) = 1 + \R \cdot \ln c$ and 
  $\fintegral(\f^{-1}(c)) =  \R \cdot c$ 
  for any $c \in (1/2,1]$.
\item $\fintegral(x) / \f(x) = \min \{x,\R\}$ for any $x \in [0,1]$.
\item $(1/n) \cdot \sum_{i=1}^\ell \f(i/n) \eqsim \int_0^{\ell/n} \f(x)\, dx = \fintegral(\ell/n)$
for any $\ell \in \{0, \ldots, n\}$.
\end{enumerate}
\end{lemma}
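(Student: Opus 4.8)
The plan is to prove the three properties of \fintegral separately, each reducing to routine calculus once the piecewise definition \eqref{eq:Ff} is in hand. Before that I would quickly verify \eqref{eq:Ff} itself: on $[0,\R]$ we have $\f(y)=1/2$ so $\fintegral(x)=x/2$; on $(\R,1]$ we integrate $(2\e)^{y-1}$, whose antiderivative is $(2\e)^{y-1}/\ln(2\e)=(2\e)^{y-1}/(1+\ln 2)$, and since $\R=1/(1+\ln 2)$ this antiderivative is exactly $\R\cdot(2\e)^{y-1}$; adding the constant $\fintegral(\R)=\R/2$ and noting $(2\e)^{\R-1}$ must be checked to equal $1/2$ at $x=\R$ for continuity — indeed $\f$ is continuous at $\R$ because $(2\e)^{\R-1}=1/2$ iff $(\R-1)\ln(2\e)=-\ln 2$ iff $(\R-1)(1+\ln 2)=-\ln 2$ iff $\R(1+\ln 2)=1$, which holds. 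So $\fintegral(x)=\R/2 + \R\big((2\e)^{x-1}-(2\e)^{\R-1}\big)=\R/2+\R(2\e)^{x-1}-\R/2=\R(2\e)^{x-1}$ on that range, confirming \eqref{eq:Ff}.

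For Property~1, I would solve $\f(x)=c$ for $c\in(1/2,1]$: since $c>1/2$ the relevant branch is $(2\e)^{x-1}=c$, i.e. $(x-1)\ln(2\e)=\ln c$, giving $x-1=\ln c/(1+\ln 2)=\R\ln c$, hence $\f^{-1}(c)=1+\R\ln c$. Substituting into the upper branch of \eqref{eq:Ff} (valid since $c>1/2$ forces $\f^{-1}(c)>\R$, as $\f^{-1}(1/2)=1+\R\ln(1/2)=1-\R\ln 2=\R$ using $\R(1+\ln 2)=1$): $\fintegral(\f^{-1}(c))=\R\cdot(2\e)^{\f^{-1}(c)-1}=\R\cdot(2\e)^{\R\ln c}=\R\cdot c$, where the last step uses $(2\e)^{\R\ln c}=\e^{\R\ln c\cdot(1+\ln 2)}=\e^{\ln c}=c$.

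For Property~2, I would split on the two ranges of $x$. If $x\le\R$, then $\f(x)=1/2$ and $\fintegral(x)=x/2$, so the ratio is $x=\min\{x,\R\}$. If $x>\R$, then $\f(x)=(2\e)^{x-1}$ and $\fintegral(x)=\R(2\e)^{x-1}$, so the ratio is $\R=\min\{x,\R\}$. Property~3 is essentially immediate from \cref{lem:sum_int} applied with $h=\f$, $a=0$, $b=\ell$ (note $\f$ is Lipschitz on $[0,1]$, being piecewise smooth with bounded derivative — one should remark that its derivative is bounded by $\ln(2\e)\cdot(2\e)^0=1+\ln 2$ on the exponential branch and $0$ on the flat branch), combined with the definition $\fintegral(\ell/n)=\int_0^{\ell/n}\f(x)\,\dd x$; the edge cases $\ell=0$ (empty sum, both sides zero) and the requirement $a\ge 1$ in \cref{lem:sum_int} versus $a=0$ here are handled by noting $\f\equiv 1/2$ near $0$ so the single extra term $\f(1/n)/n = 1/(2n) = O(1/n)$ is absorbed into $\eqsim$.

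The only mild obstacle is bookkeeping the continuity/branch-matching at the threshold $x=\R$ — making sure each substitution lands in the correct piece of the piecewise definitions and that the identity $\R(1+\ln 2)=1$ (equivalently $\ln(2\e)=1/\R$) is invoked cleanly wherever $(2\e)$ appears in an exponent. Everything else is a direct computation. I expect the whole proof to be a short paragraph per property, or one could simply write "follows by routine calculation" after recording \eqref{eq:Ff}, the continuity of $\f$ at $\R$, and the key identity $\ln(2\e)=1+\ln 2=1/\R$.
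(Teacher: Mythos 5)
Your proposal is correct: the identity $\R(1+\ln 2)=1$, the continuity check at $x=\R$, the branch-by-branch computations for Properties~1 and~2, and the use of \cref{lem:sum_int} (with the $a\ge 1$ versus $a=0$ discrepancy absorbed as an $O(1/n)$ term) for Property~3 are all sound. This is exactly the "routine calculation" the paper alludes to and leaves unwritten, so your argument coincides with the intended proof.
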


Using \cref{lem:F-properties}, we may bound on the gain of \ALG on large items $L$
and use this bound to estimate its competitive ratio when it terminates with empty bins.

\begin{lemma}
\label{lem:large-gain}
It holds that $\gain(\LG) \gtrsim \fintegral(\B(\LG)) = \fintegral(\B(\LT))$. Moreover, $\gain(\LG)
\gtrsim \fintegral(\B(\LG)) + \gain(\LG^{\geq x}) - x \cdot \B(\LG^{\geq x})$
for any $x \geq \f(\B(\LG))$.
\end{lemma}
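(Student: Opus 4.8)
The plan is to recall the algorithm's guarantee on the sizes of accepted large items and then estimate their total load by a sum, which \cref{lem:F-properties}(3) converts into $\fintegral$. Concretely, \ALG accepts a large item only if its size meets the threshold $\f(\B(\LG)+1/n)$ evaluated at the time of arrival, and since $\B(\LG)$ only increases (each accepted large item goes into a fresh bin, or into an \M-bin that is immediately relabelled \LT, which still increments $\B(\LT)=\B(\LG)$), the $i$-th accepted large item has size at least $\f(i/n)$. Writing $\ell = n \cdot \B(\LG)$ for the final number of accepted large items, the total load of $\LG$ is at least $\sum_{i=1}^{\ell} \f(i/n)$, so $\gain(\LG) = \load(\LG)/n \geq (1/n)\sum_{i=1}^{\ell}\f(i/n) \gtrsim \fintegral(\ell/n) = \fintegral(\B(\LG))$ by \cref{lem:F-properties}(3). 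The equality $\fintegral(\B(\LG)) = \fintegral(\B(\LT))$ is immediate from $\B(\LG)=\B(\LT)$, already noted in the preliminaries.

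For the refined bound, I would split $\LG$ into the ``big'' part $\LG^{\geq x}$ and the rest $\LG \setminus \LG^{\geq x}$. The condition $x \geq \f(\B(\LG))$ guarantees that every large item of size $\geq x$ must have been accepted among the \emph{last} few accepted large items: if the $i$-th accepted item has size $\geq x \geq \f(\B(\LG)) = \f(\ell/n)$, then, because the thresholds $\f(i/n)$ are non-decreasing, this is compatible with any position, but the point is the reverse accounting — I want to say that the items of size $< x$ occupy the first $\ell - |\LG^{\geq x}|$ slots and hence still meet the thresholds $\f(1/n), \dots, \f((\ell - |\LG^{\geq x}|)/n)$, while the $|\LG^{\geq x}|$ large items of size $\geq x$ contribute at least $\load(\LG^{\geq x})$. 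So $\load(\LG) \geq \sum_{i=1}^{\ell - k}\f(i/n) + \load(\LG^{\geq x})$ where $k = n\cdot\B(\LG^{\geq x})$. Dividing by $n$ and applying \cref{lem:F-properties}(3) to the truncated sum gives $\gain(\LG) \gtrsim \fintegral(\B(\LG) - \B(\LG^{\geq x})) + \gain(\LG^{\geq x})$. It then remains to check $\fintegral(\B(\LG) - \B(\LG^{\geq x})) \geq \fintegral(\B(\LG)) - x \cdot \B(\LG^{\geq x})$, which follows from convexity-type reasoning: $\f$ is non-decreasing and $\f \leq 1$, but more precisely we use that on the relevant range $\f(y) \leq \f(\B(\LG)) \leq x$, hence $\fintegral(\B(\LG)) - \fintegral(\B(\LG) - \B(\LG^{\geq x})) = \int_{\B(\LG)-\B(\LG^{\geq x})}^{\B(\LG)} \f(y)\,\dd y \leq x \cdot \B(\LG^{\geq x})$.

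The main obstacle I anticipate is the bookkeeping in the second part: one must be careful that the $k$ items of size $\geq x$ really can be ``peeled off'' so that the remaining $\ell - k$ items still individually meet thresholds $\f(1/n),\dots,\f((\ell-k)/n)$. Since the thresholds are non-decreasing in the acceptance order and we are removing items (which only relaxes the constraint on the remaining ones' positions), the cleanest argument is: order accepted large items by acceptance time $s_1 \le \cdots$ with $s_i \ge \f(i/n)$; drop the $k$ indices of items belonging to $\LG^{\geq x}$; the surviving subsequence, reindexed $1,\dots,\ell-k$, satisfies (new $i$-th item) $\ge \f(\text{old index}/n) \ge \f(i/n)$ since old index $\ge i$ and $\f$ is non-decreasing. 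This, plus the hypothesis $x \ge \f(\B(\LG))$ controlling the integral error term, completes the estimate. The $\eqsim$ slack from \cref{lem:F-properties}(3) is absorbed into the $\gtrsim$, using that it composes a constant number of times.
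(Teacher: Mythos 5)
Your proof is correct, and the first part coincides with the paper's argument (sort accepted large items by acceptance time, use the thresholds $\f(i/n)$, and apply \cref{lem:F-properties}). For the second part you take a slightly different route. The paper keeps all $|\LG|$ accepted items and \emph{caps} each item of size greater than $x$ at $x$: since $x \geq \f(\B(\LG))$ dominates every threshold used, the capped sequence still meets the thresholds, giving $\fintegral(\B(\LG))$ directly, and the removed load is exactly $\gain(\LG^{\geq x}) - x \cdot \B(\LG^{\geq x})$. You instead \emph{remove} the items of $\LG^{\geq x}$ entirely, argue via the reindexed subsequence (old index $\geq$ new index, $\f$ non-decreasing) that the remainder still satisfies thresholds, obtaining $\fintegral(\B(\LG) - \B(\LG^{\geq x})) + \gain(\LG^{\geq x})$, and then close the gap with the extra estimate $\fintegral(\B(\LG)) - \fintegral(\B(\LG) - \B(\LG^{\geq x})) = \int_{\B(\LG)-\B(\LG^{\geq x})}^{\B(\LG)} \f(y)\, \dd y \leq x \cdot \B(\LG^{\geq x})$, which again uses $x \geq \f(\B(\LG))$ and monotonicity of $\f$. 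Both arguments exploit the hypothesis in the same way; the paper's capping trick avoids your additional integral comparison, while your version makes explicit that $x$ bounds $\f$ on the top $\B(\LG^{\geq x})$ fraction of indices. One cosmetic remark: your intermediate claim that the items of size smaller than $x$ ``occupy the first $\ell - |\LG^{\geq x}|$ slots'' is not literally true, but your subsequent subsequence argument is the correct formalization and supersedes it, so there is no gap.
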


\begin{proof}
For the first part of the lemma, we sort large
items from $L$ in the order they were accepted by \ALG. The size of the $i$-th
large item is at least the threshold $f(i/n)$. Hence, by
\cref{lem:F-properties}, $\gain(L) \geq (1/n) \cdot \sum_{i=1}^{|L|} f(i/n)
\eqsim \fintegral(|L|/n) = \fintegral(\B(\LG))$.

To show the second part, we fix any $x \geq \f(\B(\LG))$ and for each large item
of size greater than~$x$ we reduce its size to~$x$. The total gain of the
removed parts is exactly $\gain(\LG^{\geq x}) - x \cdot \B(\LG^{\geq x})$. The
resulting large item sizes still satisfy acceptance thresholds, and thus the
gain on the remaining part of~$\LG$ is approximately greater than
$\fintegral(\B(\LG))$. Summing up yields $\gain(\LG) \gtrsim
\fintegral(\B(\LG)) + \gain(\LG^{\geq x}) - x \cdot \B(\LG^{\geq x})$.
\end{proof}


\subsection{When RTA terminates with some empty bins}

\begin{lemma}
\label{lem:empty}
If \ALG terminates with some empty bins, then it is $(R - O(1/n))$-competitive. 
\end{lemma}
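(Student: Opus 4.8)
The plan is to argue that if \ALG halts with at least one empty bin, then the stream must have ended (otherwise \ALG would keep going — it only stops when it runs out of empty bins or the stream terminates), and in that case \ALG has accepted \emph{every} item that \OPT could have used, in a sense strong enough to bound the competitive ratio directly. More precisely, I would first observe that the only reasons \ALG rejects an item are: a large item failing the threshold $\f(\B(\LG)+1/n)$, a medium item failing to be marked and not fitting in any $\MTI{i}$-bin or \LT-bin, or a small item not fitting anywhere. But \ALG never \emph{proactively} rejects medium or small items, and with a free empty bin available it can always place such an item; likewise, since the stream is still running only if there are empty bins, the $\MTI{i}$-bin / \ST-bin / \A-bin machinery always has somewhere to put non-large items. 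Hence, when there is an empty bin, the \emph{only} items \ALG rejects are large items below the current threshold. So every non-large item of $\sigma$ is collected by \ALG, and each rejected item is large of size $< \f(\B(\LG)+1/n) \le 1$.

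Next I would set up the accounting. Write $\OPT(\sigma)$ as a sum over its $n$ bins; each \OPT-bin has load at most $1$, and contains at most one large item (large items have size $>1/2$). Split \OPT's bins into those containing a large item and those not. For the latter, all their items are non-large and hence were accepted by \ALG, contributing at most their total size to $\OPT(\sigma)$; this part of \OPT's gain is therefore at most the total gain of \ALG on non-large items, which is at most \ALG's total gain. For \OPT-bins containing a large item: either that large item was accepted by \ALG (there are $\B(\LG)\cdot n$ such items total, but \OPT may use any subset of them) or it was rejected, in which case its size is at most $\f(\B(\LG)+1/n) = \f(\B(\LT)+1/n) \eqsim \f(\B(\LT))$ by \cref{lem:Lipschitz}, since \ALG never accepted any larger large item once the threshold reached that point; in fact since thresholds are non-decreasing, \emph{every} large item \ALG rejected has size $\lesssim \f(\B(\LG))$. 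So \OPT's gain from its large-item bins is at most $\gain(\LG)\cdot n$ (from the accepted large items it chooses to use) plus $\f(\B(\LG))\cdot n$ times the fraction of \OPT-bins whose large item \ALG rejected, plus the non-large remainder in those bins — which again is dominated by \ALG's non-large gain. Dividing by $n$: $\OPT(\sigma)/n \lesssim \gain(\ALG) + \gain(\LG) + \f(\B(\LG))\cdot(1 - \B(\LG))$, where I have bounded the non-large contributions into a single $\gain(\ALG)$ term and used that at most $(1-\B(\LG))\cdot n$ of \OPT's bins have a large item \ALG did not accept.

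Now I invoke \cref{lem:large-gain}: $\gain(\ALG) \ge \gain(\LG) \gtrsim \fintegral(\B(\LG))$, and write $q := \B(\LG) \in [0,1]$. The competitive ratio is at least $\gain(\ALG) / (\OPT(\sigma)/n)$, and it suffices to show $\gain(\ALG) \gtrsim \R \cdot (\OPT(\sigma)/n)$, i.e. that
\[
  \gain(\ALG) \;\gtrsim\; \R \cdot \bigl(\gain(\ALG) + \gain(\LG) + \f(q)\,(1-q)\bigr).
\]
Using $\gain(\ALG) \ge \gain(\LG)$ and the second part of \cref{lem:large-gain} (or, more simply, treating the non-large part of \ALG's gain as an extra nonnegative slack that helps \ALG on both sides), it is enough to check the "pure large" inequality $\fintegral(q) \gtrsim \R\,(2\fintegral(q) + \f(q)(1-q))$ — wait, I must be careful: \OPT's total includes \ALG's non-large gain once while \ALG's own total includes it once, so those cancel favourably and the binding case is when \ALG has no non-large items, giving the clean condition
\[
  \fintegral(q) \;\gtrsim\; \R \cdot \bigl(\fintegral(q) + \f(q)\cdot(1-q)\bigr)
  \quad\text{for all } q \in [0,1].
\]
This is exactly the boundary condition that the differential equation defining $\f$ was engineered to satisfy, and I would verify it by a direct computation splitting on $q \le \R$ versus $q > \R$, using the closed forms $\f(q) = (2\e)^{q-1}$ and $\fintegral(q) = \R(2\e)^{q-1}$ from \eqref{eq:f-definition} and \eqref{eq:Ff} (for $q>\R$ it becomes an equality, $\R\cdot(2e)^{q-1} = \R(\R(2e)^{q-1} + (2e)^{q-1}(1-q))$, i.e. $1 = \R + 1 - q$ — hmm, that needs re-deriving the exact form, which is where the real work is). The main obstacle is precisely this last verification: pinning down the exact form of the inequality that couples $\OPT$ and $\f$ — in particular making sure the coefficient on $\f(q)(1-q)$ and on $\fintegral(q)$ are the right ones after carefully handling how much of \OPT's and \ALG's gain comes from large versus non-large items and how the $\eqsim$ slack in \cref{lem:large-gain,lem:Lipschitz} propagates — and then confirming that $\f$, as defined via \eqref{eq:f-definition}, makes it hold with equality on $(\R,1]$ and with room to spare on $[0,\R]$.
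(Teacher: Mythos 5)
Your overall shape (with an empty bin left, every non-large item is accepted and every rejected item is large and below the final threshold; then compare against \OPT bin by bin) is the same as the paper's, but the accounting you build on top of it breaks down at two concrete points. First, the claim that at most $(1-\B(\LG))\cdot n$ of \OPT's bins contain a large item that \ALG rejected is unjustified and false: \OPT is free to ignore \ALG's accepted large items entirely and fill \emph{all} $n$ bins with rejected items of size just below the final threshold $x=\f(\B(\LG)+1/n)$ (the adversary can supply $n$ such items), so your bound $\OPT(\sigma)\lesssim \gain(\ALG)+\gain(\LG)+\f(\B(\LG))(1-\B(\LG))$ does not hold. The paper's bound avoids this because it compares against $\LG^{\geq x}$ rather than all of $\LG$: since every item of $\LG^{\geq x}$ has size at least $x$ and every other large item in $\sigma$ has size at most $x$, the per-bin maximization legitimately gives $\OPT(\sigma)\leq \gain(\LG^{\geq x})+x\cdot(1-\B(\LG^{\geq x}))+\gain(N)$; with $\LG$ in place of $\LG^{\geq x}$ (which contains accepted items smaller than $x$) that exchange argument fails. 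Second, even granting your bound, your reduction to the ``clean condition'' $\fintegral(q)\gtrsim \R(\fintegral(q)+\f(q)(1-q))$ is not what your inequality requires: your \OPT estimate contains $\gain(\ALG)+\gain(\LG)$, so in the worst case $\gain(N)=0$ you would need $\fintegral(q)\gtrsim \R\,(2\fintegral(q)+\f(q)(1-q))$, which is false (at $q=1$ it reads $\R\geq 2\R^2$, i.e.\ $1\geq 2\R\approx 1.18$). The ``cancel favourably'' step is where this is lost: the extra copy of $\gain(\LG)$ does not cancel against anything.

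Two further gaps: even the clean condition itself fails on $[0,\R)$ (there $\fintegral(q)=q/2$ while the right-hand side is $\R/2$), contrary to your claim of ``room to spare''; the paper resolves this regime by a separate case — when $\B(\LG)\leq \R-1/n$ the threshold is still $1/2$, so no large item is ever rejected and \ALG is $1$-competitive — a case split your argument needs but does not make. And the mechanism that makes the comparison tight for $q>\R$ is the second part of \cref{lem:large-gain}: the quantity $\gain(\LG^{\geq x})-x\cdot\B(\LG^{\geq x})+\gain(N)$ appears in \emph{both} the lower bound $\ALG(\sigma)\gtrsim \R\cdot x+\gain(\LG^{\geq x})-x\cdot\B(\LG^{\geq x})+\gain(N)$ (using $\fintegral(\B(\LG))\eqsim\R x$ from \cref{lem:F-properties}) and the upper bound $\OPT(\sigma)\leq x+\gain(\LG^{\geq x})-x\cdot\B(\LG^{\geq x})+\gain(N)$, and since it is nonnegative and $\R<1$ the ratio follows; you cite that lemma but never actually form this common term. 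Finally, to turn the additive $O(1/n)$ slack into a multiplicative $\R-O(1/n)$ guarantee you also need $\OPT(\sigma)=\Omega(1)$, which in this case follows from $\OPT(\sigma)\geq\gain(\LG)$ and $\B(\LG)>\R-1/n$ — a small but necessary closing step.
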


\begin{proof}
Fix an input sequence $\sigma$. As \ALG terminates with empty bins, it manages to 
accept all medium and small items from $\sigma$. Furthermore, it accepts large 
items from $\sigma$ according to the thresholds given by function $\f$.
Recall that $\f$ is non-decreasing: at the beginning it is equal to
$1/2$ (\ALG accepts any large item) and the acceptance threshold 
grows as \ALG accepts more large items. 
Let $x = \f(\B(\LG)+1/n)$ be the value of the acceptance
threshold for large items when $\ALG$ terminates.
We consider two cases. 

\begin{itemize}
\item $\B(\LG) \leq \R - 1/n$.
The threshold used for each large item is at most $x \leq \f(\R) = 1/2$, i.e.,
\ALG accepts all large items. Then, \ALG accepts all items and is
$1$-competitive.

\item $\B(\LG) > \R - 1/n$.
Let $N$ be the set of all non-large items accepted by $\ALG$. 
By \cref{lem:large-gain}, 
\begin{align*}
    \ALG(\sigma) 
    & = \gain(\LG) + \gain(N) 
    \gtrsim \fintegral(\B(\LG)) + \gain(\LG^{\geq x}) - x \cdot \B(\LG^{\geq x}) + \gain(N) \\
    & \eqsim R \cdot x + \gain(\LG^{\geq x}) - x \cdot \B(\LG^{\geq x}) + \gain(N).
\end{align*}  
where for the last relation we used $\fintegral(\B(\LG)) \eqsim \fintegral(\f^{-1}(x)) = \R \cdot x$ 
(by \cref{lem:F-properties}).

As \ALG takes all non-large items and all large items that are at least $x$, the
input sequence~$\sigma$ contains items taken by \ALG and possibly some large
items smaller than~$x$. Thus, the gain of \OPT on large items is maximized when
it takes $\LG^{\geq x}$ and fills the remaining $n - |L^{\geq x}|$ bins with
large items from $\sigma$ smaller than $x$. The total gain of \OPT
is thus at most 
\[
  \OPT(\sigma) \leq \gain(L^{\geq x}) + x \cdot (1-\B(\LG^{\geq x})) + \gain(N) 
  = x + \gain(L^{\geq x}) - x \cdot \B(\LG^{\geq x}) + \gain(N).
\]
Comparing the bounds on gains of \ALG and \OPT and observing that 
the term $\gain(L^{\geq x}) - x \cdot \B(\LG^{\geq x}) + \gain(N)$
is non-negative, yields $\ALG(\sigma) \geq \R \cdot \OPT(\sigma) - O(1/n)$. 
As $\OPT(\sigma) \geq \gain(\LG) = \Omega(1)$, we obtain
$\ALG(\sigma) \geq (\R - O(1/n)) \cdot \OPT(\sigma)$.
\qedhere
\end{itemize}
\end{proof}

As an immediate corollary, we observe that if $\sigma$ contains large
items only, then \ALG is $(R - O(1/n))$-competitive: If it terminates with empty
bins, then its competitive ratio follows by \cref{lem:empty}. Otherwise, it
terminates with $n$ large items, and hence, by \cref{lem:large-gain}, $\ALG(\sigma) \gtrsim
\fintegral(\B(\LG)) = \fintegral(1) = \R$. On the other hand, $\OPT(\sigma) \leq 1$, and
therefore the competitive ratio is at most $R - O(1/n)$ also in this case.


\section{Gain on medium items}
\label{sec:medium}

In the remaining part of the analysis, we make use of the following functions.
For any $c \in (1/2, 1]$, let $\cutintegral(c) = \int_0^1 \min \{ \f(y), c \}\,
\dd y$ and $\water(c) = \int_0^1 \max \{ c - \f(y), 0 \}\, \dd y$. Both
functions are increasing and depicted in \cref{fig:def-fpq} (right).
As we show below (cf.~the last property of \cref{lem:PQ-properties}),
$\cutintegral(c)$ lower-bounds the gain of \ALG in the case when
its load on non-\LT bins is at least $c$.

\begin{lemma}
\label{lem:PQ-properties}
Fix any $c \in (1/2, 1]$ and any $x \in [0,1]$. It holds that 
\begin{enumerate}
\item $\cutintegral(c) = c - \R \cdot c \cdot \ln (2 c)$,
\item $\water(c) = \R \cdot c \cdot \ln(2 c)$, 
\item $\cutintegral(c) + \water(c) = c$,
\item $\fintegral(x) + c \cdot (1 - x) \geq \cutintegral(c)$.
\end{enumerate}
\end{lemma}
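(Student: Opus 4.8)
The plan is to establish the four properties essentially by direct computation, using the explicit formula for $\f$ in \eqref{eq:f-definition} and the closed form of its integral $\fintegral$ in \eqref{eq:Ff}, together with the identities already collected in \cref{lem:F-properties}. I will treat Properties~1 and~2 first, then derive Property~3 as an immediate corollary, and finally obtain Property~4 from Property~3 together with a monotonicity observation.

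For Property~1, I would split the integral defining $\cutintegral(c)$ at the crossover point where $\f(y) = c$. Since $\f$ is non-decreasing, equals $1/2$ on $[0,\R]$, and then rises along $(2\e)^{y-1}$, the equation $\f(y)=c$ has the unique solution $y = \f^{-1}(c) = 1 + \R\ln c$ (valid since $c \in (1/2,1]$, by the first item of \cref{lem:F-properties}). Hence $\min\{\f(y),c\} = \f(y)$ for $y \le \f^{-1}(c)$ and $=c$ for $y > \f^{-1}(c)$, so
\[
  \cutintegral(c) = \int_0^{\f^{-1}(c)} \f(y)\,\dd y + c\cdot\bigl(1 - \f^{-1}(c)\bigr)
  = \fintegral\bigl(\f^{-1}(c)\bigr) + c\cdot\bigl(1 - \f^{-1}(c)\bigr).
\]
Now $\fintegral(\f^{-1}(c)) = \R c$ by \cref{lem:F-properties}, and $1 - \f^{-1}(c) = -\R\ln c = -\R\ln(2c) + \R\ln 2$; substituting and simplifying (using $\R\ln 2 = 1 - \R$, which follows from $\R = 1/(1+\ln 2)$) collapses the expression to $c - \R c\ln(2c)$, as claimed. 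Property~2 is then immediate: $\max\{c-\f(y),0\} = c - \min\{\f(y),c\}$ pointwise, so $\water(c) = \int_0^1 c\,\dd y - \cutintegral(c) = c - \cutintegral(c) = \R c\ln(2c)$, and this also proves Property~3 along the way.

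For Property~4, note that $\fintegral(x) + c\cdot(1-x) = \int_0^x \f(y)\,\dd y + \int_x^1 c\,\dd y$, and pointwise $\f(y) \ge \min\{\f(y),c\}$ on $[0,x]$ while $c \ge \min\{\f(y),c\}$ on $[x,1]$; summing the two integrals gives exactly $\ge \int_0^1 \min\{\f(y),c\}\,\dd y = \cutintegral(c)$. I expect the only mild obstacle is the bookkeeping in Property~1 — keeping the two cases of the piecewise definition aligned and correctly invoking the $\R\ln 2 = 1-\R$ identity to reach the stated closed form — but this is routine once the crossover point $\f^{-1}(c)$ is identified; the remaining three properties fall out with almost no extra work.
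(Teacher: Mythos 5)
Your proposal is correct and follows essentially the same route as the paper: split $\cutintegral(c)$ at the crossover point $\f^{-1}(c)$, invoke the identities of \cref{lem:F-properties} (together with $\R\ln 2 = 1-\R$) for the closed form, and prove Property~4 by bounding each integrand pointwise by $\min\{\f(y),c\}$. The only cosmetic difference is that you obtain Property~2 from the pointwise identity $\max\{c-\f(y),0\} = c - \min\{\f(y),c\}$ (getting Property~3 simultaneously), whereas the paper integrates $c-\f(y)$ over $[0,\f^{-1}(c)]$ directly and then deduces Property~3 from Properties~1 and~2; both are equally valid.
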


\begin{proof}[Proof of \cref{lem:PQ-properties}]
  We fix any $c \in (1/2, 1]$ and any $x \in [0, 1]$.
  For the first property, observe that 
    \[ 
      \cutintegral(c) 
        = \int_0^{\f^{-1}(c)} \f(y) \, \dd y + \int_{\f^{-1}(c)}^1 c \, \dd y 
      = \fintegral(\f^{-1}(c)) + c \cdot (1-\f^{-1}(c)) 
      = c - \R \cdot c \cdot \ln (2 c),
  \]
  where for the last equality we used \cref{lem:F-properties}.
  Similarly, the second property follows as 
  \[
    \water(c) 
      = \int_0^{\f^{-1}(c)} c - \f(y) \, \dd y
      = c \cdot \f^{-1}(c) - \fintegral(\f^{-1}(c)) = \R \cdot c \cdot \ln(2 c) .
  \]
  The third relation, $\cutintegral(c) + \water(c) = c$, follows immediately 
  by the first two. Finally, for the last relation, we use
  \[
    \fintegral(x) + c \cdot (1 - x)
      = \int_0^{x} \f(y) \, \dd y + \int_{x}^1 c \, \dd y 
      \geq \int_0^{x} \min\{\f(y),c\} \, \dd x + \int_{x}^1 \min\{f(y), c\} \, \dd y 
      = \cutintegral(c).
  \]
  See also \cref{fig:F-vs-cutintegral} for a geometric argument.
  \end{proof}
    
\begin{figure}[t]
    \centering
    \includegraphics[width=0.7\textwidth]{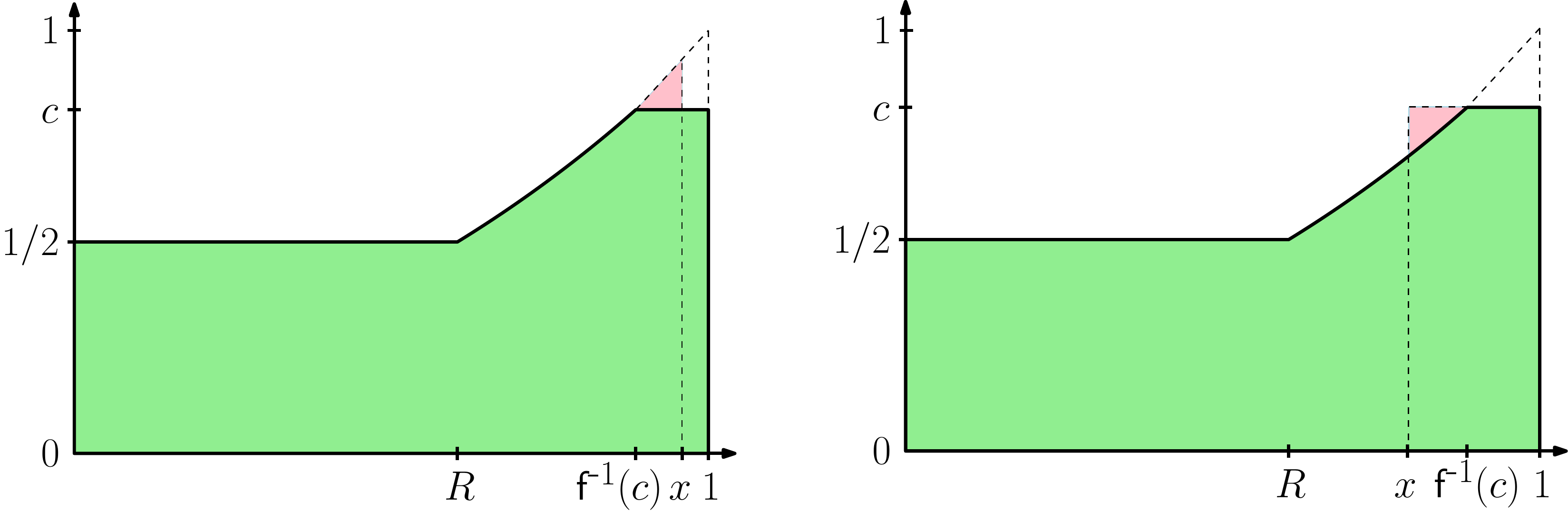}
    \caption{A geometric interpretation of the second property of \cref{lem:PQ-properties}:
    using $\cutintegral$ to lower-bound the sum of $\fintegral(x)$ and the rectangle $c \cdot (1-x)$
      for the case $\f^{-1}(c) \leq x$ (left) and $\f^{-1}(c) > x$ (right).}
    \label{fig:F-vs-cutintegral}
\end{figure}


\subsection{Boundary conditions on function \texorpdfstring{\g}{xi}}
\label{sec:boundary}

We start with a shorthand notation. Let $\T(a,b)  =  (a+b-1/2) \cdot (\g(b) - \g(a))$,
where $a, b \in [\smallBoundary, 1/2]$ (so that the values of $\g(a)$ and
$\g(b)$ are well defined).

Our choice of function $\g$ satisfies the conditions below. In fact, for our
analysis to hold, function $\g$ could be replaced by any Lipschitz continuous
and non-increasing function mapping $[\smallBoundary, 1/2]$ to $[0,1]$
satisfying these properties.

\begin{lemma}
\label{lem:g-properties}
The following properties hold for function $\g$:
\begin{enumerate}
  \item $x \cdot \g(x)$ is a non-increasing function of $x \in [\smallBoundary, 1/2]$,
    \label{item:g-area}
  \item 
    $\cutintegral(\stack(x)) + x \cdot \g(x) \geq \R$ for $x \in [\smallBoundary,1/2]$,
    \label{item:g-prop-0}
  \item 
    $\cutintegral(1-\smallBoundary) + 2 \smallBoundary \cdot \g(2 \smallBoundary) \geq \R$,
    \label{item:g-prop-phi}
  \item $2/3 - (2/3 - \smallBoundary) \cdot \g(x) \geq \R$ for $x \in [\smallBoundary, 1/3]$,
    \label{item:g-upper-bound}
  \item
    $\cutintegral(1-\smallBoundary) + \water(1-x) + (x+\smallBoundary-1) \cdot \g(x) + \max\{T(x,y),0\}
    \geq \R$
      for $x \in [1/3, 1/2]$ and $y \in [\smallBoundary, 2 \smallBoundary]$,
    \label{item:g-2dim-1}
  \item 
    $\cutintegral(\stack(y)) + \water(1-x) + (x - \stack(y)) \cdot \g(x) + \T(x,y) \geq \R$
    for $x \in [1/3, 1/2]$ and $y \in [\smallBoundary, x]$,
    \label{item:g-2dim-2}
  \item 
    $\cutintegral(2x) + \water(1-x)  - x \cdot \g(x) \geq \R$ for $x \in [1/3,1/2]$.
    \label{item:g-prop-1}
\end{enumerate}
\end{lemma}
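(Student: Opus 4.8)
The plan is to verify each of the seven properties by direct substitution of the explicit formula for $\g$ from \eqref{eq:g-definition}, the explicit value of $\smallBoundary$ from \eqref{eq:small_boundary}, and the closed forms for $\cutintegral$ and $\water$ from \cref{lem:PQ-properties}. The key preliminary observation is that on $[\smallBoundary, 1/3]$ we have $x \cdot \g(x) = \gconst$ constant, and $\stack(x) = 2/3$ is also constant there, while on $(1/3, 1/2]$ we have $\stack(x) = 2x$ and $\g(x) = 9\gconst(1-2x)$, so $x\cdot\g(x) = 9\gconst \cdot x(1-2x)$ which is decreasing on $(1/3,1/2]$ since its derivative $9\gconst(1-4x)$ is negative there; combined with continuity at $x=1/3$ (both pieces give $3\gconst$... wait, $\gconst/x$ at $x=1/3$ is $3\gconst$ and $x\cdot\g(x)=\gconst$; let me note the pieces of $\g$ itself must be checked for continuity: $\gconst/(1/3) = 3\gconst$ and $9\gconst(1-2/3)=3\gconst$, so $\g$ is continuous, and $x\g(x)$ drops from $\gconst$ to $9\gconst \cdot x(1-2x)$ which at $x=1/3$ equals $\gconst$), this establishes \cref{item:g-area}. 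Property \cref{item:g-upper-bound} is an inequality in the single variable $\g(x)$ which, since $\g$ is decreasing on $[\smallBoundary,1/3]$, is tightest at $x=\smallBoundary$ where $\g(\smallBoundary)=\gconst/\smallBoundary$; substituting the definition of $\smallBoundary$ makes it hold with equality, so the value of $\smallBoundary$ in \eqref{eq:small_boundary} is exactly calibrated for this bound.

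Next I would handle the one-dimensional inequalities \cref{item:g-prop-0}, \cref{item:g-prop-phi}, and \cref{item:g-prop-1}. For \cref{item:g-prop-0}, split into the two intervals: on $[\smallBoundary,1/3]$ the left side is $\cutintegral(2/3) + \gconst$, a constant, and one checks $\cutintegral(2/3) = 2/3 - \R\cdot(2/3)\ln(4/3)$, so the claim $\cutintegral(2/3)+\gconst \ge \R$ unwinds precisely to the definition \eqref{eq:gconst_def} of $\gconst$ (with equality); on $(1/3,1/2]$ the left side is $\cutintegral(2x) + 9\gconst x(1-2x)$, and I would show this is at least $\R$ by checking it at the endpoint $x=1/3$ (reducing to the previous case) and confirming monotonicity or concavity in $x$ via one derivative computation. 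Property \cref{item:g-prop-phi} is a single numerical inequality: $\cutintegral(1-\smallBoundary) + 2\smallBoundary\cdot\g(2\smallBoundary) \ge \R$; since $2\smallBoundary \approx 0.438 \in (1/3,1/2]$, we have $\g(2\smallBoundary) = 9\gconst(1-4\smallBoundary)$, and plugging in gives an explicit constant to compare with $\R$. Property \cref{item:g-prop-1} is similar: $\cutintegral(2x) + \water(1-x) - x\g(x) \ge \R$ on $[1/3,1/2]$, with $x\g(x) = 9\gconst x(1-2x)$; I would substitute the closed forms $\cutintegral(2x) = 2x - 2\R x\ln(4x)$ and $\water(1-x) = \R(1-x)\ln(2-2x)$ and verify the resulting one-variable function is $\ge \R$, again by endpoint evaluation plus a derivative sign check.

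The two-dimensional inequalities \cref{item:g-2dim-1} and \cref{item:g-2dim-2} are the main obstacle, since they involve the auxiliary variable $y \in [\smallBoundary, 2\smallBoundary]$ (resp. $y\in[\smallBoundary,x]$) through the bilinear term $\T(x,y) = (x+y-1/2)(\g(y)-\g(x))$. My approach is to reduce each to a one-dimensional problem by optimizing over $y$ for fixed $x$. For \cref{item:g-2dim-1}, the left side minus the $\max$ is independent of $y$, and $\max\{\T(x,y),0\} \ge 0$, so the worst case is where $\T(x,y) \le 0$; since $y \le 2\smallBoundary < 1/3 < x$ forces $\g(y) > \g(x)$ (as $\g$ is decreasing) hence $\g(y)-\g(x) > 0$, and $x+y-1/2 > 0$, we actually have $\T(x,y) > 0$, so $\max\{\T(x,y),0\} = \T(x,y)$ and the expression is monotone — I would take it at the $y$ minimizing $\T(x,y)$, and then verify the residual one-variable inequality in $x$. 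For \cref{item:g-2dim-2}, the term $\cutintegral(\stack(y)) = \cutintegral(\max\{2/3,2y\})$ and the term $(x-\stack(y))\g(x)$ and $\T(x,y)$ all depend on $y$; I would first treat the sub-case $y \le 1/3$ (where $\stack(y)=2/3$ and $\g(y)=\gconst/y$) separately from $y > 1/3$, then for each sub-case compute $\partial/\partial y$ of the left side and argue it is single-signed or has a unique interior critical point, reducing to evaluating at $y\in\{\smallBoundary, 1/3, x\}$; finally each resulting bound is a one-variable inequality in $x\in[1/3,1/2]$ dispatched by the same endpoint-plus-derivative method. Throughout, I expect several of these to hold with equality at a boundary point (which is why the constants $\gconst$ and $\smallBoundary$ take the precise values they do), so the bookkeeping of which constraint is tight — and confirming it is exactly $\R$ and not slightly less — is the delicate part; I would organize the whole proof as a sequence of routine-but-careful calculus lemmas, one per property, deferring the longest computations to \cref{sec:proofs} as the roadmap suggests.
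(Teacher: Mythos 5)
Most of your plan tracks the paper's proof: property by property, direct substitution of the closed forms, monotonicity/concavity in one variable, and reduction of the two-dimensional properties to boundary cases in $y$ (the paper does this via a lemma that $\T(x,y)$ is non-increasing in $y$, a chord bound $\cutintegral(c)\geq \R+3\gconst\cdot c-3\gconst$ coming from concavity of $\cutintegral$, and, for \cref{item:g-2dim-2}, a directional derivative along $(1,1)$; your alternative of exploiting concavity in $y$ to reduce to $y\in\{1/3,x\}$, and proving \cref{item:g-prop-1} directly rather than as the case $y=x$ of \cref{item:g-2dim-2}, would also work). The identification of where equality is attained (\cref{item:g-prop-0} at $x\leq 1/3$ via the definition of $\gconst$, \cref{item:g-upper-bound} at $x=\smallBoundary$ via the definition of $\smallBoundary$) is correct.

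There is, however, a concrete error in your treatment of \cref{item:g-2dim-1}. You assert $y\leq 2\smallBoundary<1/3<x$, hence $\g(y)>\g(x)$ and $\T(x,y)>0$ throughout, so that $\max\{\T(x,y),0\}=\T(x,y)$. But $2\smallBoundary\approx 0.438>1/3$ (as you yourself note when discussing \cref{item:g-prop-phi}), so $y$ may exceed $x$; then $\g(y)-\g(x)<0$ while $x+y-1/2>0$, giving $\T(x,y)<0$ and $\max\{\T(x,y),0\}=0$. This matters because the obvious repair --- discard the $\T$-term since $\max\{\T,0\}\geq 0$ and prove the $y$-independent part $\cutintegral(1-\smallBoundary)+\water(1-x)+(x+\smallBoundary-1)\cdot\g(x)\geq\R$ for all $x\in[1/3,1/2]$ --- is false: at $x=1/2$ that expression equals $\cutintegral(1-\smallBoundary)\approx 0.575<\R$. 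The correct argument (the paper's) splits on $x$: for $x\in[1/3,2\smallBoundary]$ the $y$-independent part alone is shown to exceed $\R$ (a derivative estimate on exactly that subinterval), while for $x>2\smallBoundary$ one has $y\leq 2\smallBoundary<x$, so $\T(x,y)\geq \T(x,2\smallBoundary)>0$ by monotonicity in $y$, and the strictly positive $\T$-contribution is genuinely needed to push the bound above $\R$ near $x=1/2$. Your sketch, taken literally, misses this case structure, and the claim it rests on is numerically wrong. Beyond this point, the proposal defers all of the decisive verifications --- the derivative sign estimates with explicit numerical bounds and the numerical check of \cref{item:g-prop-phi} --- which is precisely where the substance of the paper's proof lies, so those would still have to be carried out before the argument is complete.
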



\subsection{Marked and tight items}

We start with a simple bound on the gain of \ALG on \M-bins. Recall that these
bins store single marked items.

\begin{lemma}
\label{lem:M-gain}
If \ALG terminates with at least one \M-bin, then $\gain(\M) \geq \mn(\M) \cdot \B(\M)$, 
and $\B(\M) \leq \g(\mn(\M))$. 
\end{lemma}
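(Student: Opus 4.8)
The plan is to unwind the definitions of $\M$, $\D$, and $\mn$, together with the $\g$-dominance invariant that \ALG maintains on the marked set $D$. Recall that every \M-bin holds exactly one marked medium item (alone), so the number of \M-bins equals $|\M|$, and hence $\B(\M) = |\M|/n = \gain(\M)/(\text{average item size in }\M) \cdot (\ldots)$ — more precisely, the total load of the items in the \M-bins is at least $\mn(\M)$ times their count, which after dividing by $n$ gives $\gain(\M) \geq \mn(\M) \cdot \B(\M)$. This is the first inequality and it is essentially immediate: every item counted in $\M$ has size at least $\mn(\M)$.

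For the second inequality, the key observation is that $\M \subseteq \D$ (stated in the excerpt: $\M \subseteq \D \subseteq \M \uplus \LT$), so taking $x = \mn(\M)$ we have that $x$ is the size of some item in $\M$, hence an element of $\D$. Every item of $\M$ has size at least $x = \mn(\M)$, so $\M = \M^{\geq x} \subseteq \D^{\geq x}$, and therefore $\B(\M) = |\M^{\geq x}|/n \leq |\D^{\geq x}|/n$. Now I would invoke the $\g$-domination invariant: \ALG maintains that $D$ is $\g$-dominated at all times, which means $|\D^{\geq x}|/n \leq \g(x)$ for every $x \in \D$. Since $x = \mn(\M) \in \M \subseteq \D$, this applies, and we conclude $\B(\M) \leq |\D^{\geq x}|/n \leq \g(\mn(\M))$, as desired. (One should note that $\mn(\M)$ refers to the value of the set after \ALG terminates, which is consistent with the "whenever we use $\mn$...we mean these values after termination" convention, and the invariant holds in particular at termination.)

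I do not anticipate a serious obstacle here — the lemma is a bookkeeping consequence of the algorithm's invariants. The only mild subtlety is making sure the chain $\M = \M^{\geq x} \subseteq \D^{\geq x}$ is argued cleanly when $x = \mn(\M)$ (items of size exactly $\mn(\M)$ are included, which is fine since $\D^{\geq x}$ is defined with a non-strict inequality), and that we are entitled to apply $\g$-domination at the point $x = \mn(\M)$ — which requires $\mn(\M) \in \D$, true because $\mn(\M)$ is realized by an item of $\M \subseteq \D$. If $\mn(\M)$ happened not to be attained (it is, since $\M$ is finite and nonempty by the hypothesis that \ALG terminates with at least one \M-bin), one could take a limiting argument, but that is unnecessary. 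So the write-up is: (i) size lower bound gives $\gain(\M) \geq \mn(\M)\cdot\B(\M)$; (ii) $\M\subseteq\D$ plus $\g$-domination at $x=\mn(\M)$ gives $\B(\M)\leq\g(\mn(\M))$.
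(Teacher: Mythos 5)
Your proof is correct and follows essentially the same route as the paper: the first inequality is immediate because each \M-bin holds a single item of size at least $\mn(\M)$, and the second follows from $\M \subseteq \D^{\geq \mn(\M)}$ together with the $\g$-domination invariant evaluated at $x = \mn(\M) \in \D$. The paper's proof is just a more compact phrasing of the same two observations.
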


\begin{proof}
The first condition follows trivially as each \M-bin contains a single medium
item of size at least $\mn(\M)$. For the second condition, note that $\M \subseteq \D$, 
and thus also $\M \subseteq \D^{\geq \min(\M)}$. As \D is
\g-dominated, $\B(\M) \leq \B(D^{\geq \min(\M)}) \leq \g(\min(\M))$.
\end{proof}

We now take a closer look at the marked items and their influence on the gain on
other sets of items. We say that a medium marked item $x \in \D$ is
\emph{tight} if it is on the verge of violating $\g$-domination invariant.

\begin{definition}
\label{def:tight}
An item $x \in \D$ is tight if $\B(\D^{\geq x}) > \g(x) - 1/n$.
\end{definition}

If an item $x \in \D$ is tight, then another item of size $x$ or greater cannot
be included in~$\D$ without violating $\g$-domination invariant.
\cref{fig:tightness} (left) illustrates this concept. As $\D$ can only grow,
once an item becomes tight, it remains tight till the end. We emphasize that
items smaller than $x$ are not relevant for determining whether $x$ is tight. If
$\D$ contains a~tight item, then $\mt$ denotes the size of the minimum tight
item in~$\D$. This important parameter influences the gain both on set $D$ and
also on stacking bins $\MT$ and $\ST$.

\begin{lemma}
\label{lem:D-gain}
If $\D$ contains a tight item, then $\gain(\D) \gtrsim \mt \cdot \g(\mt)$.
\end{lemma}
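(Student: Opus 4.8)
The plan is to show that the total gain on $\D$ is at least approximately $\mt \cdot \g(\mt)$ by summing the contributions of marked items of size at least $\mt$. Write $m = \mt$ for brevity. By the definition of $m$, there is a tight item of size $m$ in $\D$, so $\B(\D^{\geq m}) > \g(m) - 1/n$, i.e. $\B(\D^{\geq m}) \gtrsim \g(m)$. Since every item in $\D^{\geq m}$ has size at least $m$ and each such item occupies its own bin (no two items of $\D$ share a bin), we get
\[
  \gain(\D) \geq \gain(\D^{\geq m}) \geq m \cdot \B(\D^{\geq m}) \gtrsim m \cdot \g(m),
\]
which is exactly the claimed bound. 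The only subtlety is the passage from $\B(\D^{\geq m}) > \g(m) - 1/n$ to $\B(\D^{\geq m}) \gtrsim \g(m)$, which is immediate from the definition of $\gtrsim$ (the difference is $O(1/n)$), and the fact that $m\cdot\B(\D^{\geq m}) \gtrsim m\cdot\g(m)$ follows because $m \leq 1/2$ is bounded, so multiplying an $\eqsim$-relation by $m$ preserves it (Fact~\ref{lem:Lipschitz} applied to the Lipschitz function $t \mapsto m t$, or directly).

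I expect no real obstacle here; the lemma is essentially a restatement of tightness combined with the trivial observation that marked items live in separate bins. The one place to be slightly careful is to confirm that $\D^{\geq m} \subseteq \D$ and that $\mn(\D^{\geq m}) \geq m$, so that $\gain(\D^{\geq m}) \geq m \cdot |\D^{\geq m}|/n = m \cdot \B(\D^{\geq m})$; both hold by construction. It is also worth noting that $\D$ itself may contain items smaller than $m$ (items that are in $\D$ but are not tight and are below the minimum tight size), which only makes $\gain(\D) \geq \gain(\D^{\geq m})$ a genuine inequality rather than an equality, but this does not affect the lower bound. A stronger bound involving the contribution of items below $m$ is not needed and would in any case require knowing more about $\g$-domination below the tight threshold, which we deliberately avoid.
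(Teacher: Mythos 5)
Your proof is correct and is essentially identical to the paper's own argument: fix the tight item of size $\mt$, apply the tightness inequality $\B(\D^{\geq \mt}) > \g(\mt) - 1/n$, and lower-bound $\gain(\D) \geq \gain(\D^{\geq \mt}) \geq \mt \cdot \B(\D^{\geq \mt}) \gtrsim \mt \cdot \g(\mt)$. The extra remarks about multiplying the approximate inequality by $\mt \leq 1/2$ and about marked items occupying separate bins are accurate but just make explicit what the paper leaves implicit.
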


\begin{proof}
Fix a tight item $d \in \D$ of size $\mt$. By \cref{def:tight},
$\B(\D^{\geq d}) > \g(d) - 1/n$, and thus $\gain(\D) \geq \gain(\D^{\geq d}) \geq d
\cdot \B(\D^{\geq d}) \gtrsim d \cdot \g(d)$.
\end{proof}


\subsection{Impact of tight items on stacking bins}

By \cref{item:g-area} of \cref{lem:g-properties}, $x \cdot \g(x)$ is a non-increasing 
function of $x$. Therefore, the smaller $\mt$ is, the larger is the lower bound
on $\gain(\D)$ guaranteed by \cref{lem:D-gain}. Now we argue that the larger $\mt$ is,
the better is the gain on stacking bins $\MT$ and $\ST$.

\begin{lemma}
\label{lem:rej-min-relation}
Assume \ALG failed to mark a medium item $y$.
Then, a tight item exists and $\mt \lesssim y$.
\end{lemma}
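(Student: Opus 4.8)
The plan is to unpack the definition of ``failing to mark'' an item~$y$ and see exactly which constraint of the $\g$-domination invariant is violated at the moment $y$ arrives. Recall that \ALG fails to mark a medium item $y$ only when including $y$ in the set $\D$ would break $\g$-domination, i.e.\ when there is some value $x \in \D \cup \{y\}$ for which the count $|(\D \cup \{y\})^{\geq x}| / n$ would exceed $\g(x)$. Since $\D$ was $\g$-dominated just before $y$'s arrival, the only way adding a single item of size $y$ can break the invariant at a threshold $x$ is if $x \leq y$ (items larger than $y$ are unaffected) and $\B(\D^{\geq x})$ was already equal to $\g(x) - 1/n$ (up to the discretization, the largest value it can take while staying $\le \g(x)$), so that adding $y$ pushes the count to $\g(x)$, which — because of the strictness in the definition of tightness versus the weak inequality in $\g$-domination — is exactly the tightness condition $\B(\D^{\geq x}) > \g(x) - 1/n$.

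So the first step is: let $x^\star$ be a witness threshold for the failure, i.e.\ a value with $x^\star \le y$ at which adding $y$ would make $|(\D\cup\{y\})^{\ge x^\star}|/n > \g(x^\star)$. From this I would argue that there must be an item $d \in \D$ with $d \ge x^\star$ that is tight in the sense of \cref{def:tight} at termination. The cleanest way is to observe that at the arrival time of $y$, the set $\D^{\ge x^\star}$ already has $\B(\D^{\ge x^\star}) \ge \g(x^\star) - 1/n$ (this is what makes adding $y$ fatal); taking $d$ to be the smallest item in $\D^{\ge x^\star}$ that is present at that time, we get $\B(\D^{\ge d}) = \B(\D^{\ge x^\star}) \ge \g(x^\star) - 1/n \ge \g(d) - 1/n$, using that $\g$ is non-increasing and $d \ge x^\star$. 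Hence $d$ is tight, and since $\D$ only grows and tightness is preserved, $d$ remains tight at termination. This establishes that a tight item exists.

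For the bound $\mt \lesssim y$: by definition $\mt$ is the size of the \emph{minimum} tight item in $\D$ at termination, so $\mt \le d$. It remains to bound $d$ in terms of $y$. We have $d \ge x^\star$, but we need an upper bound on $d$, not a lower one. Here I would use that $d$ is the smallest present item of $\D^{\ge x^\star}$, together with the counting: since $\B(\D^{\ge x^\star}) \ge \g(x^\star) - 1/n$ items of size $\ge x^\star$ are present and $\g$ is non-increasing with $x^\star \le y$, and crucially since adding $y$ (of size $\le y$, hence contributing to $\D^{\ge x}$ for all $x \le y$) triggers the violation, the relevant witness $x^\star$ can be taken to be as large as $y$ itself in the worst case — or more carefully, one shows $d \lesssim y$ by noting that the tight item certifying the failure, after the discretization error of $1/n$ is absorbed into $\lesssim$, cannot be forced to be much larger than $y$. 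I expect this last quantitative step — pinning down that the certifying tight item has size at most $y$ up to $O(1/n)$, rather than merely at least $x^\star$ for some $x^\star \le y$ — to be the main obstacle, and it will hinge on choosing the witness threshold $x^\star$ maximally (close to $y$) and on the Lipschitz/approximation conventions (\cref{lem:Lipschitz}) so that the strict-versus-weak inequality gap and the $1/n$ granularity all collapse into the $\lesssim$ relation.
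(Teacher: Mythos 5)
Your first half (existence of a tight item) is essentially the paper's argument: take a witness threshold $x^\star \le y$ at which $\Dext = \D \cup \{y\}$ violates $\g$-domination, pass to the smallest item $d \in \D^{\geq x^\star}$, and use $\D^{\geq d} = \D^{\geq x^\star}$ together with monotonicity of $\g$ to conclude that $d$ satisfies \cref{def:tight}. (Minor point: the failure gives the strict bound $\B(\D^{\geq x^\star}) > \g(x^\star) - 1/n$, which is what tightness literally requires; your ``$\geq$'' version should be stated strictly, but that is cosmetic.)

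The second half, however, is a genuine gap: you never prove $\mt \lesssim y$, and you explicitly defer exactly the step that carries the content of the lemma. The route you sketch --- choosing the witness $x^\star$ ``maximally, close to $y$'' --- is not available: the violated threshold is dictated by where the counts in $\D$ are already saturated, and it may be far below $y$ (e.g., many marked items of size near $\smallBoundary$ and none near $y$); what must be shown is not that $x^\star$ is near $y$, but that $d$ is forced to be near $x^\star$. The missing idea is to play the two inequalities on the same quantity against each other: the violation gives $\B(\D^{\geq d}) = \B(\D^{\geq x^\star}) > \g(x^\star) - 1/n$, while $\g$-domination of $\D$ itself gives $\B(\D^{\geq d}) \leq \g(d)$; combined with $\g(d) \leq \g(x^\star)$ (since $d \geq x^\star$ and $\g$ is non-increasing), this sandwiches $\g(x^\star) - 1/n < \g(d) \leq \g(x^\star)$, i.e.\ $\g(d) \eqsim \g(x^\star)$. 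Because $\g$ is strictly decreasing with derivative bounded away from zero on $[\smallBoundary, 1/2]$, this forces $d \eqsim x^\star \leq y$; since $d$ stays tight as $\D$ grows and $\mt \leq d$ at termination, $\mt \lesssim y$ follows. Without this step your argument only yields \emph{some} tight item of size $\geq x^\star$, which does not bound $\mt$ in terms of $y$, so the proposal as written is incomplete.
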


\begin{proof}
Let $\Dext = \D \cup \{y\}$. By the lemma assumption, \Dext is not
$\g$-dominated, i.e., there exists an item $x \in \Dext$ such that
$\B(\Dext^{\geq x}) > \g(x)$. Note that $x \leq y$, as otherwise we would
have $\D^{\geq x} = \Dext^{\geq x}$, and thus $\B(\D^{\geq x}) >
\g(x)$, which would contradict $\g$-domination of $D$.

Let $d \geq x$ be the minimum size of an item from $\D^{\geq x}$. 
Then, $\D^{\geq d} = \D^{\geq x}$, and thus 
\begin{equation}
\label{eq:d-is-tight} 
  \B(\D^{\geq d}) = \B(\D^{\geq x})
  = \B(\Dext^{\geq x}) - 1/n > \g(x) - 1/n \geq \g(d) - 1/n, 
\end{equation}
where in the last inequality we used monotonicity of $\g$. By
\eqref{eq:d-is-tight}, $d$ is tight. On the other hand, $\g$-domination of $\D$
implies that $\B(\D^{\geq d}) \leq \g(d)$. This, combined with
\eqref{eq:d-is-tight}, yields $\g(d) \eqsim \g(x)$, and thus $d \eqsim x \leq
y$. Note that $d$ remains tight till the end of the execution. This concludes
the lemma, as the minimum tight item, $\mt$, can be only smaller than $d$. 
\end{proof}

\begin{lemma}
\label{lem:mtD-min-stacked}
If \ALG finishes 
\begin{itemize}
\item with at least one \MT-bin, then $\mt$ is defined and
$\mt \lesssim \mn(\MT)$;
\item with at least one
\ST-bin, then $\mt$ is defined and $\mt \lesssim 2 \smallBoundary$.
\end{itemize}
\end{lemma}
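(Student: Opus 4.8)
The plan is to derive both parts from \cref{lem:rej-min-relation}, which already does the heavy lifting: whenever \ALG fails to mark a medium item of size $y$, a tight item exists and $\mt \lesssim y$. Since $\D$ only grows and tightness is preserved over time (as noted in the proof of \cref{lem:rej-min-relation}), the existence of a tight item at \emph{any} moment of the execution guarantees that $\mt$ is well defined when \ALG terminates. So for each bullet it suffices to point to a sufficiently small medium item that \ALG failed to mark.

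\textbf{\MT-bins.} First I would observe, directly from the definition of \ALG, that every item residing in an $\MTI{i}$-bin is a medium item that \ALG \emph{failed to mark}: a medium item is routed to an $\MTI{i}$-bin only through the ``fails to mark'' branch of the medium-item rule — its only alternative destinations being an \LT-bin, or an \M-bin once it gets marked — and an $\MTI{i}$-bin never changes the marked status of its contents afterwards. Consequently, picking a medium item $y \in \MT$ realizing the minimum size in $\MT$, \ALG failed to mark $y$ upon its arrival, and \cref{lem:rej-min-relation} yields that $\mt$ is defined at termination and $\mt \lesssim y = \mn(\MT)$.

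\textbf{\ST-bins.} Next I would trace how an \ST-bin can come to exist. By inspection of \ALG, the sole origin of an \ST-bin is the merging step for small items: an \A-bin receives a small item, its load reaches or exceeds $\smallBoundary$, \ALG attempts to merge all of its contents into a single medium item, that item fails the marking test, and the \A-bin is relabeled \ST. Immediately before the last small item was added, the \A-bin's load was below $\smallBoundary$, and a small item has size below $\smallBoundary$, so the merged medium item $y$ has size in $[\smallBoundary, 2\smallBoundary)$ (in particular it is indeed medium); and the failed marking test is exactly the event ``\ALG failed to mark the medium item $y$''. Applying \cref{lem:rej-min-relation} once more, $\mt$ is defined and $\mt \lesssim y < 2\smallBoundary$, and by transitivity of $\lesssim$ over a constant number of steps, $\mt \lesssim 2\smallBoundary$.

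The only genuine effort here is the structural bookkeeping in the two observations above — verifying that medium items enter \MT-bins \emph{only} via the failed-to-mark branch, and that \ST-bins are created \emph{only} from a failed merge of small items of aggregate size in $[\smallBoundary, 2\smallBoundary)$, with neither label ever repurposed by the algorithm. Once those state-transition facts are pinned down, everything quantitative is inherited from \cref{lem:rej-min-relation}, so I expect this lemma to be short.
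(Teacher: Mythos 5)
Your proposal is correct and follows essentially the same route as the paper: pick the minimum-size item of $\MT$ (which \ALG necessarily failed to mark) and, for $\ST$, the merged item of size $s+r < 2\smallBoundary$ from the failed merge that created the first $\ST$-bin, then invoke \cref{lem:rej-min-relation} in both cases. The extra bookkeeping you spell out (that $\MT$-bins and $\ST$-bins arise only via the failed-to-mark branches, and that tightness persists) matches what the paper asserts more tersely.
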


\begin{proof}
For the first part of the lemma, fix a medium item from \MT of size
$\mn(\MT)$. By the definition of \ALG, it failed to mark this item.
Hence, by \cref{lem:rej-min-relation}, $\mt$ is defined and $\mt \lesssim
\mn(\MT)$.

Assume now that \ALG finishes with at least one \ST-bin. When the first such
\ST-bin was created, \ALG placed a small item $s < \smallBoundary$ in the
already existing \A-bin of load $r < \smallBoundary$, and the merge action
failed, because \ALG failed to mark the resulting item of size $s+r$. Thus,
again by \cref{lem:rej-min-relation}, $\mt$ is defined and $\mt \lesssim s+r
< 2\smallBoundary$.
\end{proof}


\begin{figure}[t]
\centering
\includegraphics[width=0.9\textwidth]{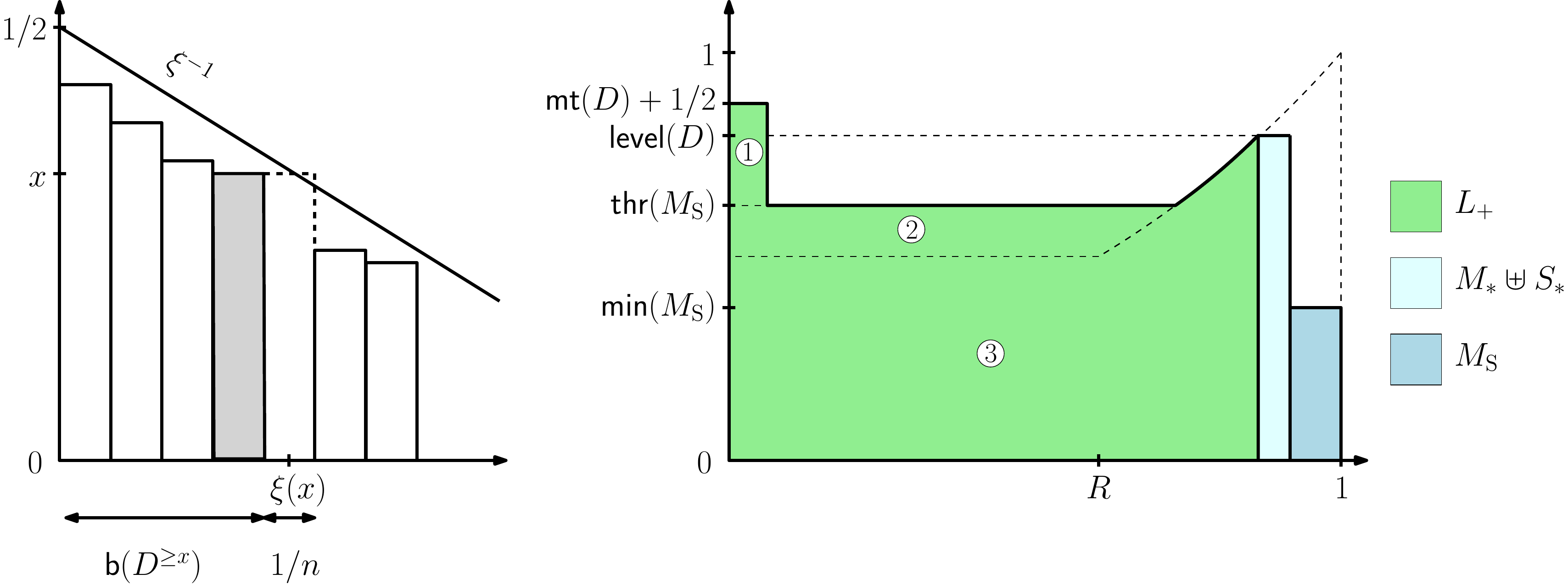}
\caption{Left: set $\D$ of marked items with a tight (gray) item of size $x$. As
$x$ is tight, insertion of another item of size~$x$ (with a dashed border) would
violate $\g$-domination of $\D$. Right: items collected by \ALG, when it
terminates without empty bins and with $\M$-bins. Gain on $\LT$-sets is split
into three parts, where the first part corresponds to the gain of $\Tmax$
(cf.~\cref{lem:gain-L-minus-T}). The minimum guaranteed load 
in $\M$-bins is given by \cref{lem:M-gain} and in the bins of $\MT
\uplus \ST$ by \cref{lem:stacking-gain}.}
\label{fig:tightness}
\end{figure}

To estimate the gain on \MT-bins and \ST-bins, we define
\begin{equation}
\label{eq:level-def}
  \level = \begin{cases}
  \min\{ \stack(\mt), 1-\smallBoundary \} & \text{if \D contains a tight item and $\ST \neq \emptyset$,} \\
  \stack(\mt) & \text{if \D contains a tight item and $\ST = \emptyset$,} \\
  1 & \text{if \D does not contain any tight item.}
\end{cases}
\end{equation}

\begin{lemma}
\label{lem:stacking-gain}
It holds that
$\gain(\MT \uplus \ST) \gtrsim \level \cdot \B(\MT \uplus \ST)$.
\end{lemma}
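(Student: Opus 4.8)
The plan is to lower-bound the load in each $\MT$-bin and each $\ST$-bin separately by $\level$ (up to lower-order terms), and then sum over all such bins. Recall that $\B(\MT \uplus \ST)$ is the number of these bins divided by $n$, so if every such bin has load $\gtrsim \level$, the gain — which is the total load divided by $n$ — is $\gtrsim \level \cdot \B(\MT \uplus \ST)$. Since $\level$ is defined by cases according to whether $\D$ has a tight item and whether $\ST = \emptyset$, I will handle these cases.

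First consider the case where $\D$ has no tight item. Then by \cref{lem:rej-min-relation}, \ALG never failed to mark a medium item, so $\MT = \emptyset$; and by the argument in the proof of \cref{lem:mtD-min-stacked}, if an $\ST$-bin existed then a tight item would be defined, so $\ST = \emptyset$ as well. Hence $\MT \uplus \ST = \emptyset$ and the claim is trivial (with $\level = 1$).

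Next suppose $\D$ contains a tight item. I would argue that every $\MT$-bin, except possibly one per subcategory $\MTI{2}, \MTI{3}, \MTI{4}$ (the last-opened one in each, which \ALG opened only when the previous bin of that category could no longer accommodate the arriving item), has load at least $\stack(\mn(\MT))$: indeed \ALG fills $\MTI{i}$-bins greedily, so a non-final $\MTI{i}$-bin holds either $i$ items of $\MTI{i}$ or enough items to block the next arrival, giving load $\geq \stack(\mn(\MTI{i}))$; and since \ALG terminates, the boundary bins number at most $3$, contributing only $O(1/n)$ to the gain, hence negligible under $\eqsim$. By \cref{lem:mtD-min-stacked}, $\mt \lesssim \mn(\MT)$, and since $\stack$ is Lipschitz and non-decreasing, \cref{lem:Lipschitz} gives $\stack(\mn(\MT)) \gtrsim \stack(\mt)$. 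For $\ST$-bins: every $\ST$-bin has load at least $\smallBoundary$ (the smallest such bin was created by merging a bin that reached load $\geq \smallBoundary$, and any later $\ST$-bin would have been an $\A$-bin that reached load $\geq \smallBoundary$ too — or a merged small-item bin), and in fact once the load of a stacking small-item bin passes $\smallBoundary$ it may be as small as $\smallBoundary$, but we can also argue it is $\gtrsim 1 - \smallBoundary$ when the bin is "full" in the sense that the next small item of size $< \smallBoundary$ didn't fit, namely load $> 1 - \smallBoundary$; a constant number of non-full boundary bins is again negligible. Combining, when $\ST \neq \emptyset$ each relevant $\MT \uplus \ST$ bin has load $\gtrsim \min\{\stack(\mt), 1-\smallBoundary\} = \level$, and when $\ST = \emptyset$ each relevant $\MT$-bin has load $\gtrsim \stack(\mt) = \level$.

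The main obstacle I expect is the careful bookkeeping of "boundary" bins — the last-opened bin in each $\MTI{i}$ category and the possibly-underfull $\ST$-bins — and arguing rigorously that there are only $O(1)$ of them so that their deficient load is absorbed by the $\eqsim$ relation; this requires invoking that \ALG halts as soon as bins run out and that new $\MTI{i}$- or $\ST$-bins are opened "only when necessary", i.e., only when no existing bin of that type has room. A secondary subtlety is confirming the precise load guarantee $\stack(x) = \max\{2/3, 2x\}$ for a greedily-filled medium-item bin: if items have size in $(1/3, 1/2]$ two of them give load $> 2/3$, if in $(1/4,1/3]$ three give $> 3/4 > 2/3$, and in general the "till it is possible" packing of items $\geq x$ yields load $\geq \stack(x)$ as noted in the preliminaries — so this reduces to citing that observation, once I have established which bins are greedily full.
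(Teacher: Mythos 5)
Your proposal is correct and follows essentially the same route as the paper: split on whether $\D$ has a tight item (no tight item forces $\MT = \ST = \emptyset$ via \cref{lem:rej-min-relation}/\cref{lem:mtD-min-stacked}), bound all but $O(1)$ greedily-filled $\MTI{i}$-bins by $\stack(\mn(\MT)) \gtrsim \stack(\mt)$ and all but one $\ST$-bin by $1-\smallBoundary$, and combine via the minimum defining $\level$. The only cosmetic difference is that your "blocks the next arrival" alternative for a non-final $\MTI{i}$-bin is vacuous (a bin with fewer than $i$ category-$\MTI{i}$ items always has room for one more), so the paper simply states that every non-final $\MTI{i}$-bin holds exactly $i$ items.
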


\begin{proof}
If \D does not contain a tight item, then, by \cref{lem:mtD-min-stacked}, both
\MT and \ST are empty, and the lemma follows trivially. Thus, in the
following we assume that \D contains a~tight item and we take a closer look at
the contents of \ST-bins and \MT-bins.

Assume that \MT is non-empty. \ALG creates a new $\MTI{i}$-bin (for $i \in \{2,3,4\}$) 
if the incoming medium item of category $\MTI{i}$ (of size $(1/(i+1),1/i]$) 
does not fit in any of the existing $\MTI{i}$ bins. 
Hence, each $\MTI{i}$-bin (except at most one) has exactly $i$ items, and therefore its
load is greater than $i/(i+1) \geq 2/3$ and is also at least $i \cdot \min(\MTI{i}) 
\geq 2 \cdot \min(\MT)$. Thus, 
\begin{equation}
\label{eq:gain-m-star}
  \gain(\MT) 
  \gtrsim \max \{2/3, 2 \cdot \mn(\MT) \} \cdot \B(\MT) = 
  \stack(\mn(\MT)) \cdot \B(\MT) \gtrsim \stack(\mt) \cdot \B(\MT),
\end{equation}
where the second inequality follows by \cref{lem:mtD-min-stacked} and by
monotonicity of function \stack. Note that \eqref{eq:gain-m-star} holds
trivially also when there are no \MT-bins.

If there are no \ST-bins, then, $\gain(\MT \uplus \ST) = 
\gain(\MT) \gtrsim \stack(\mt) \cdot \B(\MT) = \level \cdot \B(\MT \uplus \ST)$,
and the lemma follows. 

If there are some \ST-bins, recall that \ALG creates a new \ST-bin only if
the considered small item does not fit in any existing \ST-bin. Thus, the load
of each \ST-bin (except at most one) is at least~$1-\smallBoundary$, and
therefore $\gain(\ST) \gtrsim (1-\smallBoundary)\cdot \B(\ST)$. Combining this
with \eqref{eq:gain-m-star} implies $\gain(\MT \uplus \ST) \gtrsim \stack(\mt)
\cdot \B(\MT) + (1-\smallBoundary)\cdot \B(\ST) \geq \min \{\stack(\mt),
1-\smallBoundary \} \cdot \B(\MT \uplus \ST)$.
\end{proof}


\subsection{When RTA terminates without empty bins and without \texorpdfstring{\M-bins}{MS-bins}}
\label{sec:no-m-bin}

Using tight items, we may analyze the case when \ALG terminates without empty
bins and without $\M$-bins, and show that in such case its gain is approximately
greater than $\R$. As the gain of \OPT is at most $1$, this yields the desired
competitive ratio. 

\begin{lemma}
\label{lem:bin-count}
If \ALG terminates without empty bins, then $\B(\LG) + \B(\M) + \B(\MT) + \B(\ST) \eqsim 1$. 
\end{lemma}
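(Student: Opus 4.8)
The statement asserts that when \ALG halts with no empty bins, the fractions of bins carrying the four labels \LG, \M, \MT, \ST sum to approximately~$1$. The plan is to account for the $n$ bins by their labels at termination, using the fact that \ALG stops precisely when no empty bin remains. So every bin carries one of the labels \A, \ST, \M, $\MTI{2}$, $\MTI{3}$, $\MTI{4}$, \LT (no \emptyBucket-bin is left). Dividing by $n$ and recalling the definitions $\B(\LG)=\B(\LT)$ (every \LT-bin holds exactly one large item) and $\B(\MT)=\B(\MTI{2})+\B(\MTI{3})+\B(\MTI{4})$, we get $\B(\LT)+\B(\M)+\B(\MT)+\B(\ST)+\B(\A)=1$ exactly, where $\B(\A)$ is the number of \A-bins divided by $n$.

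The only term not appearing on the left-hand side of the claimed identity is $\B(\A)$, so the crux is to show $\B(\A)\eqsim 0$, i.e.\ the number of \A-bins is $O(1)$ — in fact, by the algorithm's invariant, at most one \A-bin exists at any time, so $\B(\A)\le 1/n$. First I would invoke this invariant directly from the description of \ALG ("at most one \A-bin exists at any time"). Hence $\B(\A)\le 1/n=O(1/n)$, so $\B(\A)\eqsim 0$. Combining with the exact count $\B(\LT)+\B(\M)+\B(\MT)+\B(\ST)+\B(\A)=1$ and the substitution $\B(\LT)=\B(\LG)$ yields $\B(\LG)+\B(\M)+\B(\MT)+\B(\ST)=1-\B(\A)\eqsim 1$, as desired.

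The remaining bookkeeping step is to be explicit that the labels are exhaustive and mutually exclusive: each bin carries exactly one label at termination, and by hypothesis none is an \emptyBucket-bin, so the seven labels \A, \ST, \M, $\MTI{2}$, $\MTI{3}$, $\MTI{4}$, \LT partition the $n$ bins. One should also note that $\MTI{2},\MTI{3},\MTI{4}$-bins are exactly the bins contributing to $\B(\MT)$ and that $\B(\LT)=\B(\LG)$ since \LT-bins are in bijection with large items. There is no genuine obstacle here; the main point worth a sentence of care is that $\B(\A)=O(1/n)$ follows from a structural invariant of the algorithm rather than from any potential-function or adversarial argument, so the lemma is really just a counting identity modulo one $O(1/n)$ term.
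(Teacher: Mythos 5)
Your proof is correct and follows the same route as the paper: at termination every bin carries one of the non-empty labels, at most one of them is the single permitted \A-bin, and dividing the exact count by $n$ gives the $\eqsim 1$ identity. No issues.
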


\begin{proof}
There is at most one \A-bin.
The remaining bins (at least $n-1$ many) are of classes \LT, \M, $\MT$ or \ST,
and thus $\B(\LG) + \B(\M) + \B(\MT) + \B(\ST) \eqsim 1$. 
\end{proof}

\begin{lemma}
\label{lem:no-m-bin}
If on input $\sigma$, \ALG terminates without empty bins and without \M-bins, then $\ALG(\sigma) \gtrsim \R$.
\end{lemma}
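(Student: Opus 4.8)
The plan is to combine \cref{lem:bin-count} with the various gain lower bounds proved so far, distinguishing cases based on whether $\D$ contains a tight item. Since there are no \M-bins, we have $\M = \emptyset$, so \cref{lem:bin-count} gives $\B(\LG) + \B(\MT) + \B(\ST) \eqsim 1$. I would write $\ALG(\sigma) = \gain(\LG) + \gain(\MT \uplus \ST)$ (there is at most one \A-bin, contributing $O(1/n)$) and bound each term.

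First consider the case where $\D$ contains no tight item. Then by \cref{lem:mtD-min-stacked}, both \MT and \ST are empty, so all non-empty bins are \LT-bins, meaning $\B(\LG) \eqsim 1$. Then by \cref{lem:large-gain}, $\ALG(\sigma) \gtrsim \fintegral(\B(\LG)) \eqsim \fintegral(1) = \R$, and we are done.

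Now suppose $\D$ contains a tight item; let $m = \mt$ be the minimum tight item size. By \cref{lem:stacking-gain}, $\gain(\MT \uplus \ST) \gtrsim \level \cdot \B(\MT \uplus \ST)$. The idea is to view the non-\LT bins as being filled to level roughly $\level \geq \stack(m)$ (or the $1-\smallBoundary$ variant if $\ST \neq \emptyset$), so that the overall load is at least $\fintegral(\B(\LG))$ from the large bins plus $\level \cdot (1 - \B(\LG))$ from the rest; applying the bound $\fintegral(x) + c\cdot(1-x) \geq \cutintegral(c)$ from \cref{item:g-prop-0}... sorry, from property~4 of \cref{lem:PQ-properties}, with $c = \level$ and $x = \B(\LG)$, gives $\ALG(\sigma) \gtrsim \cutintegral(\level)$. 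However this alone is not enough, because $\cutintegral(\level)$ can be below $\R$ when $m$ is small. The extra slack comes from $\gain(\D)$: by \cref{lem:D-gain}, $\gain(\D) \gtrsim m \cdot \g(m)$, and since $\M = \emptyset$ we have $\D \subseteq \LT$, so the marked items' gain is already counted inside $\gain(\LG)$ — I would instead argue directly that the large items in $\LT$-bins carry extra load because each shares its bin with a marked item of size at least $m$. This needs care: the correct accounting is that $\gain(\LG) + \gain(\D)$-type double-counting must be avoided, so I would split $\B(\LT)$ according to whether the bin contains a marked item, or more cleanly invoke property~2 of \cref{lem:g-properties} in the form $\cutintegral(\stack(m)) + m \cdot \g(m) \geq \R$ (and property~3 for the $\ST \neq \emptyset$ subcase, where $\level$ involves $1 - \smallBoundary$ and $\mt \lesssim 2\smallBoundary$ from \cref{lem:mtD-min-stacked}, so $m \cdot \g(m) \gtrsim 2\smallBoundary \cdot \g(2\smallBoundary)$ by \cref{item:g-area}).

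The main obstacle I anticipate is bookkeeping the gain of the marked set $\D$ without double-counting it against $\gain(\LG)$ and $\gain(\MT \uplus \ST)$: since $\D \subseteq \LT$ here, $\gain(\D)$ is a \emph{part} of $\gain(\LG)$, not an addition to it. The clean way is to use the second part of \cref{lem:large-gain} with $x = m$: it yields $\gain(\LG) \gtrsim \fintegral(\B(\LG)) + \gain(\LG^{\geq m}) - m \cdot \B(\LG^{\geq m})$, but what I actually want is a lower bound on the load of \LT-bins that accounts for the marked medium companions; I would argue that removing the marked medium item from each \LT-bin that contains one still leaves valid large items, so $\gain(\LG) \gtrsim \fintegral(\B(\LG)) + \gain(\D') - m \cdot \B(\D')$ where $\D' = \D$ restricted to bins counted in $\B(\LG)$, i.e.\ $\D' = \D$, giving $\gain(\LG) \gtrsim \fintegral(\B(\LG)) + (\level' - m)\cdot\B(\D)$ style terms — then combine with \cref{lem:stacking-gain}, \cref{lem:D-gain}, \cref{lem:bin-count} and the relevant property of \cref{lem:g-properties} (property~2 when $\ST = \emptyset$, property~3 when $\ST \neq \emptyset$), and finish with $\OPT(\sigma) \leq 1$. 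Verifying that the chosen property of \cref{lem:g-properties} exactly matches the accumulated bound — with $\stack(m)$ vs.\ $1 - \smallBoundary$ appearing correctly — is the delicate part.
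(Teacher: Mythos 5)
Your first case (no tight item) is fine, but in the main case your proposal stalls on a worry that is actually unfounded, and this leaves the central inequality underived. You claim that ``since $\M = \emptyset$ we have $\D \subseteq \LT$, so the marked items' gain is already counted inside $\gain(\LG)$'' --- but $\LG$ is by definition the set of \emph{large items}, not the full contents of the $\LT$-bins (that would be $\gain(\LT)$). The bound $\gain(\LG) \gtrsim \fintegral(\B(\LG))$ from \cref{lem:large-gain} uses only the large items themselves, and the marked medium items of $\D$ are distinct items sharing those bins. Hence $\LG$, $\D$ and $\MT \uplus \ST$ are pairwise disjoint sets of items and one may simply write $\ALG(\sigma) \geq \gain(\LG) + \gain(\D) + \gain(\MT \uplus \ST)$ with no double counting; this is exactly what the paper does. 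Combining \cref{lem:large-gain}, \cref{lem:stacking-gain}, \cref{lem:bin-count} and property~4 of \cref{lem:PQ-properties} then gives $\ALG(\sigma) \gtrsim \cutintegral(\level) + \gain(\D)$, and \cref{lem:D-gain} supplies $\gain(\D) \gtrsim \mt \cdot \g(\mt)$. Your alternative plan --- removing the marked medium companion from each $\LT$-bin and re-crediting its load --- is the machinery of \cref{sec:large2}, which is unnecessary here, and in your sketch it is never actually executed (the ``$(\level' - m)\cdot\B(\D)$ style terms'' are not pinned down), so the proof as written is incomplete.

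A second, smaller issue: your final case split ``property~2 when $\ST = \emptyset$, property~3 when $\ST \neq \emptyset$'' is not the right dichotomy. When $\ST \neq \emptyset$ but $\stack(\mt) \leq 1-\smallBoundary$, the level equals $\stack(\mt)$ and you need \cref{item:g-prop-0} of \cref{lem:g-properties}, not \cref{item:g-prop-phi}; conversely \cref{item:g-prop-phi} together with $\mt \lesssim 2\smallBoundary$ and \cref{item:g-area} handles only the subcase $\level = 1-\smallBoundary < \stack(\mt)$. The clean split (as in the paper) is on whether $\level \geq \stack(\mt)$ or not. With the disjointness point fixed and the case split corrected, your argument becomes essentially the paper's proof.
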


\begin{proof}
We analyze the gain of \ALG on three disjoint sets: $\LG$, $D$ and $\MT \uplus \ST$.
\begin{align*}
  \ALG(\sigma)
  & \geq \gain(\LG) + \gain(\MT \uplus \ST) + \gain(\D) \\
  & \gtrsim \fintegral(\B(\LG)) + \level \cdot (1-\B(\LG)) + \gain(\D)
    &&& \text{(by L.~\ref{lem:large-gain}, L.~\ref{lem:stacking-gain}
      and L.~\ref{lem:bin-count})} \\
  & \gtrsim \cutintegral(\level) + \gain(\D) 
  &&& \text{(by L.~\ref{lem:PQ-properties})}
  \end{align*}
If $\D$ does not contain a tight item, then $\level = 1$,
and thus $\ALG(\sigma) \gtrsim \cutintegral(1) = \R$.

If $\D$ contains a tight item, then 
by \cref{lem:D-gain}, $\gain(\D) \geq \mt \cdot \g(\mt)$, and therefore
$\ALG(\sigma) \gtrsim \cutintegral(\level) + \mt \cdot \g(\mt)$.
We consider two cases.
\begin{itemize}
\item If $\level \geq \stack(\mt)$, 
then $\ALG(\sigma) \gtrsim \cutintegral(\stack(\mt)) + \mt \cdot \g(\mt) \geq \R$,
where the last inequality follows by \cref{item:g-prop-0} of \cref{lem:g-properties}.

\item 
The opposite case, $\level < \stack(\mt)$, is possible only if $\ST$-bins exist
and $\level = 1-\smallBoundary$. By \cref{lem:mtD-min-stacked}, the existence of
$\ST$-bins implies $\mt \lesssim 2 \smallBoundary$. As the function $x \cdot
\g(x)$ is non-increasing (cf.~\cref{item:g-area} of \cref{lem:g-properties}),
$\ALG(\sigma) \gtrsim \cutintegral(1-\smallBoundary) + 2 \smallBoundary \cdot
\g(2 \smallBoundary) \geq R$. The last inequality follows by
\cref{item:g-prop-phi} of \cref{lem:g-properties}.
\qedhere
\end{itemize}
\end{proof}


\section{Gain on large items revisited}
\label{sec:large2}

In this section, we assume that $\ALG$ terminates without empty bins and with at least one $\M$-bin.
Recall that \cref{lem:large-gain} allows us to estimate $\gain(\LT)$ by calculating the 
gain on \emph{large} items alone. Now we show how to improve this bound by
taking into account non-large items in $\LT$. First, we leverage the fact that if a
(marked) medium item is in $\M$, then \ALG must have failed to combine it with a large
item, and we obtain a better lower bound on the size of each large item. 
Second, we show that in some cases marked medium items must be in $\LT$ which increases its load.
If an \M-bin exists, we define
\begin{align*}
  &\Tmax 
  = \begin{cases}
    \T(\mn(\M), \mt) & \text{if $\mt$ is defined and 
        $\mn(\M) > \max\{ \mt, 1/3 \}$} \\
    0 & \text{otherwise}, 
  \end{cases} \\
  & \thresh = \min \{ 1-\mn(\M), 1/2+\smallBoundary \}.
\end{align*}

Note that $\Tmax$ is always non-negative. In particular, 
$\T(\mn(\M), \mt) = (\mn(\M) + \mt - 1/2) \cdot (\g(\mt) -
\g(\mn(\M)) \geq 0$ because $\mn(\M) + \mt \geq 1/2$ for $\mn(\M) > 1/3$ and $\mt \geq \smallBoundary$.

\begin{lemma}
\label{lem:lbin-mwd-full}
Assume \ALG terminates with at least one \M-bin.
Then, the load of any \LT-bin is at least $\thresh$.
\end{lemma}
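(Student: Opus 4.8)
The plan is to trace the history of a single \M-bin whose medium item realizes the minimum $\mn(\M)$, and use it as a witness that every \LT-bin is heavily loaded. Fix an \M-bin $\beta$ that, at termination, holds a marked medium item $m$ of size $\mn(\M)$, and let $t_m$ be the step at which $\beta$ first received the label \M. Since marked items are never moved to another bin and $\beta$ still carries the label \M at the end, $\beta$ is an \M-bin (hence contains exactly $m$, with free space exactly $1-\mn(\M)$) throughout the interval from $t_m$ to termination.

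The first key step is the observation that \emph{at step $t_m$, some arriving item of size at most $\mn(\M)$ failed to fit in any \LT-bin}. If $\beta$ became an \M-bin via the marking of $m$ itself, then \ALG isolated $m$ in a new \M-bin only after it failed to fit in every \LT-bin, and $m$ has size exactly $\mn(\M)$. If instead $\beta$ became an \M-bin via a merge of small items in an \A-bin, then the small item $s<\smallBoundary\le\mn(\M)$ whose arrival triggered the merge was routed to the \A-bin only after it failed to fit in every \LT-bin. Either way, every \LT-bin present at step $t_m$ has load exceeding $1-\mn(\M)$ at step $t_m$, hence also at termination.

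Next I fix an arbitrary \LT-bin $b$ at termination, let $\ell$ be its unique large item, and let $t$ be the step at which $\ell$ was placed in $b$; note $t\neq t_m$ since the item processed at step $t_m$ is not large, and since large items are never removed, $b$ is an \LT-bin from step $t$ onward. If $t<t_m$, then $b$ is already an \LT-bin at step $t_m$, so $\load(b) > 1-\mn(\M)\ge\thresh$ by the observation above. If $t>t_m$, I split on how \ALG placed $\ell$: on arrival it first tries \M-bins with enough free space. If $\ell$ landed in such an \M-bin $\gamma$ (which was then relabeled to \LT), then $\gamma$ held a medium item of size at least $\smallBoundary$, so $\load(b)\ge\mathrm{size}(\ell)+\smallBoundary>1/2+\smallBoundary\ge\thresh$. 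Otherwise $\ell$ was placed in an empty bin, which happens only if no \M-bin had room for $\ell$ — in particular not $\beta$, which at step $t$ is present with free space $1-\mn(\M)$ — so $\mathrm{size}(\ell) > 1-\mn(\M)$ and again $\load(b) > 1-\mn(\M)\ge\thresh$. In every case $\load(b)\ge\min\{1-\mn(\M),\,1/2+\smallBoundary\}=\thresh$.

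I expect the main obstacle to be bookkeeping rather than depth: one must verify that $\beta$ genuinely keeps the label \M until the end (so that $m$ is never combined with a large item and $\beta$'s free space stays $1-\mn(\M)$ at step $t$), and one must treat the ``merged small items'' origin of an \M-bin on equal footing with the ordinary one, checking that the triggering small item is still routed through the \LT-bins before being parked in the \A-bin. Everything else reduces to the elementary inequalities $\mathrm{size}(\ell)>1/2$, $s<\smallBoundary\le\mn(\M)$, and the monotonicity used implicitly in ``load only grows'' because items are never removed.
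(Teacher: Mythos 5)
Your proof is correct and follows essentially the same argument as the paper's: it uses a minimum-size marked item in an \M-bin as a witness and exploits the algorithm's placement priorities (large items try \M-bins before empty bins; medium items and merge-triggering small items try \LT-bins first), with a case split on arrival order. The only difference is cosmetic — the paper first splits on whether the \LT-bin contains an extra medium item, whereas you split on how the large item was placed — so the two proofs coincide in substance.
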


\begin{proof}
Consider any \LT-bin $b$ and let $y$ be the large item contained in this bin. 
If $b$ contains an additional medium item, then its load is 
greater than $1/2 + \smallBoundary$, and the lemma follows. Hence, in the following, we assume 
that $y$ was not combined with a~medium item in a single bin. 
As \ALG finishes with an \M-bin, we fix a medium item $x$ of size $\mn(\M)$.
We consider three cases.
\begin{itemize}
\item Item $x$ arrived (or was created by merging some small items) before the
arrival of $y$. \ALG did not place $y$ in the $\M$-bin containing $x$,
because $y+x > 1$. Thus, $\load(b) \geq y \geq 1 - x = 1-\min(\M)$.

\item Item $x$ arrived after the arrival of $y$.
(Some small items might be placed together with $y$ prior to the arrival of $x$.)
As \ALG placed $x$ in a separate bin, it did not fit in $b$, i.e., 
the load of $b$ at the time of the arrival of $x$ was greater than $1 - x = 1-\min(\M)$.

\item Item $x$ was created by merging small items after the arrival of $y$.
Let $s < x$ be the small item that caused the creation of $x$. \ALG placed $s$ in \A-bin, because 
$s$ did not fit in $b$, i.e., the load of $b$ at that time
was greater than $1 - s > 1 - x = 1-\min(\M)$.
\qedhere
\end{itemize}
\end{proof}

\begin{lemma}
\label{lem:gain-L-minus-T}
Assume that \ALG terminates with at least one $\M$-bin.
Then, $\gain(\LT) \gtrsim \Tmax + \int_0^{\B(\LG)} \max\{ \f(y), \thresh \} \, \dd y$.
\end{lemma}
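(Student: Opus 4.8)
The plan is to combine the two refinements of the size bound on large items available in this regime: the pointwise lower bound $\thresh$ from Lemma~\ref{lem:lbin-mwd-full} and the threshold bound $\f(i/n)$ from the acceptance rule (Lemma~\ref{lem:large-gain}), and additionally to account for the possibility that a marked medium item is forced to sit inside an \LT-bin, contributing the $\Tmax$ term.

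First I would handle the main integral. Sort the large items $y_1, \dots, y_{|\LG|}$ in the order \ALG accepted them. The acceptance rule gives $y_i \ge \f(i/n)$, and Lemma~\ref{lem:lbin-mwd-full} gives that the \emph{load} of the \LT-bin holding $y_i$ is at least $\thresh$; since each \LT-bin's load is a lower bound on the contribution of that bin to $\gain(\LT)$, and $y_i$ alone is a lower bound on that bin's large-item content, we get that each \LT-bin contributes at least $\max\{\f(i/n), \thresh\}$ to $\gain(\LT)$. Summing over $i$ and using Fact~\ref{lem:sum_int} (the function $y \mapsto \max\{\f(y),\thresh\}$ is Lipschitz, being a max of two Lipschitz functions) yields $\gain(\LT) \gtrsim \frac1n \sum_{i=1}^{|\LG|} \max\{\f(i/n),\thresh\} \eqsim \int_0^{\B(\LG)} \max\{\f(y),\thresh\}\,\dd y$.

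It remains to add the $\Tmax$ term. If $\Tmax = 0$ there is nothing more to prove, so assume $\mt$ is defined and $\mn(\M) > \max\{\mt, 1/3\}$, so that $\Tmax = \T(\mn(\M),\mt) = (\mn(\M)+\mt-1/2)\cdot(\g(\mt)-\g(\mn(\M)))$. The idea is that, because $\mn(\M) > 1/3$ and the number of marked items of size $\ge \mt$ is constrained by $\g$-domination while the number of size $\ge \mn(\M)$ is much smaller (by monotonicity of $\g$ and near-tightness of the relevant item), there must be roughly $n\cdot(\g(\mt)-\g(\mn(\M)))$ marked medium items whose size lies in $[\mt, \mn(\M))$ that are \emph{not} in \M-bins — hence, since $\M \subseteq \D \subseteq \M \uplus \LT$, they lie in \LT-bins. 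Each such item has size at least $\mt$; moreover the \LT-bin it occupies already has a large item of size greater than $1/2$, so the large item plus this medium item together exceed $1$ unless the large item has size below $1 - \mt$. I would argue that for each such medium item, the \LT-bin containing it has load at least $\mt + (\text{large item}) $, and combining with the fact that we have only charged $\max\{\f(i/n),\thresh\} \le \thresh$-ish worth per bin (or that the bins holding these medium items can be the ones whose large item is smallest), the extra load not yet counted sums to at least $(\mn(\M)+\mt-1/2)\cdot(\g(\mt)-\g(\mn(\M))) = \Tmax$.

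The main obstacle will be the bookkeeping in this last step: making precise \emph{which} \LT-bins hold the "excess" marked medium items, ensuring the load contributed by those medium items has not already been double-counted against the $\max\{\f(i/n),\thresh\}$ bound, and getting the algebraic factor to come out exactly as $\mn(\M)+\mt-1/2$ (this is where the condition $\mn(\M) > 1/3$, hence $\mn(\M) + \mt > 1/2$, is used to keep $\Tmax \ge 0$). I would organize it by picking the $n\cdot(\g(\mt)-\g(\mn(\M)))$ \LT-bins with the \emph{smallest} large items — those with large item at most $1-\mt$, so their bins were not counted as contributing more than $\thresh \le 1-\mt$ from the large item, leaving room to add $\mt$ per bin minus the overlap — and showing the net gain of switching the accounting on exactly these bins is $\Tmax$, while on all other bins the $\int_0^{\B(\LG)}\max\{\f(y),\thresh\}\,\dd y$ bound is untouched. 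Everything else is an application of Facts~\ref{lem:Lipschitz} and~\ref{lem:sum_int} and the $\g$-domination invariant.
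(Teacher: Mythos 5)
Your overall route is the same as the paper's: charge each \LT-bin $b_i$ with $\max\{\f(i/n),\thresh\}$ and pass to the integral via \cref{lem:sum_int}, then argue that the marked items of size at least $\mt$ that are not in \M-bins sit, one per bin, in \LT-bins, that there are $\gtrsim n\cdot(\g(\mt)-\g(\mn(\M)))$ of them (tightness of $\mt$ gives the lower bound on $|\D^{\geq \mt}|$, and \cref{lem:M-gain} the upper bound $\B(\M)\leq\g(\mn(\M))$), and that each such bin carries an extra $\mt+\mn(\M)-1/2$ beyond its per-bin charge, which sums to $\Tmax$. The first half and the counting are correct.

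The gap is precisely the step you yourself flag as ``the main obstacle,'' and your sketched fix does not close it. You have no freedom to ``pick the \LT-bins with the smallest large items'': the surplus must be extracted from the bins that actually contain the excess marked items, and those are fixed by the run of \ALG. What is missing is the per-bin surplus verification, which is short: if $b_i$ contains one of these marked items, its load is at least $\f(i/n)+\mt$ (its large item meets the threshold, plus the marked medium item), while the amount already charged to it is $\max\{\f(i/n),\thresh\}\leq\max\{\f(i/n),\,1-\mn(\M)\}$; since $\f(i/n)>1/2$ and $\mn(\M)\leq 1/2$, in either branch of the maximum the difference is at least $\mt+\mn(\M)-1/2$. (Equivalently, the paper removes load $\mt+\mn(\M)-1/2$ from each such bin and checks the remainder still dominates $\max\{\f(i/n),\thresh\}$, using $\thresh\leq 1-\mn(\M)$.) With this observation inserted, your plan becomes the paper's proof; your auxiliary remark that these bins have large item at most $1-\mt$ is true but not needed, and no double-counting issue remains because the medium item's mass was never part of the $\max\{\f(i/n),\thresh\}$ charge in the first place.
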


\begin{proof}
We sort accepted large items by their arrival time and denote the bin containing
the $i$-th large item by $b_i$. The bin $b_i$ contains a large item of size at
least $\f(i/n)$ because of the acceptance threshold, and its load is at least
$\thresh$ by \cref{lem:lbin-mwd-full}, i.e., $\load(b_i) \geq \max \{ \f(i/n),
\thresh \}$.

We now show how to decrease the load in $\LT$-bins, so that the
remaining load in bin~$b_i$ remains at least $\max \{ \f(i/n), \thresh \}$ 
and the change in the total gain is approximately equal to $\Tmax$. 
This claim
is trivial for $\Tmax = 0$, so we assume $\Tmax > 0$. This
is possible only if a tight item exists, $\mn(\M) > \mt$ and $\mn(\M) > 1/3$. As
$\mn(\M) > \mt$, every marked medium item of size from the interval $[\mt, \mn(\M))$ is in (a
separate) \LT-bin; let $\tilde{L}$ be the set of these bins. As $\M \subseteq
\D^{\geq \mt}$, $n \cdot \B(\tilde{L}) = |\D^{\geq \mt} \setminus \M| =
|\D^{\geq \mt}| - |\M|$. Using the tightness of $\mt$ and \cref{lem:M-gain},
$\B(\tilde{L}) = \B(\D^{\geq \mt}) - \B(\M) \gtrsim \g(\mt) - \g(\mn(\M))$. From
each bin of $\tilde{L}$ we remove a~load of $\mt + \mn(\M) - 1/2$.
The induced change in the total gain is then approximately equal to $\B(\tilde{L}) \cdot (\mt + \mn(\M) - 1/2) =
\Tmax$.

We now analyze the load of bin $b_i$ after the removal. 
The original load of bin $b_i$ was at least $\f(i/n) + \mt$, and 
after removal it is at least 
$\f(i/n) + \mn(\M) - 1/2$. This amount is at least $\f(i/n)$ (as $\mn(\M) \leq 1/2$)
and at least $1-\mn(\M) \geq \thresh$ (as $\f(i/n) \geq 1/2$).
Hence, the remaining load of $b_i$ is at least $\max \{ \f(i/n), \thresh \}$. 

Thus,
$\gain(\LT) - \Tmax 
  \geq \frac{1}{n} \sum_{i=1}^{n \cdot \B(\LG)} \max\{ \f(i/n), \thresh \}
  \eqsim \int_0^{\B(\LG)} \max\{ \f(y), \thresh \} \, \dd y$. 
Particular subsets of $\LT$ are depicted in \cref{fig:tightness} (right);
the removed part of gain $\Tmax$ is depicted as~$(1)$.
\end{proof}

\begin{lemma}
\label{lem:gain-L-improved}
Assume that \ALG terminates with at least one $\M$-bin.
Then, $\gain(\LT) + \eta \cdot (1-\B(\LG)) \gtrsim \cutintegral(\eta) +
\water(\min\{\thresh, \eta\}) + \Tmax$ for any $\eta \in (1/2, 1]$.
\end{lemma}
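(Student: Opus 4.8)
The plan is to start from the improved bound on $\gain(\LT)$ established in \cref{lem:gain-L-minus-T}, namely $\gain(\LT) \gtrsim \Tmax + \int_0^{\B(\LG)} \max\{\f(y), \thresh\}\, \dd y$, and then add the term $\eta \cdot (1-\B(\LG))$ to both sides. Writing $b = \B(\LG)$ for brevity, the right-hand side becomes $\Tmax + \int_0^{b} \max\{\f(y), \thresh\}\, \dd y + \eta \cdot (1-b)$, and since $\Tmax$ is carried along untouched, the core of the argument is to show that
\[
  \int_0^{b} \max\{\f(y), \thresh\}\, \dd y + \eta \cdot (1-b)
  \;\geq\; \cutintegral(\eta) + \water(\min\{\thresh, \eta\}).
\]
I would prove this by rewriting the left-hand side as a single integral $\int_0^1 h(y)\,\dd y$, where $h(y) = \max\{\f(y), \thresh\}$ on $[0,b]$ and $h(y) = \eta$ on $(b,1]$, and then comparing $h$ pointwise against the integrand $\min\{\f(y), \eta\}$ of $\cutintegral(\eta)$ plus the integrand $\max\{\min\{\thresh,\eta\} - \f(y), 0\}$ of $\water(\min\{\thresh,\eta\})$.

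The key step is a pointwise comparison of integrands. Let $\mu = \min\{\thresh, \eta\}$, so the target integrand is $g(y) := \min\{\f(y),\eta\} + \max\{\mu - \f(y), 0\}$. I would check $h(y) \geq g(y)$ separately on the two intervals. On $(b,1]$ we have $h(y) = \eta$; since $\min\{\f(y),\eta\} \leq \eta$ and $\max\{\mu - \f(y),0\} \leq \max\{\eta - \f(y), 0\} = \eta - \min\{\f(y),\eta\}$, we get $g(y) \leq \eta = h(y)$. On $[0,b]$ we have $h(y) = \max\{\f(y), \thresh\} \geq \max\{\f(y), \mu\}$. Now $g(y) = \min\{\f(y),\eta\} + \max\{\mu - \f(y), 0\}$: if $\f(y) \geq \mu$ then $g(y) = \min\{\f(y),\eta\} \leq \max\{\f(y),\mu\}$ (using $\mu \leq \eta$ in the subcase $\f(y) \geq \eta$, and $g(y) = \f(y)$ otherwise); if $\f(y) < \mu \leq \eta$ then $g(y) = \f(y) + \mu - \f(y) = \mu \leq \max\{\f(y),\mu\}$. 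In all cases $g(y) \leq h(y)$, so integrating over $[0,1]$ yields the displayed inequality. Here one also uses that $\cutintegral(\eta) = \int_0^1 \min\{\f(y),\eta\}\,\dd y$ is well-defined for $\eta \in (1/2,1]$ and $\water(\mu)$ is well-defined for $\mu = \min\{\thresh,\eta\}$, noting $\thresh \geq 1/2 + \smallBoundary \wedge (1 - \mn(\M)) > 1/2$ since $\mn(\M) \leq 1/2$, hence $\mu > 1/2$ as required by the definitions of $\cutintegral$ and $\water$.

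Adding $\Tmax$ back and invoking \cref{lem:gain-L-minus-T} (together with transitivity of $\gtrsim$, applied a constant number of times) gives $\gain(\LT) + \eta\cdot(1-\B(\LG)) \gtrsim \cutintegral(\eta) + \water(\min\{\thresh,\eta\}) + \Tmax$, which is the claim. The only mild subtlety — and the step I expect to require the most care — is the case analysis in the pointwise comparison, in particular keeping track of which of $\f(y)$, $\mu$, $\eta$, $\thresh$ dominates in each regime; everything else is bookkeeping of integrals already computed in \cref{lem:PQ-properties} and an application of the previous lemma. A geometric picture analogous to \cref{fig:F-vs-cutintegral}, stacking the $\thresh$-plateau on $[0,b]$ and the $\eta$-plateau on $[b,1]$ against the $\min\{\f,\eta\}$ curve with the $\water$ sliver on top, makes the inequality transparent and would be worth including.
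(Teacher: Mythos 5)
Your proposal is correct, and it starts from the same place as the paper's proof: both invoke \cref{lem:gain-L-minus-T}, carry $\Tmax$ along untouched, add $\eta\cdot(1-\B(\LG))$, and then must show that $\int_0^{\B(\LG)} \max\{\f(y),\thresh\}\,\dd y + \eta\cdot(1-\B(\LG)) \geq \cutintegral(\eta) + \water(\min\{\thresh,\eta\})$. Where you diverge is in how this integral inequality is established. The paper sets $h=\min\{\thresh,\eta\}$, treats the left-hand side as a function $A(\B(\LG))$, first argues $A(\B(\LG))\geq A(\f^{-1}(h))$ when $\B(\LG)<\f^{-1}(h)$, and then, in the remaining case $\B(\LG)\geq \f^{-1}(h)$, splits $\max\{h,\f\}=\f+\max\{h-\f,0\}$ and bounds the two pieces separately (the case split is needed there precisely so that the truncated integral $\int_0^{\B(\LG)}\max\{h-\f(y),0\}\,\dd y$ equals the full $\water(h)$). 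You instead write the left-hand side as $\int_0^1 h(y)\,\dd y$ for the piecewise function $h$ and verify the pointwise domination $h(y)\geq \min\{\f(y),\eta\}+\max\{\mu-\f(y),0\}$ with $\mu=\min\{\thresh,\eta\}$; your case analysis on which of $\f(y),\mu,\eta$ dominates is complete and correct, and it removes the need for the paper's extremal argument over $\B(\LG)$ altogether. So your route is a slightly more direct, self-contained proof of the same inequality, at the cost of a finer pointwise case check; the paper's route leans on monotonicity and reuses the $\fintegral/\cutintegral$ machinery of \cref{lem:PQ-properties}. One small inaccuracy: from $\mn(\M)\leq 1/2$ you only get $\thresh\geq 1/2$, not $\thresh>1/2$ (if $\mn(\M)=1/2$ then $\mu=1/2$); this is harmless since $\water$ extends continuously with $\water(1/2)=0$ and the argument goes through verbatim, but the strict inequality you state does not follow as written.
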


\begin{proof}
We fix any $\eta \in (1/2,1]$ and define $h = \min\{ \thresh, \eta \}$. By
\cref{lem:gain-L-minus-T}, $\gain(\LT) - \Tmax + \eta \cdot (1-\B(\LG)) \gtrsim
\int_0^{\B(\LG)} \max\{ \f(y), h \} \, \dd y + \int_{\B(\LG)}^1 \eta \,\dd y$.
We denote this lower bound by $A(\B(\LG))$ and we analyze it as a function of
$\B(\LG)$. When $\B(\LG) < \f^{-1}(h)$, then using $h \leq \eta$ we obtain
$A(\B(\LG)) = \int_0^{\B(\LG)} h \, \dd y + \int_{\B(\LG)}^1 \eta \,\dd y \geq
\int_0^{\f^{-1}(h)} h \, \dd y + \int_{\f^{-1}(h)}^1 \eta \,\dd y =
A(\f^{-1}(h))$. Therefore, we need to lower-bound the value of $A(\B(\LG))$ only
for $\B(\LG) \geq \f^{-1}(h)$. In such case, 
\begin{align*}
  A(\B(\LG)) 
  & = \int_0^{\B(\LG)} \max\{ h, \f(y)\} \, \dd y + \int_{\B(\LG)}^1 \eta \,\dd y \\
  & = \int_0^{\B(\LG)} \f(y) \, \dd y + \int_{\B(\LG)}^1 \eta \,\dd y + \int_0^{\B(\LG)} \max\{ h - \f(y), 0 \} \, \dd y \\
  & \geq \int_0^1 \min \{ \f(y), \eta \} + \int_0^1 \max\{ h - \f(y), 0 \} \, \dd y 
  = \cutintegral(\eta) + \water(h). 
\qedhere
\end{align*}
\end{proof}

\subsection{When RTA terminates without empty bins and with some \texorpdfstring{\M-bins}{MS-bins}}

The following lemma combines our bounds on gains on $\LT$, $\MT$, $\ST$ and $\M$.

\begin{lemma}
\label{lem:helper-bound}
Assume that $\ALG$ run on input $\sigma$ terminates without empty bins and with at least one $\M$-bin.
Then, for any $\eta \in (1/2, \level]$,
\[
\ALG(\sigma) \gtrsim \cutintegral(\eta) + \water(\min\{\eta, \thresh\}) + \Tmax + 
(\mn(\M) - \eta) \cdot \g(\mn(\M)).
\]
\end{lemma}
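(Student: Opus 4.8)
The plan is to decompose the total gain of \ALG across the four families of bins it uses when it terminates without empty bins and with an \M-bin, namely $\LT$, $\M$, $\MT$ and $\ST$, and then to feed each piece the strongest bound available from the earlier lemmas. Concretely, I would start from
\[
  \ALG(\sigma) \geq \gain(\LT) + \gain(\M) + \gain(\MT \uplus \ST),
\]
noting these sets are disjoint (since $\M \uplus \LT \supseteq \D$ and the non-marked non-large items split into $\MT$ and $\ST$). The aim is to absorb the $\MT \uplus \ST$ term into the $\LT$ term via \cref{lem:gain-L-improved}, using $\B(\MT \uplus \ST) \eqsim 1 - \B(\LG) - \B(\M)$ from \cref{lem:bin-count}, and then to handle the $\M$ term with \cref{lem:M-gain}.

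The key steps, in order, are as follows. First, fix $\eta \in (1/2, \level]$. By \cref{lem:stacking-gain}, $\gain(\MT \uplus \ST) \gtrsim \level \cdot \B(\MT \uplus \ST) \geq \eta \cdot \B(\MT \uplus \ST)$, where the last inequality uses $\eta \leq \level$ together with $\B(\MT \uplus \ST) \geq 0$. Second, by \cref{lem:M-gain}, $\gain(\M) \geq \mn(\M) \cdot \B(\M)$. Combining, and writing $\B(\MT \uplus \ST) \eqsim 1 - \B(\LG) - \B(\M)$ via \cref{lem:bin-count}, we get
\[
  \ALG(\sigma) \gtrsim \gain(\LT) + \mn(\M)\cdot \B(\M) + \eta \cdot (1 - \B(\LG) - \B(\M))
  = \gain(\LT) + \eta \cdot (1 - \B(\LG)) + (\mn(\M) - \eta)\cdot \B(\M).
\]
Third, apply \cref{lem:gain-L-improved} with this same $\eta$: since $\eta \in (1/2,1]$, it gives $\gain(\LT) + \eta \cdot (1 - \B(\LG)) \gtrsim \cutintegral(\eta) + \water(\min\{\thresh, \eta\}) + \Tmax$. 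Substituting yields
\[
  \ALG(\sigma) \gtrsim \cutintegral(\eta) + \water(\min\{\eta,\thresh\}) + \Tmax + (\mn(\M) - \eta)\cdot \B(\M).
\]
Fourth, to get the claimed form with $\g(\mn(\M))$ in place of $\B(\M)$: by \cref{lem:M-gain}, $\B(\M) \leq \g(\mn(\M))$. If $\mn(\M) - \eta \geq 0$ we would want an upper bound on $\B(\M)$, which goes the wrong way; but in that regime one instead uses $\B(\M) \geq 0$ is useless, so the honest move is that the coefficient $(\mn(\M)-\eta)$ will be handled by choosing $\eta \geq \mn(\M)$ when possible, and when $\eta < \mn(\M)$ one keeps $\B(\M)$ and separately notes $\gain(\M) \geq \mn(\M)\cdot\B(\M)$ was already spent — so the correct replacement is: when $\mn(\M) - \eta < 0$, we have $(\mn(\M)-\eta)\cdot\B(\M) \geq (\mn(\M)-\eta)\cdot\g(\mn(\M))$ (multiplying the inequality $\B(\M) \leq \g(\mn(\M))$ by the negative quantity $\mn(\M)-\eta$ flips it the right way); and when $\mn(\M)-\eta \geq 0$, we have $(\mn(\M)-\eta)\cdot\B(\M) \geq 0 \geq (\mn(\M)-\eta)\cdot\g(\mn(\M))$ fails too, so in fact the statement only needs $\mn(\M)\le\eta$ cases, or one re-derives using $\gain(\M)\ge \mn(\M)\cdot\B(\M)$ directly against $\gain(\M)\ge$ nothing. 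The cleanest route: simply note $(\mn(\M)-\eta)\cdot\B(\M) \gtrsim (\mn(\M)-\eta)\cdot\g(\mn(\M))$ whenever $\mn(\M) \leq \eta$, which is exactly the sign configuration under which this term is a genuine (negative) correction we must not over-count, and observe that for $\eta \leq \mn(\M)$ one instead replaces $\eta\cdot(1-\B(\LG))$-type bookkeeping — but since the lemma statement is stated for all $\eta \in (1/2,\level]$, the safe universal step is $(\mn(\M)-\eta)\cdot\B(\M) \geq (\mn(\M)-\eta)\cdot\g(\mn(\M))$ using $\B(\M)\le\g(\mn(\M))$ together with sign analysis.

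The main obstacle is precisely this last sign bookkeeping: \cref{lem:M-gain} only bounds $\B(\M)$ from above, so the substitution $\B(\M) \rightsquigarrow \g(\mn(\M))$ is legitimate only when the coefficient $\mn(\M) - \eta$ is non-positive. I expect the resolution to be that the author intends $\eta \geq \mn(\M)$ (or that the term is simply written as a possibly-negative correction and the inequality $(\mn(\M)-\eta)\cdot\B(\M) \geq (\mn(\M)-\eta)\cdot\g(\mn(\M))$ holds exactly when $\mn(\M)\le\eta$, which is the only case where this correction is active and harmful), so the proof reduces to: invoke \cref{lem:stacking-gain}, \cref{lem:M-gain}, \cref{lem:bin-count} to collapse everything onto $\gain(\LT) + \eta\cdot(1-\B(\LG))$ plus the $\M$-correction, then invoke \cref{lem:gain-L-improved}, then do the one-line sign argument on $(\mn(\M)-\eta)\cdot\B(\M)$ versus $(\mn(\M)-\eta)\cdot\g(\mn(\M))$ using $\B(\M) \le \g(\mn(\M))$. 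Everything else is routine algebra and the already-stated $\eqsim/\gtrsim$ calculus.
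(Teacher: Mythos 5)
Your decomposition and sequence of invocations (\cref{lem:stacking-gain} and \cref{lem:M-gain} to lower-bound $\gain(\MT\uplus\ST)$ and $\gain(\M)$, \cref{lem:bin-count} to rewrite $\eta\cdot\B(\MT\uplus\ST)+\mn(\M)\cdot\B(\M)$ as $\eta\cdot(1-\B(\LT))+(\mn(\M)-\eta)\cdot\B(\M)$, then \cref{lem:gain-L-improved} applied to $\gain(\LT)+\eta\cdot(1-\B(\LT))$) is exactly the paper's proof. The only place you hedge --- the sign of the coefficient $\mn(\M)-\eta$ when trading $\B(\M)$ for $\g(\mn(\M))$ --- is a non-issue: every item stored in an $\M$-bin is medium, so $\mn(\M)\leq 1/2<\eta$ for every admissible $\eta\in(1/2,\level]$, hence the coefficient is always strictly negative, and multiplying the bound $\B(\M)\leq\g(\mn(\M))$ of \cref{lem:M-gain} by it gives $(\mn(\M)-\eta)\cdot\B(\M)\geq(\mn(\M)-\eta)\cdot\g(\mn(\M))$ unconditionally; no case distinction and no extra hypothesis $\eta\geq\mn(\M)$ is needed. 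With that one-line observation replacing your speculation about the author's intent, your argument is complete and coincides with the paper's (which indeed leaves this last substitution implicit).
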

  
\begin{proof}
By the lemma assumptions,
\begin{align*}
      \ALG(\sigma)
      & \geq \gain(\LT) + \gain(\MT \uplus \ST) + \gain(\M) \\
      & \gtrsim \gain(\LT) +  \eta \cdot \B(\MT \uplus \ST) + \mn(\M) \cdot \B(\M) 
        &&& \text{(by L.~\ref{lem:stacking-gain} and L.~\ref{lem:M-gain})} \\
      & \eqsim \gain(\LT) + \eta \cdot(1-\B(\LT))
        + (\mn(\M) - \eta) \cdot\B(\M). 
        &&& \text{(by L.~\ref{lem:bin-count})} 
\end{align*}
Applying the guarantee of \cref{lem:gain-L-improved} to $\gain(\LT) + \eta \cdot(1-\B(\LT))$
concludes the proof.
\end{proof}

\begin{lemma}
\label{lem:small-ms}
Assume that $\ALG$ run on input $\sigma$ terminates 
without empty bins and with at least one \M-bin.
If $\mn(\M) \leq 1/3$, then $\ALG(\sigma) \gtrsim \R$.
\end{lemma}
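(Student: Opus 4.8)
The plan is to apply Lemma~\ref{lem:helper-bound} with a carefully chosen value of $\eta$ and then discharge the resulting inequality using the appropriate boundary conditions from Lemma~\ref{lem:g-properties}. Since we are in the regime $\mn(\M) \le 1/3$, the relevant property should be Property~\ref{item:g-upper-bound}, which states $2/3 - (2/3 - \smallBoundary) \cdot \g(x) \ge \R$ for $x \in [\smallBoundary, 1/3]$, together with Property~\ref{item:g-prop-0} and Property~\ref{item:g-prop-phi}. First I would observe that $\mn(\M) \ge \smallBoundary$ always (it is a medium item), so $\mn(\M) \in [\smallBoundary, 1/3]$ and $\g(\mn(\M))$ is well defined. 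I would also note that $\Tmax = 0$ in this case: $\Tmax$ is nonzero only when $\mn(\M) > 1/3$, which is excluded.

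Next, I would pick $\eta = \stack(\mn(\M)) = \max\{2/3, 2\mn(\M)\}$; since $\mn(\M) \le 1/3$ this is simply $\eta = 2/3$, so $\eta \in (1/2, 1]$ as required, and I need to check $\eta \le \level$. By the definition of $\level$ in~\eqref{eq:level-def}, if $\D$ has no tight item then $\level = 1 \ge 2/3$; if $\D$ has a tight item then $\level \ge \min\{\stack(\mt), 1-\smallBoundary\}$, and since $\stack$ is non-decreasing and $\mn(\M) \le 1/3$ forces... here I should be careful, because $\mt$ need not be comparable to $\mn(\M)$ directly. Actually $\stack(\mt) \ge 2/3$ always (by definition of $\stack$), and $1 - \smallBoundary \approx 0.78 \ge 2/3$, so $\level \ge 2/3 = \eta$ in every case. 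Good. Plugging $\eta = 2/3$ and $\Tmax = 0$ into Lemma~\ref{lem:helper-bound} gives
\[
\ALG(\sigma) \gtrsim \cutintegral(2/3) + \water(\min\{2/3, \thresh\}) + (\mn(\M) - 2/3) \cdot \g(\mn(\M)).
\]
Since $\mn(\M) - 2/3 \le 1/3 - 2/3 < 0$ and $\g$ is non-negative, the last term is at most $0$; more usefully I want a matching lower bound, so I would write $(\mn(\M) - 2/3)\cdot\g(\mn(\M)) = -(2/3 - \mn(\M))\cdot\g(\mn(\M)) \ge -(2/3 - \smallBoundary)\cdot\g(\mn(\M))$ using $\mn(\M) \ge \smallBoundary$ and $\g(\mn(\M)) \ge 0$. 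Dropping the non-negative $\water$ term and using $\cutintegral(2/3) = 2/3 - \R\cdot(2/3)\cdot\ln(4/3)$ from Lemma~\ref{lem:PQ-properties}, it remains to check
\[
2/3 - \R\cdot(2/3)\cdot\ln(4/3) - (2/3 - \smallBoundary)\cdot\g(\mn(\M)) \ge \R.
\]
Hmm — this is not literally Property~\ref{item:g-upper-bound}, which has bare $2/3$, not $\cutintegral(2/3)$. So the cleaner route is: since $\cutintegral$ is increasing and $\cutintegral(1) = \R$, we have $\cutintegral(2/3) \le \R$, which goes the wrong way. Let me reconsider: the intended bound is surely to use the $\water$ term too, i.e.\ $\cutintegral(2/3) + \water(2/3) = 2/3$ by Property~\ref{item:g-prop-0} of Lemma~\ref{lem:PQ-properties} — wait, that is Property~3 of Lemma~\ref{lem:PQ-properties}: $\cutintegral(c) + \water(c) = c$. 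So if $\thresh \ge 2/3$, then $\min\{2/3,\thresh\} = 2/3$ and $\cutintegral(2/3) + \water(2/3) = 2/3$, giving exactly $\ALG(\sigma) \gtrsim 2/3 - (2/3-\smallBoundary)\cdot\g(\mn(\M)) \ge \R$ by Property~\ref{item:g-upper-bound}.

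The main obstacle, then, is the case $\thresh < 2/3$. Recall $\thresh = \min\{1 - \mn(\M), 1/2 + \smallBoundary\}$. Since $\smallBoundary \approx 0.219$, $1/2 + \smallBoundary \approx 0.719 > 2/3$, so $\thresh < 2/3$ can only happen via $1 - \mn(\M) < 2/3$, i.e.\ $\mn(\M) > 1/3$ — but that is excluded by hypothesis! Hence $\thresh \ge 1 - 1/3 = 2/3$ always in this lemma, $\min\{2/3,\thresh\} = 2/3$, and the obstacle evaporates. So the proof reduces to the clean chain above; the only care needed is verifying $\eta = 2/3 \le \level$ (handled by $\stack \ge 2/3$ and $1-\smallBoundary \ge 2/3$) and assembling the constants via Lemma~\ref{lem:PQ-properties}(3) and Property~\ref{item:g-upper-bound} of Lemma~\ref{lem:g-properties}. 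I would present it as a short direct computation.

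\begin{proof}
Since the sole item in any $\M$-bin is medium, $\mn(\M) \geq \smallBoundary$; combined with the hypothesis, $\mn(\M) \in [\smallBoundary, 1/3]$, so $\g(\mn(\M))$ is defined and non-negative. Moreover $\Tmax = 0$, as $\Tmax \neq 0$ requires $\mn(\M) > 1/3$.

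Set $\eta = 2/3 \in (1/2,1]$. We verify $\eta \leq \level$: if $\D$ has no tight item then $\level = 1$; otherwise $\level \geq \min\{\stack(\mt), 1-\smallBoundary\} \geq \min\{2/3, 1-\smallBoundary\} = 2/3$, since $\stack(\mt) \geq 2/3$ by definition of $\stack$ and $\smallBoundary < 1/3$. Also, since $\mn(\M) \leq 1/3$ and $1/2 + \smallBoundary > 2/3$, we have $\thresh = \min\{1-\mn(\M), 1/2+\smallBoundary\} \geq \min\{2/3, 1/2+\smallBoundary\} = 2/3 = \eta$, so $\min\{\eta, \thresh\} = \eta = 2/3$.

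Applying \cref{lem:helper-bound} with this $\eta$, using $\Tmax = 0$ and the third property of \cref{lem:PQ-properties} (namely $\cutintegral(2/3) + \water(2/3) = 2/3$), we obtain
\[
\ALG(\sigma) \gtrsim \cutintegral(2/3) + \water(2/3) + (\mn(\M) - 2/3)\cdot\g(\mn(\M)) = 2/3 - (2/3 - \mn(\M))\cdot\g(\mn(\M)).
\]
Since $\g(\mn(\M)) \geq 0$ and $\mn(\M) \geq \smallBoundary$, we have $(2/3 - \mn(\M))\cdot\g(\mn(\M)) \leq (2/3 - \smallBoundary)\cdot\g(\mn(\M))$, hence
\[
\ALG(\sigma) \gtrsim 2/3 - (2/3 - \smallBoundary)\cdot\g(\mn(\M)) \geq \R,
\]
where the last inequality is \cref{item:g-upper-bound} of \cref{lem:g-properties} applied at $x = \mn(\M) \in [\smallBoundary, 1/3]$.
\end{proof}
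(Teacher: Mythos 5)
Your proof is correct and follows essentially the same route as the paper's: apply \cref{lem:helper-bound} with $\eta = 2/3$, observe $\thresh \geq 2/3$ when $\mn(\M) \leq 1/3$, collapse $\cutintegral(2/3)+\water(2/3)=2/3$ via \cref{lem:PQ-properties}, and finish with \cref{item:g-upper-bound} of \cref{lem:g-properties}. You merely spell out a few steps the paper leaves implicit (that $\level \geq 2/3$, that $\Tmax = 0$, and the monotonicity step replacing $\mn(\M)$ by $\smallBoundary$), which is fine.
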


\begin{proof}
As $\level \geq 2/3$, we may apply \cref{lem:helper-bound} 
with $\eta = 2/3$. Note that $\thresh \geq 2/3$ for $\min(\M) \leq 1/3$. Then,
\begin{align*}
  \ALG(\sigma)
  & \gtrsim \cutintegral(2/3) + \water(2/3) + (\mn(\M) - 2/3) \cdot \g(\mn(\M)) \\
  & = 2/3 + (\smallBoundary - 2/3) \cdot \g(\mn(\M)) \geq \R.
      &&& \text{(by L.~\ref{lem:PQ-properties})} 
\end{align*}
The last inequality follows by \cref{item:g-upper-bound} of \cref{lem:g-properties}.
\end{proof}

\begin{lemma}
\label{lem:large-ms}
Assume that $\ALG$ run on input $\sigma$ terminates 
without empty bins and with at least one \M-bin.
If $\mn(\M) > 1/3$, then $\ALG(\sigma) \gtrsim \R$.
\end{lemma}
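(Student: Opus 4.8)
The structure should mirror \cref{lem:small-ms}: invoke \cref{lem:helper-bound}, but now with a choice of $\eta$ tuned to the regime $\mn(\M) > 1/3$. Write $m = \mn(\M)$ for brevity. The natural candidate is $\eta = m$, which makes the last term $(\mn(\M) - \eta) \cdot \g(\mn(\M))$ vanish and leaves
\[
\ALG(\sigma) \gtrsim \cutintegral(m) + \water(\min\{m, \thresh\}) + \Tmax,
\]
provided $\eta = m$ is a legal choice, i.e.\ $m \in (1/2, \level]$. Since $m \le 1/2$ this is a boundary choice that I would justify by a limiting argument, or equivalently apply \cref{lem:helper-bound} with $\eta \to m^+$ (more precisely with $\eta = m - 1/n$ — recall $\level \ge 2/3 > m$, so such $\eta$ is admissible and $(\mn(\M) - \eta)\cdot\g(\mn(\M)) = O(1/n)$). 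I would also need $\eta > 1/2$, which holds here since $m > 1/3$ is \emph{not} enough on its own — wait, $m \le 1/2$ but we need $\eta \in (1/2,\cdot]$; this forces a different choice. The cleanest fix: take $\eta = 1/2 + \smallBoundary$ when that is $\le \level$, or fall back to $\eta = \level$ otherwise, and separately handle the contribution of $(m - \eta)\cdot\g(m)$, which is now negative. So the real plan is a case analysis on whether $\thresh = 1 - m$ or $\thresh = 1/2 + \smallBoundary$, equivalently on whether $m \le 1/2 - \smallBoundary$ or $m > 1/2 - \smallBoundary$.

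In the first sub-case, $m \le 1/2 - \smallBoundary$, i.e.\ $1 - m \ge 1/2 + \smallBoundary$, so $\thresh = 1/2 + \smallBoundary \ge 1 - \smallBoundary \ge \level$ is not generally true; instead I would pick $\eta$ around $\stack(m) = \max\{2/3, 2m\} = 2m$ (since $m > 1/3$) and connect to \cref{item:g-prop-1} of \cref{lem:g-properties}: taking $\eta = 2m$ (legal since $1/2 < 2m \le \level$ when $\mt$-bounds permit — here $\level = 1$ or $\stack(\mt) \ge 2/3$, and $2m > 2/3$ so this needs $\level \ge 2m$, which may fail), after which \cref{lem:helper-bound} gives $\cutintegral(2m) + \water(\min\{2m,\thresh\}) + \Tmax + (m - 2m)\cdot\g(m) = \cutintegral(2m) + \water(1-m) - m\cdot\g(m) + \Tmax \ge \R$ by \cref{item:g-prop-1}, since $\Tmax \ge 0$ and $\min\{2m,\thresh\} = \min\{2m, 1-m\} = 1-m$ exactly when $m \le 1/3$ — which contradicts this case, so actually $\min\{2m, 1-m\}$ could be $2m$. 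The upshot is that the choice of $\eta$ must be coordinated with both $\thresh$ and $\level$, and each resulting inequality is designed to land on exactly one of \cref{item:g-prop-1}, \cref{item:g-2dim-1}, or \cref{item:g-2dim-2} of \cref{lem:g-properties}.

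More carefully, I expect three sub-cases driven by $\level$. (i) If $\level = 1$ (no tight item — but then $\Tmax = 0$ and $\thresh$ still constrains; here $\mt$ undefined forces $\Tmax = 0$), pick $\eta$ maximal. (ii) If $\level = \stack(\mt) = \max\{2/3, 2\mt\}$ with no $\ST$-bins, then $\mt \le \mn(\MT)$ is the binding quantity, and depending on whether $\mt \ge 1/3$, I pick $\eta = \stack(\mt)$; plugging in turns the bound into $\cutintegral(\stack(\mt)) + \water(\min\{\stack(\mt),\thresh\}) + \T(m,\mt) + (m - \stack(\mt))\g(m)$, which after identifying $\thresh$ should match \cref{item:g-2dim-2} with $x = m, y = \mt$ — note $y \in [\smallBoundary, x]$ requires $\mt \le m$, which is exactly the condition under which $\Tmax = \T(m,\mt) > 0$ contributes. (iii) If $\level = \min\{\stack(\mt), 1-\smallBoundary\} = 1 - \smallBoundary$ ($\ST$-bins exist), then $\mt \lesssim 2\smallBoundary$ by \cref{lem:mtD-min-stacked}, pick $\eta = 1 - \smallBoundary$, get $\cutintegral(1-\smallBoundary) + \water(\min\{1-\smallBoundary, 1-m\}) + \Tmax + (m - (1-\smallBoundary))\g(m) = \cutintegral(1-\smallBoundary) + \water(1-m) + (m + \smallBoundary - 1)\g(m) + \Tmax$, and since $\Tmax = \max\{\T(m,\mt),0\}$ with $\mt \in [\smallBoundary, 2\smallBoundary]$, this is exactly \cref{item:g-2dim-1} with $x = m, y = \mt$.

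\textbf{Main obstacle.} The delicate point is legality of the chosen $\eta$: \cref{lem:helper-bound} requires $\eta \in (1/2, \level]$, and in the regime $m > 1/3$ the value $\stack(\mt)$ or $1-\smallBoundary$ need not exceed $1/2$ trivially (it does: $1-\smallBoundary > 1/2$ and $\stack(\mt) \ge 2/3 > 1/2$, so the lower bound $\eta > 1/2$ is fine), but the \emph{upper} bound $\eta \le \level$ must be checked in each sub-case, and when it fails I must instead take $\eta$ smaller and absorb a now-negative residual term $(m - \eta)\g(m)$, re-deriving a weaker inequality that still matches one of the $\g$-properties. Coordinating the three interacting thresholds — $\thresh$ (governed by $m$ vs.\ $1/2 - \smallBoundary$), $\level$ (governed by $\mt$ and whether $\ST \ne \emptyset$), and the sign of $\Tmax$ (governed by $\mt$ vs.\ $m$) — into a clean case split that exhausts all possibilities and in each case reduces \emph{exactly} to one of \cref{item:g-2dim-1}, \cref{item:g-2dim-2}, \cref{item:g-prop-1} (with $\Tmax \ge 0$ as slack where needed) is the real content of the proof; the rest is substitution via \cref{lem:PQ-properties}. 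I would organize it as: first dispose of $\mt$ undefined (then $\level = 1$, $\Tmax = 0$, and $\eta = 1$ plus \cref{item:g-prop-1} or a direct argument works); then split on $\ST = \emptyset$ vs.\ $\ST \ne \emptyset$; within each, split on $\mt$ vs.\ $m$ to fix the sign of $\Tmax$ and pick the matching $\g$-property.
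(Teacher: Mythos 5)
Your plan follows the paper's proof in outline: apply \cref{lem:helper-bound} with an $\eta$ tuned to the case, identify $\thresh = 1-\mn(\M)$ (valid since $\mn(\M) > 1/3$ gives $1-\mn(\M) < 2/3 < 1/2+\smallBoundary$), and split on whether a tight item exists, whether $\level$ is capped at $1-\smallBoundary$ by \ST-bins, and the sign of $\Tmax$ (i.e.\ $\mt$ versus $\mn(\M)$), landing each branch on \cref{item:g-2dim-1}, \cref{item:g-2dim-2} or \cref{item:g-prop-1} of \cref{lem:g-properties}. The paper does exactly this, choosing $\eta = 1-\smallBoundary$ when $\level < \stack(\mt)$, $\eta = \stack(\mt)$ when $\level \geq \stack(\mt)$ and $\mn(\M) > \mt$, and $\eta = 2\mn(\M)$ both when no tight item exists and when $\mn(\M) \leq \mt$ (admissible because then $\level \geq \stack(\mt) \geq \stack(\mn(\M)) = 2\mn(\M)$ by monotonicity of $\stack$).

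As written, though, the proposal has two concrete loose ends. First, the worry that $\min\{2\mn(\M), 1-\mn(\M)\}$ ``could be $2\mn(\M)$'' is backwards: for $\mn(\M) > 1/3$ one has $2\mn(\M) > 2/3 > 1-\mn(\M)$, so the $\water$-term is $\water(1-\mn(\M))$ exactly as \cref{item:g-prop-1} requires; your stated criterion (equality to $1-\mn(\M)$ ``exactly when $\mn(\M) \le 1/3$'') is the wrong direction, and the obstacle you flag does not exist. Second, in your final organization the no-tight-item case is handled by ``$\eta = 1$ plus \cref{item:g-prop-1}'', which does not match: with $\eta = 1$ the residual term is $(\mn(\M)-1)\cdot\g(\mn(\M))$ and the resulting inequality is not among the properties of \cref{lem:g-properties}; the choice that works (and which you mention earlier but then drop) is $\eta = 2\mn(\M) \leq 1 = \level$. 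Relatedly, your sub-case split by ``$\ST = \emptyset$ versus $\ST \neq \emptyset$'' leaves the configuration $\ST \neq \emptyset$ with $\stack(\mt) \leq 1-\smallBoundary$ (so $\level = \stack(\mt)$) only implicitly covered; the paper's split on $\level < \stack(\mt)$ versus $\level \geq \stack(\mt)$, refined by $\mn(\M)$ versus $\mt$, is the clean exhaustive case analysis you were reaching for. With these repairs your argument coincides with the paper's.
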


\begin{proof}
As $\mn(M) > 1/3$, $\thresh < 1-\mn(\M) \geq 2/3$. 
\cref{lem:helper-bound} applied with any $\eta \in [2/3, \level]$ yields
\begin{align}
\label{eq:large-ms}
  \ALG(\sigma)
  & \gtrsim \cutintegral(\eta) + \water(1-\mn(\M)) + \Tmax + (\mn(\M) - \eta) \cdot \g(\mn(\M)).
\end{align}
First, we assume that $\D$ has no tight items.
Then, $\level = 1$, and we may use \eqref{eq:large-ms} with $\eta = 2 \cdot \mn(\M)$ obtaining
$\ALG(\sigma) \geq \cutintegral(2\cdot\mn(\M)) + \water(1-\mn(\M)) - \mn(\M) \cdot \g(\mn(\M)) \geq \R$,
where the last inequality follows by \cref{item:g-prop-1} of \cref{lem:g-properties}.

Second, we assume that $\D$ contains a tight item and we consider three cases.
\begin{itemize}
\item $\level < \stack(\mt)$. 
This relation is possible only when $\level = 1-\smallBoundary$ and \ALG terminates with at
least one $\ST$-bin. 
In this case, \cref{lem:mtD-min-stacked} implies that $\mt \leq 2\smallBoundary$. 
We apply \eqref{eq:large-ms} with $\eta = 1-\smallBoundary$ obtaining
$\ALG(\sigma) \geq \cutintegral(1-\smallBoundary) + \water(1-\mn(\M)) + \Tmax + (\mn(\M) + \smallBoundary - 1) \cdot \g(\mn(\M))$, which is at least $\R$ by \cref{item:g-2dim-1} of \cref{lem:g-properties}. 

\item $\level \geq \stack(\mt)$ and $\mn(\M) \leq \mt$.
Using monotonicity of $\stack$, $\level \geq \stack(\mn(\M)) =
2\cdot \mn(\M)$. 
Applying \eqref{eq:large-ms} with $\eta = 2\cdot \mn(\M)$ yields
$\ALG(\sigma) \gtrsim \R$ by \cref{item:g-prop-1} of \cref{lem:g-properties}.

\item $\level \geq \stack(\mt)$ and $\mn(M) > \mt$.
In this case, $\Tmax = \T(\min(\M), \mt)$. 
Applying \eqref{eq:large-ms} with $\eta = \stack(\mt)$ yields
$\ALG(\sigma) \gtrsim \cutintegral(\stack(\mt)) + \water(1-\mn(\M)) + \T(\min(\M), \mt) + (\mn(\M) -
\stack(\mt) \cdot \g(\mn(\M))$, which is at least $\R$
by \cref{item:g-2dim-2} of \cref{lem:g-properties}.
\qedhere
\end{itemize}
\end{proof}


\section{Competitive ratio of RTA}
\label{sec:final}

\begin{theorem}
The competitive ratio of \ALG for the multiple knapsack problem is at least $\R-O(1/n)$. 
\end{theorem}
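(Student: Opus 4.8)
The plan is to argue that the four lemmas proved in \cref{sec:large1,sec:medium,sec:large2} together cover every possible way \ALG can terminate, and that in each case \ALG is $(\R - O(1/n))$-competitive. Since \ALG halts only when the input stream ends or when no empty bin remains, there are exactly two top-level cases: either \ALG terminates with at least one empty bin, or it terminates with no empty bins. The first case is handled directly by \cref{lem:empty}, which already states $(\R - O(1/n))$-competitiveness. So the remaining work is to dispatch the no-empty-bins case.

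For the no-empty-bins case, I would split on whether \ALG has any $\M$-bin at termination. If there is no $\M$-bin, \cref{lem:no-m-bin} gives $\ALG(\sigma) \gtrsim \R$. If there is at least one $\M$-bin, then $\mn(\M)$ is well defined, and I split once more on the value of $\mn(\M)$ relative to $1/3$: if $\mn(\M) \leq 1/3$, apply \cref{lem:small-ms}; if $\mn(\M) > 1/3$, apply \cref{lem:large-ms}. In both sub-cases we again get $\ALG(\sigma) \gtrsim \R$. Note these last two lemmas require that \ALG terminates without empty bins and with at least one $\M$-bin, which is exactly the regime we are in, so the hypotheses are met.

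It remains to translate the bound $\ALG(\sigma) \gtrsim \R$ into the competitive guarantee. In every no-empty-bins sub-case, \OPT can gain at most $n$ in total load, i.e.\ $\OPT(\sigma) \leq 1$ in the gain normalization (total load divided by $n$). Combining $\ALG(\sigma) \geq \R - O(1/n)$ with $\OPT(\sigma) \leq 1$ yields $\ALG(\sigma) \geq (\R - O(1/n)) \cdot \OPT(\sigma)$, which is the claimed ratio. In the empty-bins case the same conclusion is already supplied by \cref{lem:empty}. Thus across all cases \ALG is $(\R - O(1/n))$-competitive, proving the theorem.

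The main obstacle here is not really an obstacle so much as a bookkeeping check: one must be certain that the case split is genuinely exhaustive and that the lemma hypotheses line up exactly — in particular that ``terminates with an empty bin'' versus ``terminates without empty bins, no $\M$-bin'' versus ``terminates without empty bins, $\M$-bin with $\mn(\M)\leq 1/3$'' versus ``\dots $\mn(\M) > 1/3$'' partition the space of outcomes with no gaps and no overlaps. Once that is verified, the proof is a short assembly of the four lemmas plus the trivial bound $\OPT(\sigma) \leq 1$.
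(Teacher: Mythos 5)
Your proposal is correct and follows essentially the same argument as the paper: the empty-bins case is dispatched by \cref{lem:empty}, the no-empty-bins case is split into no \M-bin (\cref{lem:no-m-bin}), \M-bin with $\mn(\M)\leq 1/3$ (\cref{lem:small-ms}), and \M-bin with $\mn(\M)>1/3$ (\cref{lem:large-ms}), and the bound $\OPT(\sigma)\leq 1$ converts $\ALG(\sigma)\gtrsim \R$ into the competitive ratio. The case analysis is exhaustive as you note, so there is nothing missing.
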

  
\begin{proof}
Fix an input $\sigma$. If $\ALG(\sigma)$ terminates 
with some empty bins, then its competitive ratio
follows by \cref{lem:empty}.

Hence, below we assume that \ALG terminates without empty bins.
We presented three lemmas that cover all possible cases:
there are no \M-bins (\cref{lem:no-m-bin}),
there are \M-bins and $\mn(\M) \leq 1/3$ (\cref{lem:small-ms}),
and there are \M-bins and $\mn(\M) > 1/3$ (\cref{lem:large-ms}).
In all these cases, we proved $\ALG(\sigma) \gtrsim \R$.
As $\OPT(\sigma)  \leq 1$, the theorem follows.
\end{proof}


\section{Proof of Lemma 7}
\label{sec:proofs}

We start with technical helper claims.

\begin{fact}
  The functions below are the derivatives of functions $\cutintegral$, $\water$ and $\g$, respectively.
  \begin{align*}
    \cutintegral'(x) & =  -\R \cdot \ln x \\
    \water'(x) & =  1+\R\cdot \ln x \\
    \g'(x) & =
    \begin{cases}
      -\gconst / x^2 & \text{if $x < 1/3$} , \\
      -18 \gconst & \text{if $x > 1/3$}
    \end{cases}
  \end{align*}
  \end{fact}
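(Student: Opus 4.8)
The plan is to obtain all three derivative formulas by straightforward differentiation of closed forms that are already available. By \cref{lem:PQ-properties} we have $\cutintegral(x) = x - \R \cdot x \cdot \ln(2x)$ and $\water(x) = \R \cdot x \cdot \ln(2x)$, and $\g$ is given explicitly piecewise by \eqref{eq:g-definition}. So the whole statement reduces to three elementary computations, which I would carry out in that order.

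For $\cutintegral$ I would apply the product and chain rules, using $\tfrac{\dd}{\dd x}\ln(2x) = 1/x$, to get $\cutintegral'(x) = 1 - \R \cdot (\ln(2x) + 1) = (1 - \R) - \R \ln(2x)$. The one step that is not purely mechanical is to rewrite the constant: since $\R = 1/(1+\ln 2)$, i.e.\ $\R(1+\ln 2) = 1$, we have $1 - \R = \R \ln 2$. Substituting this and expanding $\ln(2x) = \ln 2 + \ln x$ collapses the expression to $\R \ln 2 - \R \ln 2 - \R \ln x = -\R \ln x$, as claimed. The computation for $\water$ is the mirror image: $\water'(x) = \R \cdot (\ln(2x) + 1) = \R \ln x + (\R \ln 2 + \R) = \R \ln x + \R(1+\ln 2) = 1 + \R \ln x$, again using $\R(1+\ln 2) = 1$.

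For $\g$ I would differentiate the two branches of \eqref{eq:g-definition} independently. On $[\smallBoundary, 1/3)$ we have $\g(x) = \gconst / x = \gconst x^{-1}$, hence $\g'(x) = -\gconst x^{-2} = -\gconst / x^2$; on $(1/3, 1/2]$ we have $\g(x) = 9\gconst(1 - 2x) = 9\gconst - 18\gconst x$, hence $\g'(x) = -18\gconst$. The breakpoint $x = 1/3$ is excluded precisely because $\g$, though continuous there, is not differentiable at that point. Overall there is no genuine obstacle: the only place where care is needed is recognizing that the stray additive constants in $\cutintegral'$ and $\water'$ must be re-expressed via the identity $1 - \R = \R \ln 2$ (equivalently $\R \ln 2 + \R = 1$) before the formulas take their stated shape, and everything else is term-by-term differentiation that I would present in a couple of lines.
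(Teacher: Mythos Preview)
Your proposal is correct. The paper states this as a Fact without proof, treating the computations as routine; your argument supplies exactly those details (differentiating the closed forms from \cref{lem:PQ-properties} and the piecewise definition \eqref{eq:g-definition}, then simplifying via $\R(1+\ln 2)=1$), so there is no discrepancy in approach.
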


  \begin{lemma}
  \label{lem:T-non-increasing}
  $\T(x,y)$ is a non-increasing function of $y$ in the interval $[\smallBoundary,1/2]$.
  \end{lemma}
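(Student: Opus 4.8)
The plan is to fix $x \in [\smallBoundary, 1/2]$ and show that $\T(x,\cdot)$ has non-positive derivative in $y$. Since $\g$ is continuous (so $y \mapsto \T(x,y) = (x+y-1/2)(\g(y)-\g(x))$ is continuous) and differentiable away from the single breakpoint $y = 1/3$ of $\g$, it suffices to check the sign of
\[
  \partial_y \T(x,y) \;=\; \big(\g(y)-\g(x)\big) \;+\; (x+y-1/2)\cdot \g'(y)
\]
separately on $(1/3, 1/2)$ and on $(\smallBoundary, 1/3)$, substituting the explicit formulas for $\g$ and $\g'$ recorded in the Fact above.

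On $(1/3,1/2)$ we have $\g(y) = 9\gconst(1-2y)$ and $\g'(y) = -18\gconst$, so $\partial_y\T(x,y)$ is affine in $y$; expanding and collecting terms, $\partial_y\T(x,y) \le 0$ is equivalent to $18\gconst\,(1-2y-x) \le \g(x)$. Here I would split on the position of $x$: if $x \ge 1/3$ then $1-2y < 1/3 \le x$, so the left-hand side is negative while $\g(x)\ge 0$; if $x < 1/3$ then $\g(x) = \gconst/x$, the claim becomes $18x(1-2y-x)\le 1$, and this follows from $1-2y-x < 1/3-x$ together with the fact that the parabola $18x(1/3-x)$, whose vertex lies at $x=1/6 < \smallBoundary$, is decreasing on $[\smallBoundary,1/3]$ and hence bounded by its value at $x=\smallBoundary$, which is comfortably below $1$.

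On $(\smallBoundary,1/3)$ we have $\g(y)=\gconst/y$ and $\g'(y)=-\gconst/y^2$, and a pleasant cancellation occurs: the two $\gconst/y$ contributions disappear and the derivative collapses to
\[
  \partial_y\T(x,y) \;=\; \frac{\gconst\,(1/2-x)}{y^2} \;-\; \g(x).
\]
As $x\le 1/2$, the right-hand side is a non-increasing function of $y$, so over this interval $\partial_y\T$ is maximized as $y \to \smallBoundary$, and it suffices to show $\gconst\,(1/2-x)/\smallBoundary^2 \le \g(x)$. At this point I would unfold the definition of $\smallBoundary$: simplifying the denominator of \eqref{eq:small_boundary} with the help of \eqref{eq:gconst_def} yields the clean form $\smallBoundary = \gconst/(\R\ln(4/3))$, so $\smallBoundary^2$ has a closed expression; substituting it together with the relevant piece of $\g$ ($\g(x)=\gconst/x$ for $x\le 1/3$, $\g(x)=18\gconst(1/2-x)$ for $x>1/3$) reduces the inequality to a one-variable statement in $x$.

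I expect this last step to be the main obstacle. The interval above $1/3$ is a purely structural sign argument, but below $1/3$ the factor $1/y^2$ grows without bound as $y\to\smallBoundary$, so the bound is tight exactly at $y=\smallBoundary$; this is the only place where genuine numerical verification, rather than monotonicity bookkeeping, enters, and it is precisely where the specific value of $\smallBoundary$ must be used. If the tight estimate on $(\smallBoundary,1/3)$ turns out to require $x$ bounded away from $1/3$, one would instead restrict attention to $y \ge 1/3$, which is the range in which $\T(x,\cdot)$ is ultimately invoked in the proof of \cref{lem:g-properties}.
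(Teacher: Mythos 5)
Your handling of the interval $(1/3,1/2)$ is correct, but the plan collapses on $(\smallBoundary,1/3)$, precisely at the step you postponed to numerics. The inequality you reduce to there, $\gconst(1/2-x)/\smallBoundary^2 \le \g(x)$, is false for every $x \in [\smallBoundary,1/2)$: for $x\le 1/3$ it rearranges to $x(1/2-x)\le\smallBoundary^2$, yet $x(1/2-x)\ge 1/18 > \smallBoundary^2 \approx 0.048$ throughout $[\smallBoundary,1/3]$, and for $x>1/3$ it rearranges to $\smallBoundary\ge 1/\sqrt{18}\approx 0.2357$, whereas $\smallBoundary\approx 0.219$. Moreover, your exact formula $\partial_y\T(x,y)=\gconst(1/2-x)/y^2-\g(x)$ shows this is not a failure of your estimates: the derivative is strictly positive for $y$ slightly above $\smallBoundary$ for every $x<1/2$ (concretely, $\T(2/5,\,0.23)>\T(2/5,\,0.22)$), so $\T(x,\cdot)$ genuinely increases on part of $[\smallBoundary,1/4)$, and uniformly in $x$ the non-increase only begins at $y=1/4$. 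Your fallback of restricting to $y\ge 1/3$ also does not match the actual usage: \cref{item:g-2dim-1} of \cref{lem:g-properties} invokes the lemma on $y\in[\smallBoundary,2\smallBoundary]$, and \cref{item:g-2dim-2} compares $\T(x,y)$ with $\T(x,1/3)$ for $y<1/3$. What those applications really need is that the minimum of $\T(x,\cdot)$ over the relevant interval sits at its right endpoint; since your derivative formula shows $\T(x,\cdot)$ increases and then decreases on $[\smallBoundary,1/2]$, this reduces to an endpoint comparison, but that is a weaker statement than the lemma and would need its own verification.

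For comparison, the paper's own proof is a shortcut along the same piecewise-derivative route: it claims $\partial_y\T(x,y)\le\g(y)+y\g'(y)$ by discarding $-\g(x)$ and $(x-1/2)\g'(y)$, and then notes the right-hand side is $0$ for $y<1/3$ and negative for $y>1/3$. Discarding $-\g(x)\le 0$ is legitimate, but discarding the non-negative term $(x-1/2)\g'(y)$ from an upper bound goes the wrong way, and your exact computation on $(\smallBoundary,1/3)$ is exactly the witness that this step, and with it the lemma on its stated domain, fails. So your attempt is more careful than the paper's argument, and the obstacle you ran into is a real defect of the stated claim rather than a missing trick on your side; a correct write-up would have to prove and use the weaker endpoint/unimodality statement that the later proofs actually require (or restrict the monotonicity claim to $y\ge 1/4$).
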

  
  \begin{proof}
  Recall that $\T(x,y) = (x+y-1/2) \cdot (\g(y) - \g(x))$ is defined for $x,y \in [\smallBoundary,1/2]$.
  As the function $\T(x,y)$ is continuous and differentiable everywhere except $y = 1/3$,
  it suffices to show that its partial derivative $\partial_y \T(x,y)$ is non-positive (except $y = 1/3$).
  We have 
  \[
    \partial_y \T(x,y) = \g(y) - \g(x) + (x+y-1/2) \cdot \g'(y) \leq \g(y) + y \cdot \g'(y),
  \]
  where for the inequality we used $\g(x) \geq 0$ and $(x-1/2) \cdot \g'(y) \geq 0$.
  
  If $y \leq 1/3$, then $\partial_y \T(x,y)
  \leq \gconst/y + y \cdot (- \gconst/y^2) = 0$. If $y > 1/3$, then $\partial_y
  \T(x,y) \leq 9 \gconst \cdot (1 - 2y) - 18 \gconst \cdot y = 9 \gconst \cdot (1-4y) <
  0$, which concludes the proof.
  \end{proof}
  

\begin{lemma}
  \label{lem:P2x_approx}
  $P(2/3) = \R - \gconst$ and 
  for any $y \in [2/3, 1]$ it holds that $\cutintegral(y) \geq \R + 3 \gconst \cdot y - 3 \gconst$.
  \end{lemma}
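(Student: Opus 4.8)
The plan is to treat the two claims separately: the equality $\cutintegral(2/3) = \R - \gconst$ is pure algebra, and the inequality for $y \in [2/3,1]$ follows from concavity of $\cutintegral$.

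For the equality, I would start from the closed form $\cutintegral(c) = c - \R c \ln(2c)$ (first property of \cref{lem:PQ-properties}) evaluated at $c = 2/3$, obtaining $\cutintegral(2/3) = (2/3)\cdot(1 - \R \ln(4/3))$. On the other hand, substituting the definition $\gconst = (1 + (2/3)\ln(4/3))\cdot \R - 2/3$ from \eqref{eq:gconst_def} gives $\R - \gconst = \R - \R - (2/3)\R\ln(4/3) + 2/3 = (2/3)\cdot(1 - \R\ln(4/3))$, and the two expressions coincide. This is the only place where the precise definition of $\gconst$ is used.

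For the inequality, the key observation is that the right-hand side $\R + 3\gconst y - 3\gconst$ is exactly the secant line of $\cutintegral$ through the points $(2/3, \cutintegral(2/3))$ and $(1, \cutintegral(1))$. Indeed, $\cutintegral(2/3) = \R - \gconst$ by the first part, and $\cutintegral(1) = 1 - \R\ln 2 = \R$ because $\R\ln 2 = 1 - \R$ (equivalently $\R = 1/(1+\ln 2)$); hence the chord has slope $\bigl(\R - (\R - \gconst)\bigr)/(1 - 2/3) = 3\gconst$ and passes through $(1,\R)$, so it equals $y \mapsto \R + 3\gconst(y-1)$. Now $\cutintegral$ is concave on $[2/3,1]$: its derivative $\cutintegral'(x) = -\R\ln x$ (from the Fact preceding this lemma) is strictly decreasing on $(0,1]$. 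Since a concave function never lies below the chord joining the endpoints of a subinterval of its domain, we obtain $\cutintegral(y) \geq \R + 3\gconst(y-1)$ for every $y \in [2/3,1]$, which is the claim.

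I do not expect a genuine obstacle here: once the right-hand side is recognized as a chord of the concave function $\cutintegral$, the inequality is immediate, and the remaining work is routine bookkeeping with the identity $\R\ln 2 = 1-\R$ and the definition of $\gconst$, which is precisely what makes both endpoint evaluations match exactly.
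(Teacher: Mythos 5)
Your proof is correct and follows essentially the same route as the paper: verify $\cutintegral(2/3)=\R-\gconst$ by direct algebra from the closed form of $\cutintegral$ and the definition of $\gconst$, identify $\R+3\gconst y-3\gconst$ as the chord of $\cutintegral$ through $(2/3,\R-\gconst)$ and $(1,\R)$, and conclude by concavity of $\cutintegral$. The only cosmetic difference is that you argue concavity via the decreasing derivative $\cutintegral'(x)=-\R\ln x$ while the paper cites the negative second derivative, which is immaterial.
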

  
\begin{proof}
  For the first part of the lemma, observe that 
  by the definition of $\gconst$ (see \eqref{eq:gconst_def}), 
  \begin{align*}
    \cutintegral(2/3) + \gconst 
      & = 2/3 - \R \cdot (2/3) \cdot \ln (4/3) + 
        (1 + (2/3) \cdot \ln(4/3)) \cdot \R - 2/3 = \R.
  \end{align*}
  To show the second relation, note that 
  $\cutintegral(2/3) = \R - \gconst$ and $\cutintegral(1) = \R$. 
  Let $h(y) = R + 3 \gconst \cdot y - 3 \gconst$ 
  be the linear function that coincides with $\cutintegral(y)$ for $y = 2/3$ and $y = 1$.
  As the function $\cutintegral$ is concave on its whole domain (its second derivative 
  $\cutintegral''(y) = -R/y$ is negative), we have $\cutintegral(y) \geq h(y)$ for any $y \in [2/3,1]$.
\end{proof}

\begin{proof}[Proof of \cref{lem:g-properties}]

Note that $\g(1/3) = 3 \cdot \gconst$ and $\g(1/2) = 0$, 
For each property, we define an~appropriate function $G_i$ that we analyze;
for all properties except the first one, we show that the function 
value is at least $\R$ for an appropriate range of arguments. 

\begin{description}

\item[\cref{item:g-area}.]
Let $G_1(x) = x \cdot \g(x)$. 
Then $G_1(x) = \gconst$ for $x \in [\smallBoundary, 1/3]$, and 
for $x \in [1/3,1/2]$ it holds that $G_1(x) = 9 \gconst \cdot x \cdot (1-2x)$, i.e.,
the function $G_1(x)$ is decreasing. Hence, $G_1(x)$ is non-increasing in the whole domain 
$[\smallBoundary, 1/2]$.

\item[\cref{item:g-prop-0}.]
Let $G_2(x) = \cutintegral(\stack(x)) + x \cdot \g(x)$; we want to show that
$G_2(x) \geq \R$ for any $x \in [\smallBoundary,1/2]$. For $x \in
[\smallBoundary, 1/3]$, it holds that $G_2(x) = \cutintegral(2/3) + \gconst = R$
(by \cref{lem:P2x_approx}). For $x \in [1/3,1/2]$, $G_2(x) = \cutintegral(2x) +
x \cdot \g(x)$. Using \cref{lem:P2x_approx}, we obtain $G_2(x) \geq  \R + 6
\gconst \cdot x - 3 \gconst + 9 \gconst \cdot (x - 2x^2) = \R - 18 \gconst \cdot
(x - 1/2) \cdot (x - 1/3) \geq \R$. The last inequality follows as for any $x
\in [1/3,1/2]$ the term $(x - 1/2) \cdot (x - 1/3)$ is non-positive.

\item[\cref{item:g-prop-phi}.]
Let $G_3 = \cutintegral(1-\smallBoundary) + 2 \smallBoundary \cdot \g(2
\smallBoundary)$. It can be verified numerically that \mbox{$G_3 > 0.593 > \R$}.

\item[\cref{item:g-upper-bound}.] 
Let $G_4(x) = 2/3 - (2/3 - \smallBoundary) \cdot \g(x)$. 
As the function $\g(x)$ is decreasing, for any $x \in [\smallBoundary, 1/3]$ it holds that 
$G_4(x) \geq G_4(\smallBoundary) 
= 2/3 - (2/3) \cdot \gconst / \smallBoundary + \gconst$. Substituting the definition of $\smallBoundary$ 
(see \eqref{eq:small_boundary}), we obtain 
$G_4(x) \geq G_4(\smallBoundary) \geq 2/3 - 2/3 + \R - \gconst + \gconst = \R$. 

\item[\cref{item:g-2dim-1}.]
Let $G_5(x,y) = \tilde{G}_5(x) + \max \{\T(x,y), 0 \}$,
where $\tilde{G}_5(x) = \cutintegral(1-\smallBoundary) + \water(1-x) + (x + \smallBoundary - 1) \cdot \g(x)$.
We want to show that $G_5(x,y) \geq \R$ for any $x \in [1/3, 1/2]$
and $y \in [\smallBoundary, 2\smallBoundary]$. 
\begin{itemize}
\item If $x \in [1/3,2 \smallBoundary]$, then already $\tilde{G}_5(x) \geq \R$. To show this relation,
we estimate its derivative in the interval $[1/3, 2\smallBoundary]$:
\begin{align*}
  \tilde{G}_5'(x)
  & = -1 - \R\cdot \ln (1-x) + (x + \smallBoundary - 1) \cdot \g'(x) + \g(x) \\
  & = -1 - \R\cdot \ln (1-x) - 36 \gconst \cdot x - 18 \gconst \cdot \smallBoundary + 27 \gconst \\
  & \leq -1 - \R\cdot \ln (1-2 \smallBoundary) - 36 \gconst \cdot (1/3) - 18 \gconst \cdot \smallBoundary +
  27 \gconst < -0.2479 < 0.
\end{align*}
Hence, $\tilde{G}_5$ is decreasing in the interval $[1/3, 2 \smallBoundary]$, and thus for any 
$x \in [1/3, 2\smallBoundary]$ it holds that 
$G_5(x,y) \geq \tilde{G}_5(x) \geq \tilde{G}_5(2 \smallBoundary) > 0.5997 > \R$.

\item If $x \in (2 \smallBoundary, 1/2]$, then $\T(x,y) \geq 0$. 
As $\T(x,y)$ is a~non-increasing function of $y$ in the interval $[\smallBoundary, 2\smallBoundary]$ 
(by \cref{lem:T-non-increasing}), it holds that $\T(x,y) \geq \T(x,2\smallBoundary)$. Therefore,
\[
  G_5(x,y) \geq G_5(x, 2\smallBoundary)
    = \cutintegral(1-\smallBoundary) + \water(1-x) - (\smallBoundary + 1/2) \cdot \g(x) 
    + (x+2\smallBoundary-1/2) \cdot \g(2\smallBoundary).
\]
We now estimate its partial derivative for $x \in [2 \smallBoundary, 1/2]$:
\begin{align*}
  \partial_x G_5(x, 2\smallBoundary)
  & = -1 - \R \cdot \ln (1-x) - (\smallBoundary + 1/2) \cdot \g'(x) + \g(2\smallBoundary) \\ 
  & \leq -1 - \R \cdot \ln(1/2) + 18 \gconst \cdot (\smallBoundary + 1/2) + 9 \gconst \cdot (1 - 4 \smallBoundary) 
   < -0.0673 < 0.
\end{align*}
Therefore $G_5(x,2\smallBoundary)$ is decreasing as a function of $x$ in the
interval $[2\smallBoundary, 1/2]$. Thus, for the considered range of arguments,
$G_5(x,y) \geq G_5(x,2\smallBoundary) \geq G_5(1/2, 2\smallBoundary) > 0.5934 >
\R$. 
\end{itemize}

\item[\cref{item:g-2dim-2}.]
Let 
\[ 
  G_6(x,y) = \cutintegral(\stack(y)) +  \water(1-x) + (x - \stack(y)) \cdot \g(x) + \T(x,y)
\]
Fix any pair $(x,y) \in [1/3, 1/2] \times [\smallBoundary, 1/2]$, such that 
$y \leq x$. We prove that \mbox{$G_6(x,y) \geq \R$}, by showing each of the following
inequalities
\begin{equation}
\label{eq:g6_seq}
  G_6(x,y) \geq G_6(x,y') \geq \tilde{G}_6(x,y') \geq \tilde{G}_6(1/2,y'') \geq \R,
\end{equation}
where $y' = \max\{y, 1/3\}$ and $y'' = y'+1/2-x$. 
The function $\tilde{G}_6$ is a lower bound on function~$G_6$ created
by using \cref{lem:P2x_approx} (and defined formally later).

The first inequality of \eqref{eq:g6_seq} is trivial for $y \geq 1/3$, hence 
we assume that $y < 1/3$. In such case, using the definition of function $\stack$,
we get $G_6(x,y) = \cutintegral(2/3) + \water(1-x) + (x - 2/3) \cdot \g(x) + \T(x,y)$.
As $\T(x,y)$ is a decreasing function of $y$, we obtain that $\T(x,y) \geq \T(x,y')$ 
and thus also $G_6(x,y) \geq G_6(x,y')$.

To show the second inequality  of \eqref{eq:g6_seq}, we first simplify $G_6(x,y)$ 
using that $y \geq 1/3$ and the definition of $\stack$:
\begin{align*}
  G_6(x,y) 
  & = \cutintegral(2 y) + \water(1-x) + (x - 2y) \cdot \g(x) + (x+y-1/2) \cdot (\g(y) - \g(x)) 
\end{align*}
Now, using \cref{lem:P2x_approx}, we have
\begin{align*}
  G_6(x,y) 
  & \geq \R + \water(1-x) + 6 \gconst \cdot y - 3 \gconst + (1/2 - 3y) \cdot \g(x) 
    + (x+y-1/2) \cdot \g(y).
\end{align*}
We denote the estimate above by $\tilde{G}_6(x,y)$ and we inspect its 
two partial derivatives.
\begin{align*}
  \partial_x \tilde{G}_6(x,y) 
  & =  -1 - \R \cdot \ln(1-x) + (1/2 - 3y) \cdot \g'(x) + \g(y) \\
  \partial_y \tilde{G}_6(x,y) 
  & = 6 \gconst - 3 \cdot \g(x) + \g(y) + (x+y-1/2) \cdot \g'(y) 
\end{align*}
The directional derivative along the vector $\vecv = (1,1)$ is then equal to 
\begin{align*}
  \partial_\vecv \tilde{G}_6(x,y) 
  & = \partial_x \tilde{G}_6(x,y) + \partial_y \tilde{G}_6(x,y) \\
  & =  -1 - \R \cdot \ln(1-x) + 6 \gconst - 3 \cdot \g(x) + 2 \cdot \g(y) 
   -18 \gconst \cdot (x - 2y) \\
  & =  -1 - \R \cdot \ln(1-x) + 36 \gconst \cdot x - 3 \gconst \\
  & \leq  -1 - \R \cdot \ln(1/2) + 36 \gconst \cdot (1/2) - 3 \gconst 
    < -0.0322 < 0.
\end{align*}
This means that if we take any point $(x,y') \in [1/3, 1/2] \times [1/3,1/2]$,
where $y' \leq x$, and move along vector $\vecv$, to point $(1/2,y'') = (1/2, y + 1/2 - x)$,
the value of the function $\tilde{G}_6$ can only decrease. This concludes the 
proof of the third inequality of \eqref{eq:g6_seq}. 

To show the final inequality of $\eqref{eq:g6_seq}$, 
we fix any $y'' \in [1/3,1/2]$. Then 
$\tilde{G}_6(1/2,y'') 
  = \R + 6 \gconst \cdot y - 3 \gconst + y'' \cdot \g(y'') 
  = \R - 18 \gconst \cdot (y'' - 1/2) \cdot (y''-1/3) 
  \geq \R$,
where the last inequality holds as $(y'' - 1/2) \cdot (y''-1/3)$ is non-positive.

\item[\cref{item:g-prop-1}.]
Let $G_7(x) = \cutintegral(2 x) + \water(1-x) - x \cdot \g(x)$. 
For $x \in [1/3,1/2]$, it holds that $G_7(x) = G_6(x,x)$.
Hence $G_7(x) \geq \R$ for $x \in [1/3,1/2]$ follows by \cref{item:g-2dim-2}.
\qedhere
\end{description}
\end{proof}


\bibliographystyle{plainurl}
\bibliography{ref}

\begin{appendix}


\section{Pseudocode of RTA}
\label{sec:pseudocode}

The pseudocode of our algorithm \ALG for handling a single incoming item is presented in \cref{alg:rta}.
The right margin shows label transitions of affected bins. 

\begin{algorithm}[!t]
  \caption{\ALG pseudocode for handling an item $x$}\label{alg:rta}
  \begin{algorithmic}[0]
  \If {$x$ is large}
     \If {$x \geq \f(\B(\LG) + 1/n)$}
        \If {\M-bin $b$ with $x$ space left exists}
           \State put $x$ in $b$ \Comment $\M \to \LT$
        \Else
           \State put $x$ in an empty bin \Comment $\emptyBucket \to \LT$
        \EndIf
     \Else
        \State reject $x$
     \EndIf
  \EndIf

  \Statex
  \If {$x$ is medium}
     \State Let $i \in \{2,3,4\}$ be such that $x \in [1/(i+1),1/i)$
     \If {\LT-bin $b$ with $x$ space left exists}
        \State put $x$ in $b$ \Comment $\LT \to \LT$
     \ElsIf {$D \cup \{x\}$ is \g-dominated}
        \State $D \gets D \cup \{x\}$ \Comment mark $x$
        \State put $x$ in an empty bin \Comment $\emptyBucket \to \M$
     \ElsIf {$\MTI{i}$-bin $b$ with $x$ space left exists}
        \State put $x$ in $b$ \Comment $\MTI{i} \to \MTI{i}$
     \Else 
        \State put $x$ in an empty bin \Comment $\emptyBucket \to \MTI{i}$
     \EndIf
  \EndIf
 
  \Statex
  \If {$x$ is small}
     \If {\LT-bin $b$ with $x$ space left exists}
        \State put $x$ in $b$ \Comment $\LT \to \LT$
     \ElsIf {an \ST-bin $b$ with $x$ space left exists}
        \State put $x$ in $b$ \Comment $\ST  \to \ST$
     \ElsIf {an \A-bin $b$ exists}
        \State put $x$ in $b$ \Comment $\A \to \A$
        \If {$\load(b) \geq \smallBoundary$}
           \If {$D \cup \{\load(b)\}$ is $\g$-dominated}
              \State merge all items from $b$ into medium item $y$ of size $\load(b)$ 
              \State $D \gets D \cup \{ y \}$ 
                 \Comment mark $y$, $\A \to \M$  
           \Else 
              \State relabel $b$ to \ST \Comment $\A \to \ST$
           \EndIf
        \EndIf
     \Else
        \State put $x$ in an empty bin \Comment $\emptyBucket \to \A$
     \EndIf
  \EndIf
  
  \Statex
  \If {there are no empty bins left}
  \State terminate
  \EndIf
\end{algorithmic}
\end{algorithm}


\section{Upper bound}
\label{sec:upper}

To show the optimality of \ALG, we prove that the competitive ratio of any
deterministic algorithm for the multiple knapsack problem is at most $\R -
O(1/n)$. This shows that the term $O(1/n)$ occurring in the competitive
ratio of our deterministic algorithm $\ALG$ is inevitable.

Our adversarial strategy is parameterized with a non-decreasing sequence of $n +
1$ numbers~$s(i)$, such that $1/2 < s(1) \leq s(2) \leq s(3) \leq \ldots \leq
s(n) \leq s(n+1) = 1$ and proceeds in at most $n+1$ phases numbered from $1$. In
phase $i$, the adversary issues $n$ items, all of size $s(i)$, and stops
immediately once \DET accepts one of these items (places it into a~bin). If all
$n$ items of a phase are rejected by \DET, the adversary finishes the input
sequence. 

Let $j$ be the number of phases where $\DET$ accepts an item. As the size of
each item from the input is strictly greater than $1/2$, at most one item fits
in a single bin, and thus $j \leq n$. Then, total load of \DET is equal to
$\DET(j) = \sum_{i = 1}^{j} s(i)$. Note that in the final phase $j+1$, $n$ items
of size $s(j+1)$ are presented (but rejected by \DET). \OPT may accept all of
them obtaining the total load $n \cdot s(j+1)$. By comparing these two gains and
as $\DET$ may choose the value of $j$, we get that the competitive ratio of
$\DET$ is at most
\begin{equation}
\label{eq:upper-bound-ratio}
U := \max_{0 \leq j \leq n} \frac{\DET(j) / n}{\OPT(j) / n} = 
  \max_{0 \leq j \leq n} \frac{ (1/n) \cdot \sum_{i = 1}^{j} s(i)}{s(j+1)}.
\end{equation}

It remains to choose an appropriate sequence $s$ and compute $U$ using \eqref{eq:upper-bound-ratio}. 
As a warm-up, we show a simple
construction leading to a weaker upper bound of $\R$, which we later refine to
obtain the bound of $\R - O(1/n)$. We note that the bound of $\R$ given by Cygan
et al.~\cite{CyJeSg16} uses concepts similar to ours, although in their
construction, the adversary tries to control the number of bins that store
particular amounts of items, while our approach is more fine-grained, and the
adversary controls the contents of particular bins. 

For our both bounds we use the following sequence $s$ with the function $\f$
defined in \eqref{eq:f-definition}, parameter $\alpha$ that we define later and
infinitesimally small but positive $\epsilon$. 
\begin{equation}
  \label{eq:ultimate_s}
    s(i) = \begin{cases}
      \f \left( \frac{i-1}{n} + \alpha \right) + \epsilon 
        & \text{if $i \in \{ 1, \ldots, n \}$}, \\
      1 & \text{if $i = n+1$}.
    \end{cases}
\end{equation}
The role of $\epsilon$ is to ensure that $s(i) > 1/2$ for all $i$. 
We will neglect the load caused by $\epsilon$ in our proofs below;
such approach could be formalized by taking $\epsilon = c/n^2$
with a constant $c$ tending to zero.


\subsection{Upper bound of R}

\begin{lemma}
\label{lem:upper-bound-weak}
The competitive ratio of any deterministic algorithm \DET for the multiple
knapsack problem is at most $\R$.
\end{lemma}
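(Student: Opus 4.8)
The plan is to instantiate the generic adversarial construction of \eqref{eq:ultimate_s} with $\alpha = 0$ and to show that the resulting value of $U$ in \eqref{eq:upper-bound-ratio} is at most $\R$; the more delicate positive choice of $\alpha$ will be reserved for the refined bound. First I would verify that the sequence is admissible. Neglecting $\epsilon$, we have $s(i) = \f((i-1)/n)$ for $i \in \{1,\dots,n\}$ and $s(n+1) = 1$. Since $\f$ is non-decreasing, the sequence is non-decreasing; since $\f$ maps into $[1/2,1]$ and $\f(x) < 1$ for $x < 1$, every $s(i)$ lies in $(1/2,1]$ (the role of the infinitesimal $\epsilon$ is precisely to keep the strict inequality $s(i) > 1/2$). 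Moreover $\f(1) = 1 = s(n+1)$, so in fact $s(j+1) = \f(j/n)$ for \emph{every} $j \in \{0,1,\dots,n\}$, and the $j$-th term of the maximum in \eqref{eq:upper-bound-ratio} becomes $\big((1/n)\sum_{i=1}^{j}\f((i-1)/n)\big)/\f(j/n)$.

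The heart of the argument is a one-line estimate. For each $j$, the numerator $(1/n)\sum_{i=1}^{j}\f((i-1)/n) = (1/n)\sum_{k=0}^{j-1}\f(k/n)$ is a left Riemann sum of the non-decreasing function $\f$ on $[0,j/n]$, hence is at most $\int_0^{j/n}\f(y)\,\dd y = \fintegral(j/n)$. Dividing by $\f(j/n)$ and applying the second property of \cref{lem:F-properties} gives, for every $j \in \{0,\dots,n\}$,
\[
\frac{(1/n)\sum_{i=1}^{j} s(i)}{s(j+1)} \;\le\; \frac{\fintegral(j/n)}{\f(j/n)} \;=\; \min\{j/n,\R\} \;\le\; \R .
\]
Taking the maximum over $j$ yields $U \le \R$, and by \eqref{eq:upper-bound-ratio} this bounds the competitive ratio of \DET by $\R$, as claimed.

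To make the neglect of $\epsilon$ rigorous I would note that reinstating $\epsilon > 0$ perturbs each of the finitely many ratios above by $O(\epsilon)$ (the numerators grow by at most $\epsilon$ and the denominators stay at least $1/2$), so for any target ratio $\alpha > \R$ one can choose $\epsilon$ small enough that \DET's gain on this instance is below $\alpha$ times \OPT's; since $\alpha > \R$ was arbitrary, the competitive ratio of \DET is at most $\R$. I do not anticipate any real obstacle here: this is a warm-up, and the only mild subtleties are the admissibility check and the bookkeeping around $\epsilon$. The genuine work is deferred to the strengthening to $\R - O(1/n)$, where one must carefully exploit the $\Theta(1/n)$ gap between the left Riemann sum and the integral $\fintegral$ (and choose $\alpha$ accordingly, necessarily of order at most $1/n$ since $\f$ is only defined on $[0,1]$) instead of simply discarding it as we do here.
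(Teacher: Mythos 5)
Your proof is correct and follows essentially the same route as the paper's: instantiate the adversarial sequence \eqref{eq:ultimate_s} with $\alpha = 0$, bound the left Riemann sum $(1/n)\sum_{i=1}^{j}\f((i-1)/n)$ by $\fintegral(j/n)$, use $s(j+1) \geq \f(j/n)$, and conclude via $\fintegral(j/n)/\f(j/n) = \min\{j/n,\R\} \leq \R$ from \cref{lem:F-properties}. The only difference is that you spell out the $\epsilon$-bookkeeping, which the paper handles by a blanket remark; this is a harmless elaboration, not a different argument.
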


\begin{proof}
The adversary uses the strategy described above
with sequence $s$ defined by \eqref{eq:ultimate_s} parameterized
with $\alpha = 0$.
We fix any $j \in \{0, \ldots, n\}$ and observe that $s(j+1) \geq \f(j/n)$ 
(for $j = n$ we use $\f(1) = 1$).
By \eqref{eq:upper-bound-ratio}, \eqref{eq:Ff} and \cref{lem:F-properties}, the competitive ratio of \DET is 
upper-bounded by 
\[
  U = \frac{ \sum_{i = 1}^{j} s(i)/n }{s(j+1)}
  = \frac{ \frac{1}{n} \cdot \sum_{i = 1}^{j} \f \left( \frac{i-1}{n} \right)}
    {s(j+1)}
  \leq \frac{ \int_0^{j/n} \f (x) \, \dd x }{s(j+1)}
  \leq \frac{\fintegral(j/n)}{\f(j/n)}
  \leq \R.
\qedhere
\]
\end{proof}


\subsection{Upper bound of R - O(1/n)}

By choosing parameter $\alpha = 1/(8n)$, 
we may improve the upper bound by a term of $\Theta(1/n)$.
As in the proof of \cref{lem:upper-bound-weak}, 
we approximate the sum of $s(i)$'s by an appropriate integral, but 
this time we also bound the error due to such approximation.

\begin{lemma}
\label{lem:upper-bound-technical}
For the sequence $s$ defined in \eqref{eq:ultimate_s} parameterized
with $\alpha = 1/(8n)$, and for any $j \in \{0,\ldots,n\}$,
it holds that 
$\sum_{i=1}^j s(i)/n \leq \fintegral(j/n)$.
Moreover, for $n \geq 3$, it holds that 
$\sum_{i=1}^n s(i)/n \leq \fintegral(1) - 1/(52 n)$.
\end{lemma}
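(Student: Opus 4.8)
The plan is to compare, window by window, the summand $s(i)/n$ with the integral of $\f$ over $[\tfrac{i-1}{n},\tfrac{i}{n}]$, exploiting that $\f$ is \emph{convex} on $[0,1]$: it is the constant $1/2$ on $[0,\R]$, it equals the convex increasing function $(2\e)^{x-1}$ on $[\R,1]$, and at $x=\R$ it is continuous with its slope jumping from $0$ up to $\tfrac12\ln(2\e)>0$, so the two pieces glue into a convex function. (Throughout we neglect the $\epsilon$-terms, as discussed after \eqref{eq:ultimate_s}, i.e.\ we treat $s(i)=\f(\tfrac{i-1}{n}+\alpha)$.) For a convex function the midpoint rule underestimates the integral — integrating a supporting line at the midpoint of $[\tfrac{i-1}{n},\tfrac{i}{n}]$ gives
\[
  \int_{(i-1)/n}^{i/n}\f(y)\,\dd y \;\ge\; \tfrac1n\,\f\!\Bigl(\tfrac{i-1}{n}+\tfrac1{2n}\Bigr)
  \;\ge\; \tfrac1n\,\f\!\Bigl(\tfrac{i-1}{n}+\tfrac1{8n}\Bigr)\;=\;\tfrac{s(i)}{n},
\]
where the second step uses $\alpha=\tfrac1{8n}\le\tfrac1{2n}$ and that $\f$ is non-decreasing. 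Summing over $i=1,\dots,j$ and recalling $\fintegral(j/n)=\int_0^{j/n}\f(y)\,\dd y$ (see \eqref{eq:Ff}) proves the first inequality.

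For the refined bound I would track the slack accumulated in the chain above. Subtracting term by term,
\[
  \fintegral(1)-\sum_{i=1}^{n}\frac{s(i)}{n}\;\ge\;\frac1n\sum_{i=1}^{n}\Bigl(\f\bigl(\tfrac{i-1}{n}+\tfrac1{2n}\bigr)-\f\bigl(\tfrac{i-1}{n}+\tfrac1{8n}\bigr)\Bigr),
\]
and every summand on the right is non-negative by monotonicity of $\f$, so we may discard all of them except those with $\tfrac{i-1}{n}\ge\R$. There are $n-\lceil n\R\rceil\ge n(1-\R)-1$ such indices, and this count is positive precisely for $n\ge3$ — which is exactly where the hypothesis of the lemma enters. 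For each such $i$ both arguments $\tfrac{i-1}{n}+\tfrac1{2n}$ and $\tfrac{i-1}{n}+\tfrac1{8n}$ lie in $[\R,1]$, hence, writing $c=\ln(2\e)=1+\ln2$ and using $(2\e)^{\R-1}=\tfrac12$ together with $\e^{t}-1\ge t$,
\[
  \f\bigl(\tfrac{i-1}{n}+\tfrac1{2n}\bigr)-\f\bigl(\tfrac{i-1}{n}+\tfrac1{8n}\bigr)
  =(2\e)^{\frac{i-1}{n}-1}\,(2\e)^{1/(8n)}\bigl((2\e)^{3/(8n)}-1\bigr)
  \;\ge\;\tfrac12\cdot\tfrac{3c}{8n}\;=\;\tfrac{3c}{16n}.
\]
Therefore the total slack is at least $\tfrac1n\bigl(n(1-\R)-1\bigr)\cdot\tfrac{3c}{16n}=\tfrac{3\ln2}{16n}-\tfrac{3c}{16n^2}$ (using $c\,(1-\R)=\ln2$), and a short computation shows this exceeds $\tfrac1{52n}$ for every $n\ge3$ (at the worst case $n=3$ it reduces to $\ln2\ge17/26$, which holds).

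The genuine obstacle is the second part. Reading the slack off a single window does not work: since $\R\,(1+\ln2)=1$, the first-order term of that per-window slack cancels and only an $O(1/n^2)$ contribution survives, so one is forced to sum a $\Theta(n)$-sized family of windows. The only delicate point in that sum is the window straddling $x=\R$, where $\f$ switches branch; discarding the windows below $\R$ costs only a lower-order amount and keeps the argument uniform, but it is precisely this boundary effect that fixes the constant $\tfrac1{52}$ in the $O(1/n)$ term and forces the restriction $n\ge3$.
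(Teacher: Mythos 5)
Your proof is correct, and its skeleton matches the paper's: compare the sum with $\fintegral$ window by window, note that windows below the threshold $\R$ contribute nonnegative slack, extract a $\Theta(1/n^2)$ slack from each of the roughly $(1-\R)n$ windows whose left endpoint is at least $\R$, and sum these for $n\ge 3$ to beat $1/(52n)$; both arguments also neglect the $\epsilon$-terms exactly as the paper prescribes. The difference is purely in how the per-window estimate is obtained. The paper bounds $D_i=\int_0^{1/n}\bigl(\f(i/n+x)-\f(i/n+\alpha)\bigr)\,\dd x$ directly, using $\f(y)\ge(2\e)^{y-1}$ together with the two-sided bounds $1+\beta/\R\le(2\e)^{\beta}\le 1+2\beta/\R$, obtaining $D_i\ge 1/(13n^2)$ and then counting $n-\lceil n\R\rceil\ge n/4$ such windows. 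You instead observe that $\f=\max\{1/2,(2\e)^{x-1}\}$ is convex and invoke the midpoint rule, which reduces each window to comparing two point values of $\f$ and needs only $\e^{t}-1\ge t$; this is arguably cleaner and gives a larger per-window slack (about $3(1+\ln 2)/(16n^2)$ versus $1/(13n^2)$), at the cost of a tighter final numeric check — your reduction to $\ln 2\ge 17/26$ at $n=3$, and the identity $(1+\ln 2)(1-\R)=\ln 2$ behind it, are both correct.
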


\begin{proof}
Note that $\sum_{i = 1}^j s(i)/n
= \sum_{i = 0}^{j-1} s(i+1) / n
 = \sum_{i = 0}^{j-1} \frac{1}{n} \cdot f(i/n+\alpha )$.
On the other hand, 
$F(j/n) = \int_0^{j/n} f(x) \,\dd x = \sum_{i = 0}^{j-1} \int_{0}^{1/n} \f(i/n + x) \,\dd x$.
Therefore, $F(j/n) - \sum_{i = 1}^j s(i)/n = \sum_{i=0}^{j-1} D_i$, where 
\begin{equation}
\label{eq:def_D}
  D_i = \int_{0}^{1/n} \f\left( i/n + x \right) 
      - \f\left( i/n + \alpha \right) \,\dd x .
\end{equation}
Now we estimate $D_i$. We consider two cases.
\begin{enumerate}
\item  $i/n + \alpha < \R$. Then, $f(i/n+\alpha) = 1/2$ and 
$\f(i/n+x) \geq 1/2$ for any $x \in [0,1/n]$. Substituting these bounds 
to \eqref{eq:def_D} yields $D_i \geq 0$. 

\item $i/n + \alpha \geq R$.
In this case, $\f(i/n + \alpha) = (2 \e)^{i/n + \alpha - 1}$ and 
$\f(i/n + x) \geq (2 \e)^{i/n + x - 1}$ for any $x \in [0,1/n]$.
Thus, 
\begin{align*}
    D_i 
      & \geq \int_{0}^{1/n} (2 \e)^{i/n + x - 1} - (2 \e)^{i/n + \alpha - 1} \,\dd x 
       = (2 \e)^{i/n - 1} \cdot 
        \int_{0}^{1/n} (2 \e)^x - (2 \e)^\alpha \,\dd x .
\end{align*}
By the Taylor expansion of function $(2\e)^x$,
for any $\beta \in [0,\R]$ it holds that
$1 + \beta/R \leq (2 \e)^\beta \leq 1 + 2 \cdot \beta/\R$. Thus, 
\begin{align*}
    D_i & \geq (2 \e)^{i/n - 1} \cdot 
        \int_{0}^{1/n} \frac{x}{\R} - \frac{2 \alpha}{\R} \,\dd x 
      \geq (2 \e)^{i/n - 1} \cdot \left( \frac{1/n^2}{2 \R} - \frac{2 \alpha / n}{\R} \right) \\
      & \geq \frac{1}{2 \e} \cdot \frac{1}{4 \R \cdot n^2} 
      > \frac{1}{13 \cdot n^2}.
\end{align*}
As $D_i \geq 0$ in either case, we immediately get the first part
of the lemma. 
To show the second part, we set $j = n$ and 
observe that all integers $i \in \{\lceil n \cdot \R \rceil, \ldots, j-1\}$ 
are considered in the second case above.
Therefore $F(1) - \sum_{i=1}^n s(i)/n = \sum_{i=0}^{n-1} D_i \geq 
(n-\lceil n \cdot R \rceil)/ (13 \cdot n^2)$. 
For $n \geq 3$, it holds that $n-\lceil n \cdot R \rceil \geq n/4$, which
concludes the proof.
\qedhere
\end{enumerate}
\end{proof}

\begin{theorem}
\label{thm:upper-bound}
The competitive ratio of any deterministic algorithm \DET for the multiple knapsack problem 
is at most $\R - 1/(52 n)$.
\end{theorem}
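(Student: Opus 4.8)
The plan is to derive the bound directly from the machinery already assembled in \cref{lem:upper-bound-technical}. Recall that the adversarial strategy with the sequence $s$ from \eqref{eq:ultimate_s} (parameterized by $\alpha = 1/(8n)$) yields, via \eqref{eq:upper-bound-ratio}, that the competitive ratio of \DET is at most $U = \max_{0 \leq j \leq n} \left( (1/n) \cdot \sum_{i=1}^j s(i) \right) / s(j+1)$. So it suffices to show that for \emph{every} $j \in \{0, \ldots, n\}$, the ratio $\left( (1/n) \sum_{i=1}^j s(i) \right) / s(j+1)$ is at most $\R - 1/(52n)$.

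First I would split into the two regimes that the two parts of \cref{lem:upper-bound-technical} are tailored for. For $j \leq n-1$: here the denominator is $s(j+1) = \f(j/n + \alpha) + \epsilon \geq \f(j/n)$ (using that $\f$ is non-decreasing and neglecting $\epsilon$ as the paper does). Combining the first part of \cref{lem:upper-bound-technical}, which gives $(1/n)\sum_{i=1}^j s(i) \leq \fintegral(j/n)$, with \cref{lem:F-properties} (property 2, $\fintegral(x)/\f(x) = \min\{x, \R\} \leq \R$) yields a ratio of at most $\fintegral(j/n)/\f(j/n) \leq \R$. To sharpen this to $\R - \Theta(1/n)$, I would want a little slack — but actually the cleaner route is to observe that the maximum in \eqref{eq:upper-bound-ratio} is attained (or near-attained) at $j = n$, and handle small $j$ by the crude $\R$ bound only after checking it already beats $\R - 1/(52n)$; this needs care. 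The honest approach: for $j \leq n-1$ we must still strengthen $\fintegral(j/n)/\f(j/n) \leq \R$ by a $1/n$ term, which is not automatic when $j/n$ is close to $1$ — so I would instead bound $s(j+1) = \f(j/n+\alpha) + \epsilon$ from below more carefully, noting $\f$ is strictly increasing past $\R$ so $\f(j/n + \alpha) \geq \f(j/n) + \Omega(\alpha) = \f(j/n) + \Omega(1/n)$ there, which divides into $\fintegral(j/n)$ to give the saving; for $j/n \leq \R$ the numerator is only $j/(2n) \leq \R/2$ which is already far below $\R - 1/(52n)$.

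For $j = n$: the denominator is $s(n+1) = 1 = \f(1)$, and the second part of \cref{lem:upper-bound-technical} gives $(1/n)\sum_{i=1}^n s(i) \leq \fintegral(1) - 1/(52n) = \R - 1/(52n)$ for $n \geq 3$. Hence the ratio for $j = n$ is exactly at most $\R - 1/(52n)$. Assembling both regimes, $U \leq \R - 1/(52n)$, which is the claim. For small $n$ (i.e. $n = 2$, since $n \geq 2$ throughout) the statement holds vacuously or by a separate trivial check, since $\R - 1/(52n)$ is then well below the known $0.5$-type regime and one can invoke the weaker \cref{lem:upper-bound-weak} bound of $\R$ — but I should state explicitly that the theorem is intended for $n \geq 3$, matching the hypothesis of \cref{lem:upper-bound-technical}.

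The main obstacle is the $j \leq n-1$ case: the second part of \cref{lem:upper-bound-technical} only saves the $1/(52n)$ term at $j = n$, so for intermediate $j$ I cannot simply quote it. The fix is to extract the $\Theta(1/n)$ gain from the \emph{denominator} via the shift $\alpha = 1/(8n)$ (using strict monotonicity of $\f$ on $(\R, 1]$), rather than from the numerator; the regime $j/n \leq \R$ is then disposed of by the trivial observation that the ratio there is at most $(\R/2)/(1/2) = \R \cdot (\text{something} < 1)$, comfortably below $\R - 1/(52n)$. Verifying that the constant $52$ survives this denominator argument — i.e. that $\fintegral(j/n)/(\f(j/n) + \Omega(1/n)) \leq \R - 1/(52n)$ — is the one genuinely calculational point, but it parallels the estimate already carried out in \cref{lem:upper-bound-technical} and should go through with room to spare.
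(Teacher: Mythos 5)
Your overall strategy is the paper's: same adversarial sequence with $\alpha=1/(8n)$, the $j=n$ case settled by the second part of \cref{lem:upper-bound-technical}, and the $j<n$ case settled by extracting a $\Theta(1/n)$ saving from the denominator $s(j+1)=\f(j/n+\alpha)+\epsilon$. However, two of your sub-case arguments do not go through as written. First, the regime $j/n\leq\R$ is \emph{not} "disposed of by the trivial observation": there the bound you get is $\fintegral(j/n)/s(j+1)\leq (j/(2n))/(1/2)=j/n$ (and $(\R/2)/(1/2)=\R$ exactly, not $\R$ times something strictly smaller), and $j=\lfloor \R n\rfloor$ can make $j/n$ lie within $1/(52n)$ of $\R$, so this does not beat $\R-1/(52n)$. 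In that boundary strip you still need the denominator gain, using that $j/n+\alpha>\R$ forces $\f(j/n+\alpha)=(2\e)^{j/n+\alpha-1}>1/2$ by a quantifiable margin. The paper avoids the issue by splitting at $y<\R-\alpha/2$ versus $y\geq\R-\alpha/2$: in the first case $U<y<\R-\alpha/2=\R-1/(16n)$, and in the second it bounds $\f(y)/\f(y+\alpha)\leq(2\e)^{-\alpha/2}\leq 1-\alpha/(2\R)$ via the Taylor estimate, so that $U\leq\min\{y,\R\}\cdot(1-\alpha/(2\R))\leq\R-\alpha/2$. Your "denominator" idea is exactly this computation, but your split point ($j/n\leq\R$) is in the wrong place, and the constant check you defer is precisely where the argument must be carried out.

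Second, the small-$n$ case: the theorem is stated for all $n$, and your suggestion to invoke \cref{lem:upper-bound-weak} cannot work, since that lemma only gives a bound of $\R\approx 0.5906$, which is \emph{larger} than $\R-1/(52n)\approx 0.581$ for $n=2$; the statement is not vacuous there either. The paper instead computes the construction directly for $n\in\{1,2\}$: with $\alpha=1/(8n)$ all thresholds $s(i)$ for $i\leq n$ equal $1/2$ (up to $\epsilon$), so $U=1/2<\R-1/(52n)$. So the right fix is the explicit check, not a restriction to $n\geq 3$ or an appeal to the weaker lemma.
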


\begin{proof}
The adversary uses the strategy described above
with sequence $s$ defined by \eqref{eq:ultimate_s} parameterized
with $\alpha = 1/(8n)$. We compute the upper bound $U$ on the competitive ratio
of \DET using \eqref{eq:upper-bound-ratio}.

We separately consider the case of small value of $n$.
If $n = 1$, then $s(1) = 1/2$ and $s(2) = 1$, and therefore 
$U = 1/2$. If $n = 2$, then $s(1) = s(2) = 1/2$ and $s(3) = 1$,
and thus again $U = 1/2$. In these cases, $U = 1/2 < R - 1/(52n)$, and 
the theorem follows. 

Below, we assume that $n \geq 3$.
We fix any $j \in \{0, \ldots, n\}$ and let $y = j/n$.
We consider two cases. 

\begin{enumerate}
\item $j < n$. 
Then, by \cref{lem:upper-bound-technical}, $\sum_{i=1}^j s(i)/n \leq \fintegral(y)$. 
Additionally, $s(j+1) = \f(y+\alpha) + \epsilon > \f(y + \alpha)$. 
Using \cref{lem:F-properties}, we obtain
\begin{equation}
\label{eq:ub-1}
U = \frac{ (1/n) \cdot \sum_{i = 1}^{j} s(i)}{s(j+1)}
  < \frac{\fintegral(y)}{\f(y + \alpha)} 
  = \frac{\fintegral(y)}{\f(y)} \cdot \frac{\f(y)}{\f(y + \alpha)} 
  = \min\{y, \R\} \cdot \frac{\f(y)}{\f(y + \alpha)}.
\end{equation}
It suffices to show that $U \leq R - \alpha/2 = R - 1/(16n)$. 
\begin{enumerate}
\item If $y < \R - \alpha/2$, then $U < y \cdot 1 < \R - \alpha/2$.
\item If $y \geq \R - \alpha/2$, then we upper-bound the fraction
$\f(y) \,/\, \f(y + \alpha)$. As $y + \alpha/2 \geq \R$,
\begin{equation}
\label{eq:ub-2}
  \frac{\f(y)}{\f(y+\alpha)} 
  \leq \frac{\f(y+\alpha/2)}{\f(y+\alpha)} 
  = \frac{(2\e)^{y+\alpha/2-1}}{(2\e)^{y+\alpha-1}} 
  = (2\e)^{-\alpha/2}
  \leq 1-\frac{\alpha}{2\R}.
\end{equation}
In the last inequality, we used the Taylor expansion of the function $(2\e)^x$.
Combining \eqref{eq:ub-1} with \eqref{eq:ub-2} yields 
$U \leq R - \alpha/2$ as desired.
\end{enumerate}

\item $j = n$.
By \cref{lem:upper-bound-technical}, 
$\sum_{i=1}^n s(i)/n \leq R - 1/(52n)$ and $s(n+1) = 1$.
Therefore, in this case, $U \leq R - 1/(52n)$.
\qedhere
\end{enumerate}
\end{proof}

\end{appendix}

\end{document}